\documentclass[runningheads]{llncs}
\usepackage[T1]{fontenc}
\usepackage{graphicx}
\usepackage{amsmath,amssymb,amsfonts}
\usepackage{proof}
\usepackage{tikz-cd}
\usepackage{multicol}
\usepackage{multirow}
\usepackage{enumitem}
\usepackage{subcaption}
\usepackage[normalem]{ulem}
\usepackage{mathrsfs}
\usepackage{url}
\usepackage{color}
\usepackage[linktoc=all,hidelinks,colorlinks=true,linkcolor=blue,citecolor=blue]{hyperref}
\usepackage{cleveref}
\usetikzlibrary{positioning}
\usepackage{todonotes}
\usepackage{orcidlink}
\usepackage{bussproofs}
\usepackage{mathtools}
\usepackage{cmll}
\usepackage{comment}
\usepackage{array}
\usepackage{booktabs}
\usepackage{appendix}

\newcommand{\JK}{\sf{JK}}
\newcommand{\WJK}{\sf{WJK}}

\newcommand{\SJK}{\sf{SJK}}
\newcommand{\E}{\mathcal{E}}

\usepackage{macros}

\makeatletter %
\def\@acknow{}%
\long\def\EarlyAcknow#1 \par{%
\def\@acknow{\abstractfont\abstracthead*{Acknowledgments}%
#1\par}}%

\def\printabstract{\ifx\@acknow\empty\else\@acknow\fi\par%
    \ifx\@abstract\empty\else\@abstract\fi\par}
\makeatother

\begin{document}

\title{A generalised and modular proof-theoretic semantics for sequent systems}

\author{Victor Barroso-Nascimento\orcidlink{0000-0002-3990-5996} 
\and Arghya Chakrabarty\orcidlink{0009-0007-4529-4419} 
\and 
Elaine Pimentel\orcidlink{0000-0002-7113-0801}}

\authorrunning{V. Barroso-Nascimento et al.}

\institute{Department of Computer Science, University College London, UK\\ \email{\{v.nascimento,arghya.chakrabarty.20,e.pimentel\}@ucl.ac.uk}
}

\maketitle

\begin{abstract}

Base-extension semantics explains logical consequence via derivability in atomic systems. Extending this approach across logics requires careful control of structural principles and contexts. We present a modular sequent-based framework in which support is indexed by multisets of atomic resources and the semantic clauses for connectives are guided by their left introduction rules. The framework separates additive from multiplicative behaviour and makes explicit the contributions of structural rules and restrictions on sequents. We establish soundness and completeness for classical and intuitionistic propositional logic and for multiplicative-additive fragments of classical and intuitionistic linear logic. 
We then investigate axiomatic extensions of intuitionistic logic through Jankov logic. Two classes of atomic bases yield sound and complete semantics. Completeness is proven using two novel and modular methods using simulation bases, one of which is constructive. These results provide a common semantic treatment of several logical systems while identifying the additional conditions needed to extend the framework beyond structural variations.
\end{abstract}

\section{Introduction}\label{sec:intro}
In model-theoretic semantics, mathematical structures help in supporting the notion of validity, which is based on a notion of {\em truth}. In the case of intuitionistic logic, for example, one could use Kripke structures, where the validation of atomic propositions using the classical notion of truth (\eg, via truth tables) is enough for describing the meaning of a logical connective. %

\noindent
Proof-theoretic semantics~\cite{Prawitz-2006,pts-91,schroeder2006validity,sep-proof-theoretic-semantics} ($\Pts$), on the other hand, takes {\em inference} as its starting point: logical connectives acquire their meaning through rules governing how statements containing them are established and used. 
The significance of $\Pts$ lies in this connection between meaning and justification: it seeks to explain why the inferences we make are warranted by the meaning of the expressions involved. This perspective is particularly well-suited for {\em natural deduction} and {\em intuitionistic logic}, where proofs express the constructive content of propositions.

\noindent
However, $\Pts$ does not readily extend in a modular way to other proof systems or logical traditions.
 In fact, although several extensions and adaptations have been proposed for other logical systems, there is still no modular methodology capable of systematically assigning proof-theoretic semantics to a broad family of logics. Each adaptation typically requires ad hoc constructions reflecting the specific proof theory of the logic under consideration~\cite{Sandqvist2015IL,Sandqvist,DBLP:journals/sLogica/GheorghiuGP26,stafford2025logics,Daragona-Prawitz,DBLP:journals/rsl/Ayhan26}.

\noindent
{\bf Shift to sequent systems.} A first step towards such a unification was taken in~\cite{DBLP:conf/tableaux/BarrosoNascimentoPP25}, where a proof-theoretic semantics for {\em classical logic} was proposed based on sequent systems rather than natural deduction. The resulting framework fits classical reasoning remarkably well. Indeed, one may argue that sequent calculi play, for classical logic, the same foundational role that natural deduction plays for intuitionistic logic \cite{sep-proof-theoretic-semantics}. Rather than viewing sequents merely as technical devices for proving meta-theoretical properties such as cut elimination, this approach treats them as the primary carriers of meaning, allowing logical constants to be interpreted directly through their sequent rules.

\noindent
The present paper develops this idea substantially. We argue that the sequent-based perspective provides a uniform and modular framework for proof-theoretic semantics that extends naturally beyond classical logic. Our framework encompasses not only classical and intuitionistic logic, but also substructural logics and, more interestingly, intermediate logics, showing that the same semantic machinery can be instantiated over different structural bases. In this sense, the framework isolates the contribution of the logical rules from that of the structural rules, making explicit how different notions of logical consequence arise from different choices of base systems.

\noindent
{\bf Base-extension semantics.} To achieve this, we make use of a strand of $\Pts$ called base-extension semantics~\cite{Sandqvist2015IL} ($\Bes$), where proof-theoretic validity is defined relative to a given collection of inference rules regarding basic formulas of the language. More specifically, in $\Bes$ the characterisation of consequence is given by an inductively defined semantic judgment whose base case is given by provability in an {\em atomic system} (or a {\em base}), which is a collection of rules involving only atomic formulas. 

\noindent
As a simple example, fix an integer $n$ and, for $k \in \{2,3,6\}$, let $p_k$ represent the atomic sentence ``$n$ is divisible by $k$''. Consider, informally, the rules 
\[ \frac{p_6}{p_2} \qquad \frac{p_6}{p_3} \qquad \frac{p_2 \qquad p_3}{p_6} \] 
These rules specify both consequences of asserting $p_6$ and sufficient grounds for asserting it: $p_6$ licenses each of $p_2$ and $p_3$, while $p_2$ and $p_3$ jointly license $p_6$. The internal arithmetic structure of the sentences is not represented in the formal language: only the inferential relationships among the atoms are specified. 

\noindent
After defining validity for atoms, one can define validity for logical connectives via semantic clauses that express proof conditions, which results in a framework that evaluates propositions exclusively in terms of proofs of its constituents.

\noindent
{\bf Generalising sequent $\Bes$.} Starting from the classical sequent-based semantics developed in~\cite{DBLP:conf/tableaux/BarrosoNascimentoPP25}, we transition to substructural logics by presenting separate additive and multiplicative semantic clauses and controlling the use and multiplicity of assumptions through the atomic rules. Substructurality is thus reflected directly in the conditions imposed on bases and their admissible extensions.

\noindent
The same framework also accommodates intuitionistic logic through the familiar restriction that right-hand contexts contain at most one formula. At the atomic level, this requires every premise and conclusion of every rule to have at most one atom in its succedent, with extensions preserving this condition. Retaining the appropriate weakening and contraction principles, and choosing the additive interpretation of disjunction and falsity, yields intuitionistic semantics. Combining restrictions on sequents with restrictions on structural rules similarly leads to substructural semantics. In particular, taking structural rules out of classical bases yields a semantics for the multiplicative-additive fragment of linear logic~\cite{girard87tcs} ($\MALL$), and taking them out of intuitionistic bases yields one for intuitionistic $\MALL$  ($\iMALL$). The same ideas also work, to some extent, for intermediate logics. These variations arise within a common semantic framework, making explicit the respective contributions of the logical clauses, the structural rules, and the shape of contexts.

\noindent
{\bf Modularity.} A remarkable property of our proof strategies is that they do not need to be changed in order to work for different logics. Validity of weakening is tied to closure under atomic weakening, the same holding for contraction, and by taking structural rules from bases the classical and intuitionistic proofs immediately become proofs for $\MALL$ and $\iMALL$. Moreover, the two proofs of completeness presented for the intermediate Jankov logic~\cite{Yankov1968,CitkinVandoulakis2022,Indrzejczak2022} can be adapted to {\em any intermediate logic}, obtained by adding axioms to $\IPL$, although soundness is still ad hoc.

\noindent
Another interesting consequence is that intuitionistic logic may be obtained by restricting the semantics of classical logic, but restricting the intuitionistic semantics also yields a different classical semantics. The same holds for $\MALL$ and $\iMALL$. This provides insights on the reason why intuitionistic logic may be visualised either as an extension or as a restriction of classical logic \cite{LegrisMolina2001}.

\noindent
{\bf The relevance.}
While it may be tempting to expect that the familiar syntactic restrictions ensuring nice proof-theoretic properties such as cut elimination immediately yield soundness and completeness with respect to the corresponding semantics, this turns out not to be the case. The interaction between bases and their extensions introduce subtleties that do not arise in the purely classical setting. For instance, despite being sequent-based, the semantics in \cite{DBLP:conf/tableaux/BarrosoNascimentoPP25} {\em does not produce} intuitionistic logic when restricted to single succedents. As showcased by our proofs for Jankov logic, other subtle features, such as whether rules can be used with arbitrary contexts or not, also have important semantic bearing.

\noindent
Beyond providing a correct and uniform semantic account, our framework also sheds new light on several foundational questions in proof theory. In particular, it offers a new perspective on the distinction between intuitionistic and classical reasoning~\cite{Barroso-Nascimento2025}. Since both emerge as different instantiations of the same semantic construction, the framework separates the role of structural principles from the behaviour of the logical connectives themselves. This perspective resonates with recent discussions on the proof-theoretic characterisation of intuitionistic consequence, including the claim that multiple-conclusion formulations of intuitionistic logic may obscure its genuine proof-theoretic nature \cite{Steinberger2011}.

\section{Base-extension semantics}\label{sec:bes}
$\Bes$ is founded on an inductively defined judgment called {\em support}, which mirrors the syntactic structure of formulas. The inductive definition begins with a base case: the support of atomic propositions is determined by derivability in a given {\em base} -- a specified collection of inference rules that govern atomic propositions. Sandqvist~\cite{Sandqvist2015IL} introduced a sound and complete formulation of $\Bes$ for intuitionistic propositional logic ($\IPL$).
In this work, we adopt Sandqvist's~\cite{Sandqvist2015IL} terminology, adapting it to the sequent setting as presented in~\cite{DBLP:conf/tableaux/BarrosoNascimentoPP25}. 

\subsection{Atomic Derivability}
Let $\At$ denote the set of atoms, and let $\Gamma_{\At},\Delta_{\At},\Theta_{\At},\Sigma_{\At},\ldots$ range over finite multisets of atoms. Commas denote multiset union, with multiplicities added. In particular, $\Gamma_{\At},p$ denotes $\Gamma_{\At}\uplus{p}$. An \emph{atomic sequent} is an expression $\Gamma_{\At}\Rightarrow\Delta_{\At}$, where either multiset may be empty.

\begin{definition}[Base]
 An {\em atomic system} (a.k.a. a {\em base}) $\B$ is a set of atomic rules of the form
\begin{prooftree}
\AxiomC{$\Gamma_{\At}^1 \Rightarrow \Delta^{1}_{\At}$}
\AxiomC{$\ldots$}
\AxiomC{$\Gamma_{\At}^n \Rightarrow \Delta^{n}_{\At}$}
\TrinaryInfC{$\Theta_{\At} \Rightarrow \Sigma_{\At}$}
\end{prooftree}
The sequence of sequents $\Gamma^{m}_{\At} \Rightarrow \Delta^{m}_{\At}$  %
might be empty, in which case we call the rule a {\em zero-level rule}. %
\end{definition}
Notice that the multisets of atoms occurring in the sequents of the rule might not be related to each other. We can have rules in which some of them are the same (\eg, a rule with $\Gamma^{1}_{\At} = \Theta_{\At}$), but no identity between the multisets is built into the definition. This definition is quite general and quite useful in handling, for instance, additive contexts, see \eg~\cite{DBLP:conf/tableaux/BarrosoNascimentoPP25}.

When defining bases, we distinguish between \emph{individual rules} and \emph{rule schemata}. In standard presentations of sequent calculi, rules are given by schemata whose metavariables range over formulas and contexts, with instantiation governed by the stated side conditions. In our setting, each individual rule in a base has {\em fixed atomic sequents} as its premises and conclusion. A displayed structural rule schema describes a family of such rules, obtained by varying the atoms and atomic contexts subject to the specified restrictions.

Applying an individual rule requires derivations of its specified premise sequents, with contexts matching exactly, including the multiplicities of atoms. The metavariables in a schema therefore describe different members of a family of rules. They do not permit the premises or conclusion of a rule already fixed in a base to vary during its application.

\begin{definition}[Deducibility] For every base $\B$, the relation $\vdash_{\B}$ is defined as the least relation generated by the following clauses:

\begin{enumerate}
    \item If $\Gamma_{\At} \Rightarrow \Delta_{\At}$ is the conclusion of a zero-level rule in $\B$, then $\vdash_\B \Gamma_{\At} \Rightarrow \Delta_{\At}$ holds;
    \item Assume $\B$ contains a rule with the following shape
\begin{prooftree}
\AxiomC{$\Gamma_{\At}^1 \Rightarrow \Delta^{1}_{\At}$}
\AxiomC{$\ldots$}
\AxiomC{$\Gamma_{\At}^n \Rightarrow \Delta_{\At}^{n}$}
\TrinaryInfC{$\Theta_{\At} \Rightarrow \Sigma_{\At} $}
\end{prooftree}
Then, if $ \vdash_{\B} \Gamma_{\At}^{m} \Rightarrow \Delta^{m}_{\At}$ holds for all $1 \leq m \leq n$, so does $\vdash_{\B} \Theta_{\At} \Rightarrow \Sigma_{\At}$.
\end{enumerate}
\end{definition}
Derivations are hence finite trees built by applying rules of the base, as illustrated next.
\begin{example}
Consider a base $\B$ consisting exactly of the following rules
\begin{prooftree}
\AxiomC{}
\RightLabel{\tiny\tiny{$R1$}}
\UnaryInfC{$p \Rightarrow q$}
\quad
\DisplayProof
\quad
    \AxiomC{$p \Rightarrow q$}
    \RightLabel{\tiny\tiny{$R2$}}
    \UnaryInfC{$r \Rightarrow s$}
    \quad
    \DisplayProof
    \quad
        \AxiomC{$p, j \Rightarrow q$}
    \RightLabel{\tiny\tiny{$R3$}}
    \UnaryInfC{$r \Rightarrow s$}
    \quad
    \DisplayProof
    \quad
       \AxiomC{$r \Rightarrow s$}
    \RightLabel{\tiny\tiny{$R4$}}
    \UnaryInfC{$j \Rightarrow k$}
\end{prooftree}

The following is a deduction in $\B$ that establishes $\vdash_{\B} j \Rightarrow k$
\begin{prooftree}
    \AxiomC{}
\RightLabel{\tiny\tiny{$R1$}}
\UnaryInfC{$p \Rightarrow q$}
\RightLabel{\tiny\tiny{$R2$}}
\UnaryInfC{$r \Rightarrow s$}
\RightLabel{\tiny\tiny{$R4$}}
\UnaryInfC{$j \Rightarrow k$}
\end{prooftree}
Observe that the application of the rule $R2$ {\em cannot be replaced by} $R3$ since $\nvdash_{\B} p, j \Rightarrow q$. 
\end{example}
Unlike in traditional sequent calculi, contexts are not arbitrary (or cumulative). 
The definition of derivability does not implicitly allow contexts to be enlarged or repeated occurrences of atoms to be identified. This makes it possible to control weakening and contraction through the choice of atomic rules.

\begin{definition}[Classical, intuitionistic and linear bases]
We say that a base is \emph{closed under a set of rules} if it contains every instance of the corresponding rule schemata.

Consider the following set of atomic rules
\begin{prooftree}
\AxiomC{}
\noLine
\UnaryInfC{}
\RightLabel{\tiny\init}
\UnaryInfC{$p \Rightarrow p$}
\quad
\DisplayProof
\quad 
   \AxiomC{}
    \noLine
    \UnaryInfC{$\Gamma_{\At} \Rightarrow \Delta_{\At}$}
    \RightLabel{\tiny\LW}
    \UnaryInfC{$\Gamma_{\At}, p \Rightarrow \Delta_{\At}$}
    \quad
    \DisplayProof
    \quad
     \AxiomC{}
     \noLine
    \UnaryInfC{$\Gamma_{\At} \Rightarrow \Delta_{\At}$}
    \RightLabel{\tiny\RW}
    \UnaryInfC{$\Gamma_{\At} \Rightarrow p, \Delta_{\At}$}
\end{prooftree}
 \begin{prooftree}
       \AxiomC{}
    \noLine
    \UnaryInfC{$\Gamma_{\At}, p, p\Rightarrow \Delta_{\At}$}
    \RightLabel{\tiny\LC}
    \UnaryInfC{$\Gamma_{\At}, p \Rightarrow \Delta_{\At}$}
    \quad
    \DisplayProof
    \quad
     \AxiomC{}
     \noLine
    \UnaryInfC{$\Gamma_{\At} \Rightarrow p, p,  \Delta_{\At}$}
    \RightLabel{\tiny\RC}
    \UnaryInfC{$\Gamma_{\At} \Rightarrow p, \Delta_{\At}$}
\end{prooftree}
\begin{prooftree}
\AxiomC{$\Gamma_{\At} \Rightarrow \Sigma_{\At}, p$}
\AxiomC{$\Theta_{\At}, p \Rightarrow \Delta_{\At}$}
\RightLabel{\tiny\Cutc}
\BinaryInfC{$\Gamma_{\At}, \Theta_{\At} \Rightarrow \Sigma_{\At}, \Delta_{\At}$}
\quad
\DisplayProof
\quad
\AxiomC{$\Gamma_{\At} \Rightarrow p$}
\AxiomC{$\Theta_{\At}, p \Rightarrow \Delta_{\At}$}
\RightLabel{\tiny\Cuti}
\BinaryInfC{$\Gamma_{\At}, \Theta_{\At} \Rightarrow \Delta_{\At}$}
\end{prooftree}
A base is said to be
\begin{itemize}
\item {\em Classical} if it is closed under rules \init, \LW, \RW, \LC, \RC\ and \Cutc.

\item {\em Intuitionistic} if it is closed under rules \init, \LW, \RW, \LC, \Cuti. Moreover, every premise and conclusion of every rule in the base has at most one atom in its succedent;
\item {\em Linear} if it is closed under rules \init, \Cutc;
\item {\em Intuitionistic linear} if it is closed under rules \init, \Cuti,  and all sequents appearing in rules contain at most one atom in their succedents.
\end{itemize}
\end{definition}

\begin{definition}[Extensions]
An atomic system $\C$ is an {\em extension} of an atomic system $\B$ (written $\C\supseteq\B$), if $\C$ results from adding a (possibly empty) set of atomic rules to $\B$, satisfying the conditions defining the chosen class of bases.
\end{definition}

\subsection{Support}\label{sec:suppot}
We now extend atomic derivability to an inductively defined notion of semantic support ($\Vdash$). Following the perspective of Gentzen~\cite{gentzen1969} and Prawitz~\cite{Prawitz-2006}, according to which introduction rules define the meaning of logical constants and elimination rules are justified by these definitions, the semantics in~\cite{DBLP:conf/tableaux/BarrosoNascimentoPP25} formulated the clauses for the connectives in terms of their right introduction rules in the sequent calculus. 
For example, the support in a base $\B$ for the conjunction is defined in~\cite{DBLP:conf/tableaux/BarrosoNascimentoPP25} as 
\[
\Vdash_{\B} \Delta, A \wedge B
\quad\text{iff}\quad
\Vdash_{\B} \Delta, A
\;\text{ and }\;
\Vdash_{\B} \Delta, B.
\]
However, directly translating introduction rules into semantic clauses does not provide a uniform approach across logics. Indeed, for intuitionistic logic, the following definition yields an incomplete base-extension semantics, as shown in~\cite{piecha2015failure}.
\[
\Vdash_{\B} A \vee B
\quad\text{iff}\quad
\Vdash_{\B} A
\;\text{ or }\;
\Vdash_{\B} B
\]
A related issue arises on the proof-theoretic definition for substructural logics. For example, one could attempt to define support for the multiplicative conjunction $\otimes$ as
\[
\Vdash_{\B} \Delta_1,\Delta_2, A \otimes B
\quad\text{iff}\quad
\Vdash_{\B} \Delta_1, A
\;\text{ and }\;
\Vdash_{\B} \Delta_2, B.
\]
Obviously, this would not be sound, as derivability of $\Delta \vdash A \otimes B$ does not generally imply that $\Delta$ can be partitioned into multisets $\Delta_1,\Delta_2$ such that $\Delta_1 \vdash A$ and $\Delta_2 \vdash B$. 
Thus, the context splitting displayed by the right introduction rule cannot simply be read as a necessary condition for derivability. 

To obtain a uniform treatment on semantics, we therefore formulate the semantic clauses in terms of the connectives' left introduction rules, characterising their meaning through the consequences licensed by their use\footnote{We observe that, in~\cite{Sandqvist2015IL}, the semantic clause for disjunction was defined as
\[
\Vdash_{\B}A\vee B
\mbox{ iff }
\forall \mathcal{C}\supseteq\B, p\in\At, \mbox{ if }
A\Vdash_{\C}p \mbox{ and }
B\Vdash_{\C} p, \mbox{ then }
\Vdash_{\C} p.
\]
thus simulating the elimination rule for disjunction in the natural deduction approach.}. 

Compared with the support clauses in~\cite{DBLP:journals/corr/abs-2504-08349}, these clauses also incorporate the notion of \emph{support using resources} introduced in~\cite{DBLP:journals/sLogica/GheorghiuGP26}.
The central idea is to index support by a multiset of atomic resources ($\Vdash_{\B}^{\Gamma_{\At}}$): the base $\B$ specifies the available inference rules, while the resource multiset $\Gamma_{\At}$ records the atomic assumptions used to establish support, including their multiplicities. Semantic consequence must then account for how these resources pass from hypotheses to conclusions. 

More precisely, in this formulation, for a non-empty atomic context $\Theta_{\At}$, $\Theta_{\At} \Vdash_{\B}^{\Gamma_{\At}} A$ means that, in every extension of $\B$, whenever $\Theta_{\At}$ is supported using resources $\Sigma_{\At}$, the conclusion $A$ is supported using the combined multiset $\Gamma_{\At},\Sigma_{\At}$. Here, $\Gamma_{\At}$ records the additional resources needed to pass from $\Theta_{\At}$ to $A$. Unlike extending the base, adding resources does not need to preserve support: without weakening, an extra assumption cannot simply be discarded. Likewise, without contraction, one occurrence cannot freely replace two. This explicit accounting makes support sensitive to restrictions on structural rules and is therefore essential for extending the semantics to substructural logics.

To improve readability and clearly distinguish between multiplicative and additive definitions, we adopt the following notation:
$\with,\oplus,\to, \top$ and $0$ represent additive conjunction, disjunction, implication, true and false, respectively, whereas $\otimes,\parr,\multimap, 1$ and $\bot$ represent their multiplicative counterparts. Most clauses are similar to those presented in \cite{buzoku2026ILL}.

\begin{definition}[Support and logical validity]\label{def:addandmultsupport}
{\em Additive and multiplicative support} in $\B$ using resources $\Gamma_{\At}$, represented as $\Vdash^{\Gamma_{\At}}_{\B}$, is defined in Figure~\ref{fig:sup}. An inference from $\Gamma$ to $\Delta$ is {\em valid}, written as $\Gamma \Vdash \Delta$, if $\Gamma \Vdash^{\varnothing}_{\B} \Delta$ for all bases $\B$. 
\end{definition}

\renewcommand{\arraystretch}{1.3}
\begin{figure}[t!]
 \begin{tabular}{
    @{}
    >{\centering\arraybackslash}p{0.09\textwidth}
    >{\raggedright\arraybackslash}p{0.26\textwidth}
    >{\raggedright\arraybackslash}p{0.04\textwidth}
    >{\raggedright\arraybackslash}p{0.64\textwidth}
    @{}
}
\toprule
\multicolumn{4}{l}{\textbf{Atomic Support and Inference}} \\
\midrule

(At)
& $\Vdash^{\Gamma_{\At}}_{\B}\Delta_{\At}$
& iff
& $\vdash_{\B}\Gamma_{\At}\Rightarrow\Delta_{\At}$ \\

(Inf)
& $\{A^1,\ldots,A^n\}\Vdash^{\Gamma_{\At}}_{\B}\Delta$
& iff
& For all $\C\supseteq\B$, all $\Sigma^i_{\At}$ and all
$\Pi^i_{\At}$, where $1\leq i\leq n$, if
$\Vdash^{\Sigma^i_{\At}}_{\C}A^i,\Pi^i_{\At}$, then
$\Vdash^{\Gamma_{\At}\cup\Sigma^1_{\At}\cup\cdots\cup\Sigma^n_{\At}}_{\C}
\Delta,\Pi^1_{\At},\ldots,\Pi^n_{\At}$.
\\
\midrule
\multicolumn{4}{l}{\textbf{Additive Connectives}} \\
\midrule

($\&$)
& $\Vdash^{\Gamma_{\At}}_{\B}\Delta,A\&B$
& iff
& For all $\mathcal{C}\supseteq\B$, all $\Theta_{\At}$ and all
$\Sigma_{\At}$, if either
$A\Vdash^{\Theta_{\At}}_{\C}\Sigma_{\At}$ or
$B\Vdash^{\Theta_{\At}}_{\C}\Sigma_{\At}$, then
$\Vdash^{\Gamma_{\At}\cup\Theta_{\At}}_{\C}\Delta,\Sigma_{\At}$. \\

($\oplus$)
& $\Vdash^{\Gamma_{\At}}_{\B}\Delta,A\oplus B$
& iff
& For all $\mathcal{C}\supseteq\B$, all $\Theta_{\At}$ and all
$\Sigma_{\At}$, if both
$A\Vdash^{\Theta_{\At}}_{\C}\Sigma_{\At}$ and
$B\Vdash^{\Theta_{\At}}_{\C}\Sigma_{\At}$, then
$\Vdash^{\Gamma_{\At}\cup\Theta_{\At}}_{\C}\Delta,\Sigma_{\At}$. \\

($\to$)
& $\Vdash^{\Gamma_{\At}}_{\B}\Delta,A\to B$
& iff
& For all $\mathcal{C}\supseteq\B$, all $\Theta_{\At}$ and all
$\Sigma_{\At}$, if both
$\Vdash^{\Theta_{\At}}_{\C}A,\Sigma_{\At}$ and
$B\Vdash^{\Theta_{\At}}_{\C}\Sigma_{\At}$, then
$\Vdash^{\Gamma_{\At}\cup\Theta_{\At}}_{\C}\Delta,\Sigma_{\At}$. \\

($0$)
& $\Vdash^{\Gamma_{\At}}_{\B}\Delta, 0$
& iff
& for all $\Theta_{\At}$ and $\Sigma_{\At}$, $\Vdash^{\Gamma_{\At} \cup \Theta_{\At}}_{\B}\Delta, \Sigma_{\At}$. \\

($\top$)
&$\supp{B}{\Gamma\Atomic}\Delta,\top$
& 
& always. \\

\midrule
\multicolumn{4}{l}{\textbf{Multiplicative Connectives}} \\
\midrule

($\otimes$)
& $\Vdash^{\Gamma_{\At}}_{\B}\Delta,A\otimes B$
& iff
& For all $\mathcal{C}\supseteq\B$, all $\Theta_{\At}$ and all
$\Sigma_{\At}$, if
$A,B\Vdash^{\Theta_{\At}}_{\C}\Sigma_{\At}$, then 
$\Vdash^{\Gamma_{\At}\cup\Theta_{\At}}_{\C}\Delta,\Sigma_{\At}$. \\

($\parr$)
& $\Vdash^{\Gamma_{\At}}_{\B}\Delta,A\parr B$
& iff
& For all $\mathcal{C}\supseteq\B$, all $\Theta_{\At}$, $\Pi_{\At}$,
$\Sigma_{\At}$ and $\Omega_{\At}$, if both
$A\Vdash^{\Theta_{\At}}_{\C}\Sigma_{\At}$ and
$B\Vdash^{\Pi_{\At}}_{\C}\Omega_{\At}$, then
$\Vdash^{\Gamma_{\At}\cup\Theta_{\At}\cup\Pi_{\At}}_{\C}
\Delta,\Sigma_{\At},\Omega_{\At}$. \\

($\multimap$)
& $\Vdash^{\Gamma_{\At}}_{\B}\Delta,A\multimap B$
& iff
& For all $\mathcal{C}\supseteq\B$, all $\Theta_{\At}$, $\Pi_{\At}$,
$\Sigma_{\At}$ and $\Omega_{\At}$, if both
$\Vdash^{\Theta_{\At}}_{\C}A,\Sigma_{\At}$ and
$B\Vdash^{\Pi_{\At}}_{\C}\Omega_{\At}$, then
$\Vdash^{\Gamma_{\At}\cup\Theta_{\At}\cup\Pi_{\At}}_{\C}
\Delta,\Sigma_{\At},\Omega_{\At}$. \\

($\bot$)
& $\Vdash^{\Gamma_{\At}}_{\B}\Delta, \bot$
& iff
& $\Vdash^{\Gamma_{\At}}_{\B}\Delta$. \\

($1$)
&
$\supp{B}{\Gamma\Atomic}\Delta, 1$
& iff
& For all $\base{C}\supseteq\base{B}$, all $\Theta\Atomic$, and all $\Sigma\Atomic$, if $\supp{C}{\Theta\Atomic}\Sigma\Atomic$ then $\supp{C}{\Gamma\Atomic\cup\Theta\Atomic}\Delta,\Sigma\Atomic$. \\

\bottomrule
\end{tabular}
\caption{Support in a base $\B$ using resources $\Gamma_{\At}$.}\label{fig:sup}
\end{figure}

In what follows, we establish soundness and completeness for the classical, intuitionistic, and substructural logics considered here with respect to the semantics determined by the corresponding classes of bases. 
Although these results may seem natural, they do not follow automatically from the design of the semantic clauses. Their importance lies in the {\em uniformity} and {\em modularity} of the construction: to the best of our knowledge, this is the first treatment to provide proof-theoretic semantics for these logics within a common framework, by systematically varying conditions on bases and the choice of semantic clauses.

In Section~\ref{sec:modular}, we engage the modularity of our framework and results. In particular, the lack of use of structural rules in our proofs, which are only dealt with in separate theorems, means that eliding structural rules from bases immediately yields semantics for $\MALL$ and $\iMALL$. Moreover, a new degree of modularity is achieved if we combine our results with those of \cite{DBLP:journals/corr/abs-2504-08349}, as then intuitionistic logic and $\iMALL$ can naturally be visualised as both extensions and restrictions of classical logic and $\MALL$.

In Section~\ref{sec:jankov}, we turn to the more delicate case of intermediate logics, for which extending this approach requires a finer analysis of how conditions on bases can capture the additional principles that distinguish each logic from intuitionistic logic. We provide two new strategies for proving completeness which work for any logic obtainable by adding axioms to intuitionistic logic. One such strategy combines use of simulation bases with composition of sequent deductions, and the other combines them with the traditional construct of intuitionistic prime theories. The first is completely constructive, whereas the second is non-constructive but even more general.

\section{Soundness and completeness}\label{sec:soundcomp}
This section is devoted to showing that our definition of support yields a sound and complete semantics for {\em classical logic}, before extending the framework to substructural and intuitionistic logics. We consider both the additive system $\ACLp$, with the logical rules in Figure~\ref{fig:ACLp}, and the multiplicative system $\MCLp$, with the logical rules in Figure~\ref{fig:MCLp}. Both systems also include the structural rules in Figure~\ref{fig:struc}. 

Although these additive and multiplicative sequent system presentations are equivalent in the presence of the structural rules, their corresponding semantic clauses differ substantially. Treating them separately also prepares the ground for a semantics of $\MALL$.

\begin{figure*}[t!]
    \centering
    \setlength{\tabcolsep}{3pt}
    \renewcommand{\arraystretch}{0.8}

    \begin{subfigure}[t]{1\textwidth}
        \centering
        \vspace{-0.8em}
\begin{prooftree}
    \AxiomC{$A,\Gamma \Rightarrow \Delta$}
    \RightLabel{\tiny\scriptsize{$L \&_1$}}
    \UnaryInfC{$A \& B, \Gamma \Rightarrow \Delta$}
 \quad
    \DisplayProof
    \quad
    \AxiomC{$B,\Gamma \Rightarrow \Delta$}
    \RightLabel{\tiny\scriptsize{$L \&_2$}}
    \UnaryInfC{$A \& B, \Gamma \Rightarrow \Delta$}
     \quad
    \DisplayProof
    \quad
    \AxiomC{$\Gamma \Rightarrow  \Delta, A$}
    \AxiomC{$\Gamma \Rightarrow \Delta, B$}
    \RightLabel{\tiny\scriptsize{$R \&$}}
    \BinaryInfC{$\Gamma \Rightarrow \Delta, A \& B$}
\end{prooftree}
\begin{prooftree}
    \AxiomC{$A, \Gamma \Rightarrow  \Delta$}
    \AxiomC{$B, \Gamma \Rightarrow \Delta$}
    \RightLabel{\tiny\scriptsize{$L \oplus$}}
    \BinaryInfC{$A \oplus B, \Gamma \Rightarrow \Delta$}
 \;
    \DisplayProof
    \;
    \AxiomC{$\Gamma \Rightarrow \Delta, A$}
    \RightLabel{\tiny\scriptsize{$R \oplus_{1}$}}
    \UnaryInfC{$\Gamma \Rightarrow \Delta, A \oplus B$}
     \;
    \DisplayProof
    \;
    \AxiomC{$\Gamma \Rightarrow \Delta, B$}
    \RightLabel{\tiny\scriptsize{$R \oplus_{2}$}}
    \UnaryInfC{$\Gamma \Rightarrow \Delta, A \oplus B$}
\end{prooftree}
\begin{prooftree}
    \AxiomC{$\Gamma \Rightarrow  \Delta, A$}
    \AxiomC{$B, \Gamma \Rightarrow \Delta$}
    \RightLabel{\tiny\scriptsize{$L \to$}}
    \BinaryInfC{$A \to B, \Gamma \Rightarrow \Delta$}
    \DisplayProof
    \AxiomC{$A, \Gamma \Rightarrow \Delta$}
    \RightLabel{\tiny\scriptsize{$R \to_{1}$}}
    \UnaryInfC{$\Gamma \Rightarrow \Delta, A \to B$}
   \,
    \DisplayProof
    \,
    \AxiomC{$\Gamma \Rightarrow \Delta, B$}
    \RightLabel{\tiny\scriptsize{$R \to_{2}$}}
    \UnaryInfC{$\Gamma \Rightarrow \Delta, A \to B$}
\end{prooftree}
\begin{prooftree}
        \AxiomC{}
  \RightLabel{\tiny\scriptsize{$L 0$}}
        \UnaryInfC{$\Gamma, 0 \Rightarrow \Delta$}
    \;
    \DisplayProof
    \;
        \AxiomC{}
        \RightLabel{\tiny\scriptsize{$R\top$}}
        \UnaryInfC{$\Gamma\Rightarrow\Delta,\top$}
    \end{prooftree}
\vspace{-0.3cm}
\caption{Additive inference rules of the sequent system $\ACLp$.}\label{fig:ACLp}
    \end{subfigure}
     \hfill
    \vspace{0.1cm}
    \begin{subfigure}[t]{1.1\textwidth}
         \centering
 \begin{prooftree}
    \AxiomC{$A, B, \Gamma \Rightarrow \Delta$}
     \RightLabel{\tiny\scriptsize{$L \otimes$}}
     \UnaryInfC{$A \otimes B, \Gamma \Rightarrow \Delta$}
     \DisplayProof
     \AxiomC{$\Gamma \Rightarrow  \Delta, A$}
     \AxiomC{$\Gamma' \Rightarrow \Delta', B$}
     \RightLabel{\tiny\scriptsize{$R \otimes$}}
     \BinaryInfC{$\Gamma, \Gamma' \Rightarrow \Delta, \Delta', A \otimes B$}
     \DisplayProof
     \AxiomC{$A, \Gamma \Rightarrow  \Delta$}
     \AxiomC{$B, \Gamma' \Rightarrow \Delta'$}
     \RightLabel{\tiny\scriptsize{$L \parr$}}
     \BinaryInfC{$A \parr B, \Gamma, \Gamma' \Rightarrow \Delta, \Delta'$}
\end{prooftree}
\begin{prooftree}
     \AxiomC{$\Gamma \Rightarrow \Delta, A, B$}
     \RightLabel{\tiny\scriptsize{$R \parr$}}
     \UnaryInfC{$\Gamma \Rightarrow \Delta, A \parr B$}
     \,
     \DisplayProof
     \,
     \AxiomC{$\Gamma \Rightarrow  \Delta, A$}
     \AxiomC{$B, \Gamma' \Rightarrow \Delta'$}
     \RightLabel{\tiny\scriptsize{$L \multimap$}}
    \BinaryInfC{$A \multimap B, \Gamma, \Gamma' \Rightarrow \Delta, \Delta'$}
  \,
     \DisplayProof
     \,
     \AxiomC{$A, \Gamma \Rightarrow \Delta, B$}
     \RightLabel{\tiny\scriptsize{$R \multimap$}}
     \UnaryInfC{$\Gamma \Rightarrow \Delta, A \multimap B$}
 \end{prooftree}
        \begin{prooftree}
   \AxiomC{}
     \RightLabel{\tiny\scriptsize{$L \bot$}}
     \UnaryInfC{$ \bot \Rightarrow \varnothing$}
 \;
    \DisplayProof
     \;
     \AxiomC{$\Gamma \Rightarrow \Delta$}
     \RightLabel{\tiny\scriptsize{$R \bot$}}
     \UnaryInfC{$\Gamma \Rightarrow \Delta, \bot$}
     \;
     \DisplayProof
     \;
    \AxiomC{$\Gamma \Rightarrow \Delta$}
     \RightLabel{\tiny\scriptsize{$L 1$}}
     \UnaryInfC{$1,\Gamma\Rightarrow \Delta$}
    \;
    \DisplayProof
     \;
     \AxiomC{}
     \RightLabel{\tiny\scriptsize{$R 1$}}
     \UnaryInfC{$ \varnothing \Rightarrow 1$}
     \end{prooftree}
 \vspace{-0.3cm}
 \caption{Multiplicative inference rules of the sequent system $\MCLp$.}\label{fig:MCLp}
     \end{subfigure}
    \hfill
    \vspace{0.1cm}
   \begin{subfigure}[t]{1\textwidth}
        \centering
  \begin{prooftree}
    \AxiomC{$\Gamma \Rightarrow \Delta$}
    \RightLabel{$LW$}
    \UnaryInfC{$A, \Gamma \Rightarrow \Delta$}
 \;
    \DisplayProof
    \;
    \AxiomC{$\Gamma \Rightarrow \Delta$}
    \RightLabel{$RW$}
    \UnaryInfC{$ \Gamma \Rightarrow \Delta, A$}
 \;
    \DisplayProof
    \;
    \AxiomC{$A, A, \Gamma \Rightarrow \Delta$}
    \RightLabel{$LC$}
    \UnaryInfC{$A, \Gamma \Rightarrow \Delta$}
 \;
    \DisplayProof
    \;
    \AxiomC{$\Gamma \Rightarrow \Delta, A, A$}
    \RightLabel{$RC$}
    \UnaryInfC{$ \Gamma \Rightarrow \Delta, A$}
\end{prooftree}
\begin{prooftree}
    \AxiomC{}
    \RightLabel{$\init$}
    \UnaryInfC{$ A \Rightarrow A$}
\end{prooftree}
\vspace{-0.3cm}
\caption{Structural rules}\label{fig:struc}
    \end{subfigure}
\caption{Additive and multiplicative systems for propositional classical logic.}\label{fig:CLp}
\end{figure*}

We start by exploring some structural properties of bases and state some useful lemmas -- you may find the proofs in Appendix~\ref{sec:soundcomp}. In particular, we study the relationship between bases closed under structural rules and satisfaction of structural principles at the semantic level.

The next two lemmata show how to extend (Inf) to general formulas and to arbitrary contexts, respectively.
\begin{lemma}[Non-atomic inference]\label{lemma:non-atomicinference}
    $\{ A^{1}, \ldots, A^{n}\} \Vdash^{\Gamma_{\At}}_\base{B} \Delta$ iff, for all $\C \supseteq \B$, all atomic $\Sigma^{i}_{\At}$ and all arbitrary $\Pi^{i}$ (with $1 \leq i \leq n$), if for all $i$ $\Vdash^{\Sigma^{i}_{\At}}_{\C} A^{i}, \Pi^{i}$, then $\Vdash^{\Gamma_{\At} \cup \Sigma^{1}_{\At} \cup \ldots \cup \Sigma^{n}_{\At}}_{\C} \Delta, \Pi^{1}, \ldots, \Pi^{n}$
\end{lemma}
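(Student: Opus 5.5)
We prove the two directions separately. The right-to-left direction is immediate: atomic multisets $\Pi^{i}_{\At}$ are a special case of arbitrary $\Pi^{i}$, so the generalised condition specialises to the clause (Inf). For the left-to-right direction we need two auxiliary facts, established first by routine induction on the structure of formulas. The first is \emph{monotonicity}: if $\Vdash^{\Gamma_{\At}}_{\B}\Delta$ and $\C\supseteq\B$, then $\Vdash^{\Gamma_{\At}}_{\C}\Delta$, and similarly for $\Theta\Vdash^{\Gamma_{\At}}_{\B}\Delta$. This holds because every clause in Figure~\ref{fig:sup} either quantifies over all extensions or, as for (At), (0), ($\top$) and ($\bot$), is inherited by extensions, using that extensions of $\C$ are extensions of $\B$. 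The second is a \emph{uniformity observation}. Every non-atomic clause for a formula $C$ occurring in a context $X, C$ has the shape
\[
\Vdash^{\Gamma_{\At}}_{\B} X, C \quad\text{iff}\quad \text{for all } \mathcal{D}\supseteq\B \text{ and all atomic } \Theta_{\At}, \Sigma_{\At} \text{ satisfying } \mathrm{Cond}_C(\mathcal{D},\Theta_{\At},\Sigma_{\At}),\ \Vdash^{\Gamma_{\At}\cup\Theta_{\At}}_{\mathcal{D}} X, \Sigma_{\At}.
\]
Here the condition $\mathrm{Cond}_C$ depends only on $C$ and not on $X$. The clauses $\top$, $\bot$ and $0$ are degenerate instances of this pattern.

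For the left-to-right direction we assume $\{A^1,\ldots,A^n\}\Vdash^{\Gamma_{\At}}_{\B}\Delta$, and we prove the generalised statement by induction on the total number $k$ of non-atomic formula occurrences in $\Pi^1,\ldots,\Pi^n$, weighted by their complexity. The claim is proved uniformly for all extensions $\C\supseteq\B$ and all $\Sigma^i_{\At}$. When $k=0$, all $\Pi^i$ are atomic and the claim is exactly (Inf). Otherwise, we pick a non-atomic occurrence $C$ in some $\Pi^{j}$, say $\Pi^{j} = \Pi'^{j}, C$. To show
\[
\Vdash^{\Gamma_{\At}\cup\Sigma^1_{\At}\cup\cdots\cup\Sigma^n_{\At}}_{\C}\Delta,\Pi^1,\ldots,\Pi'^{j},\ldots,\Pi^n, C,
\]
we unfold the clause for the main connective of $C$ on this goal. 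We therefore fix $\mathcal{D}\supseteq\C$ and atomic $\Theta_{\At},\Sigma_{\At}$ with $\mathrm{Cond}_C(\mathcal{D},\Theta_{\At},\Sigma_{\At})$. What remains to show is $\Vdash^{\Gamma_{\At}\cup\bigcup_i\Sigma^i_{\At}\cup\Theta_{\At}}_{\mathcal{D}}\Delta,\ldots,\Pi'^{j},\ldots,\Sigma_{\At}$.

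Next we unfold the same clause on the hypothesis $\Vdash^{\Sigma^{j}_{\At}}_{\C} A^{j},\Pi'^{j}, C$, using the same $\mathcal{D},\Theta_{\At},\Sigma_{\At}$. This is legitimate because $\mathrm{Cond}_C$ does not depend on the context, and it yields $\Vdash^{\Sigma^{j}_{\At}\cup\Theta_{\At}}_{\mathcal{D}} A^{j},\Pi'^{j},\Sigma_{\At}$. The other hypotheses transfer to $\mathcal{D}$ by monotonicity. The new side contexts $\Pi'^{j},\Sigma_{\At}$ contain strictly fewer non-atomic occurrences. We then apply the induction hypothesis in $\mathcal{D}$, which is an extension of $\B$, with resources $\Sigma^{j}_{\At}\cup\Theta_{\At}$ in place of $\Sigma^{j}_{\At}$. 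This gives exactly the required support, since the multiset unions match.

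The degenerate connectives are handled directly. For $\top$ the goal holds outright. For $\bot$, unfolding the hypothesis removes $\bot$ and the induction hypothesis gives the goal without $\bot$, which is equivalent. For $0$ there is no extension quantifier, so we take $\mathcal{D}=\C$ and use arbitrary $\Theta_{\At},\Sigma_{\At}$.

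The main obstacle is making the induction measure and the quantification order correct. The induction hypothesis must be stated for all extensions of $\B$ and all atomic resource multisets, because each unfolding both moves to a further extension $\mathcal{D}$ and enlarges the resources by $\Theta_{\At}$. We must also check carefully, clause by clause, that the conditions in Figure~\ref{fig:sup} really are independent of the context $X$. This is where the design choice of left-rule-guided clauses is essential.
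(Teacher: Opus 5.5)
Your proposal is correct and follows essentially the same route as the paper's proof. Both argue by induction on the degree of the side contexts $\Pi^i$, unfolding the clause of a chosen complex formula simultaneously in the goal and in the corresponding hypothesis with the same extension and atomic parameters, transferring the other hypotheses by monotonicity, and applying the induction hypothesis with the enlarged resources. Your explicit observation that each clause's condition is independent of the side context is exactly what justifies the paper's remark that the remaining connective cases are analogous to the $\&$ case it writes out.
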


\begin{lemma} \label{lemma:partialInf}
  Let $\Gamma = \{A^{1}, \ldots, A^{n}\}$.  If $\Gamma, \Delta \Vdash^{\Theta_{\At}}_{\B} \Sigma$ and $\Vdash^{\Omega^{m}_{\At}}_{\B} \Pi^{m}_{\At}, A^{m}$ for all $1 \leq m \leq n$, then $\Delta \Vdash^{\Theta_{\At}, \Omega^{1}_{\At}, \ldots, \Omega^{n}_{\At}}_{\B} \Sigma, \Pi^{1}_{\At}, \ldots, \Pi^{n}_{\At}$. 
\end{lemma}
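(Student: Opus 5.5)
We prove Lemma~\ref{lemma:partialInf} by unfolding the definition of (Inf) for the target judgement and reducing it to the hypothesis via Lemma~\ref{lemma:non-atomicinference}. Write $\Delta = \{B^{1},\ldots,B^{k}\}$.

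By (Inf), to establish $\Delta \Vdash^{\Theta_{\At}, \Omega^{1}_{\At}, \ldots, \Omega^{n}_{\At}}_{\B} \Sigma, \Pi^{1}_{\At}, \ldots, \Pi^{n}_{\At}$ we fix an arbitrary $\C \supseteq \B$ and atomic $\Xi^{j}_{\At}$, $\Lambda^{j}_{\At}$ ($1 \leq j \leq k$) with $\Vdash^{\Xi^{j}_{\At}}_{\C} B^{j}, \Lambda^{j}_{\At}$ for every $j$. We then need to show
\[
\Vdash^{\Theta_{\At}\cup\Omega^{1}_{\At}\cup\cdots\cup\Omega^{n}_{\At}\cup\Xi^{1}_{\At}\cup\cdots\cup\Xi^{k}_{\At}}_{\C} \Sigma, \Pi^{1}_{\At},\ldots,\Pi^{n}_{\At}, \Lambda^{1}_{\At},\ldots,\Lambda^{k}_{\At}.
\]

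Next we apply the hypothesis $\Gamma, \Delta \Vdash^{\Theta_{\At}}_{\B} \Sigma$, reading it through (Inf) directly, or through Lemma~\ref{lemma:non-atomicinference} (the side contexts here are atomic, so either form works). We instantiate it at the same extension $\C$ and supply premises for all the formulas $A^{1},\ldots,A^{n},B^{1},\ldots,B^{k}$. For each $B^{j}$ we use the given $\Vdash^{\Xi^{j}_{\At}}_{\C} B^{j}, \Lambda^{j}_{\At}$. For each $A^{m}$ we need $\Vdash^{\Omega^{m}_{\At}}_{\C} A^{m}, \Pi^{m}_{\At}$ in $\C$ rather than in $\B$. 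The conclusion of (Inf) then yields exactly the displayed judgement, since multiset union is commutative.

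The only nontrivial ingredient, and the step I expect to be the main obstacle, is monotonicity: support in $\B$ persists to every extension $\C \supseteq \B$. It must hold for arbitrary formulas and side contexts, and with fixed resources. I would prove it as an auxiliary lemma by induction on the complexity of the formulas in the right-hand multiset. For the atomic case, derivations in $\B$ remain derivations in $\C$, because $\C$ only adds rules. The (Inf) clause and every connective clause except $\bot$ and $0$ quantify over all extensions of the current base, so they transfer immediately by transitivity of $\supseteq$; transitivity holds because extensions of an extension of $\B$ still satisfy the class conditions. The $\bot$ and $0$ clauses reduce to support of smaller multisets, which the induction hypothesis handles. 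A subtlety is that the clauses are stated with one distinguished formula and a context $\Delta$. The induction should therefore be on the total size of the multiset, decomposing it at whichever non-atomic formula the clause is applied to, with the result independent of that choice. I take this as given, as the paper's definition presupposes. With monotonicity in hand, the argument above goes through.
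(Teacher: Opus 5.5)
Your proposal is correct and follows essentially the same route as the paper's proof. Both unfold (Inf) for the target judgement at an arbitrary $\mathcal{C} \supseteq \mathcal{B}$ with premises for the formulas of $\Delta$, lift the premises for the $A^{m}$ from $\mathcal{B}$ to $\mathcal{C}$ by monotonicity, and then instantiate the hypothesis $\Gamma, \Delta \Vdash^{\Theta_{\mathsf{At}}}_{\mathcal{B}} \Sigma$ at $\mathcal{C}$. The paper simply cites monotonicity as a separately stated lemma whose proof it omits, so your sketch of that induction goes beyond what this lemma requires.
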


Monotonicity, provable through standard methods \cite{DBLP:journals/sLogica/GheorghiuGP26,buzoku2026ILL,DBLP:conf/tableaux/BarrosoNascimentoPP25}, also comes in handy. The proof is by  straightforward induction and it is omitted. %

\begin{lemma}
    If $\Vdash^{\Gamma_{\At}}_{\B} \Delta$ and $\C \supseteq \B$ then $\Vdash^{\Gamma_{\At}}_{\C} \Delta$.
\end{lemma}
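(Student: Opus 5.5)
We prove the statement by structural induction on the multiset $\Delta$, or more exactly on the total complexity of the formulas in $\Delta$, for an arbitrary resource multiset $\Gamma_{\At}$. The induction must be carried out simultaneously for the (Inf) judgment, since several clauses refer to $A \Vdash^{\Theta_{\At}}_{\C} \Sigma_{\At}$.

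The first observation is that extensions compose: if $\C\supseteq\B$, then every $\mathcal{D}\supseteq\C$ is also an extension of $\B$. This holds because an extension only adds rules that respect the conditions of the chosen class of bases, and the same conditions are imposed on extensions of $\C$.

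The base cases are short.
\begin{itemize}
\item \emph{Atomic $\Delta$.} By (At), $\Vdash^{\Gamma_{\At}}_{\B}\Delta_{\At}$ means $\vdash_{\B}\Gamma_{\At}\Rightarrow\Delta_{\At}$. Every rule of $\B$ is a rule of $\C$, so the same derivation tree is a derivation in $\C$. Formally, we induct on the derivation to show $\vdash_{\B}\subseteq\vdash_{\C}$.
\item \emph{(Inf) judgments.} These quantify over all $\mathcal{D}\supseteq\B$. Any $\mathcal{D}\supseteq\C$ is among them, so the statement restricted to extensions of $\C$ follows at once.
\end{itemize}

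For the connective clauses, we write the supported multiset as $\Delta', A\star B$, with the clause applied to one non-atomic formula. The clauses for $\&$, $\oplus$, $\to$, $\otimes$, $\parr$, $\multimap$ and $1$ all have the shape ``for all $\mathcal{D}\supseteq\B$ and all atomic $\Theta_{\At},\Sigma_{\At}$, if (hypotheses in $\mathcal{D}$) then (conclusion in $\mathcal{D}$)''. So the transitivity observation alone gives the result, with no appeal to the induction hypothesis. $\top$ is trivial.

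The two clauses that do not quantify over extensions need the induction hypothesis.
\begin{itemize}
\item \emph{$\bot$.} Support of $\Delta',\bot$ reduces to support of $\Delta'$ in $\B$, which is smaller, so the induction hypothesis transfers it to $\C$.
\item \emph{$0$.} Support of $\Delta',0$ reduces to $\Vdash^{\Gamma_{\At}\cup\Theta_{\At}}_{\B}\Delta',\Sigma_{\At}$ for all atomic $\Theta_{\At},\Sigma_{\At}$. Each of these multisets has strictly smaller complexity, since the connective $0$ is replaced by atoms, so the induction hypothesis applies again.
\end{itemize}
This is why the measure must count connectives and not atoms.

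The main subtlety is well-foundedness when $\Delta$ contains several non-atomic formulas. The clauses as written single out one formula $A\star B$ with an arbitrary side context $\Delta$. If several formulas could be chosen, we have to fix that the clause is read for a chosen one and check that the result is independent of that choice, or simply verify the property for whichever reading is used. Under either reading the argument above goes through. Only $0$ and $\bot$ descend into the context, and they do so to multisets with fewer connectives, so the induction is well-founded.
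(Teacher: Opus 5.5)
Your proof is correct and takes essentially the route the paper intends. The paper omits the argument as a ``straightforward induction'' by standard methods, and your induction on the number of connectives in $\Delta$ is that standard argument: the atomic case holds because $\C$ contains every rule of $\B$, (Inf) and every clause quantifying over extensions follow from transitivity of $\supseteq$, and only $(\bot)$ and $(0)$ need the induction hypothesis.
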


The next result is a {\em substitution lemma}, also referred to as a {\em cut} result. It will play a  key role in the proof of soundness presented in the next section.
\begin{lemma}\label{lemma:basecut}
    If $\Gamma \Vdash^{\Theta_{\At}}_{\B} A, \Delta$ and $A, \Sigma \Vdash^{\Omega_{\At}}_{\B} \Pi$ then $\Gamma, \Sigma \Vdash^{\Theta_{\At}, \Omega_{\At}}_{\B} \Delta, \Pi$.
\end{lemma}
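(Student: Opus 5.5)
The plan is to unfold both hypotheses with Lemma~\ref{lemma:non-atomicinference} and chain them. The key point is that this lemma allows \emph{arbitrary} (non-atomic) side contexts for the premises. So the conclusion of the first hypothesis, which carries the side context $\Delta$, can be fed directly as the support of $A$ into the second hypothesis.

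Write $\Gamma = \{B^1,\ldots,B^k\}$ and $\Sigma = \{C^1,\ldots,C^l\}$. By (Inf), to establish $\Gamma,\Sigma \Vdash^{\Theta_{\At},\Omega_{\At}}_{\B} \Delta,\Pi$ we fix $\C \supseteq \B$ and atomic $\Xi^i_{\At},\Lambda^i_{\At}$ with $\Vdash^{\Xi^i_{\At}}_{\C} B^i,\Lambda^i_{\At}$. We also fix atomic $\Xi'^j_{\At},\Lambda'^j_{\At}$ with $\Vdash^{\Xi'^j_{\At}}_{\C} C^j,\Lambda'^j_{\At}$. We must show
\[
\Vdash^{\Theta_{\At}\cup\Omega_{\At}\cup\bigcup_i\Xi^i_{\At}\cup\bigcup_j\Xi'^j_{\At}}_{\C} \Delta,\Pi,\Lambda^1_{\At},\ldots,\Lambda^k_{\At},\Lambda'^1_{\At},\ldots,\Lambda'^l_{\At}.
\]

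\emph{Step 1.} Apply the first hypothesis to the $B^i$-supports, via (Inf) or Lemma~\ref{lemma:non-atomicinference}. This yields $\Vdash^{\Theta_{\At}\cup\bigcup_i\Xi^i_{\At}}_{\C} A,\Delta,\Lambda^1_{\At},\ldots,\Lambda^k_{\At}$. If $\Gamma$ is empty, the hypothesis is already a plain support statement in $\B$, and we transfer it to $\C$ by monotonicity.

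\emph{Step 2.} Use the right-to-left reading of Lemma~\ref{lemma:non-atomicinference} to turn $A,\Sigma \Vdash^{\Omega_{\At}}_{\B} \Pi$ into its general form. Then instantiate it at $\C$ with premises of two kinds. For $A$ we take the support from Step~1, with the non-atomic side context $\Delta,\Lambda^1_{\At},\ldots,\Lambda^k_{\At}$ and resources $\Theta_{\At}\cup\bigcup_i\Xi^i_{\At}$. For each $C^j$ we take the given support with side context $\Lambda'^j_{\At}$. The lemma returns support of $\Pi$ together with all these side contexts, using resources $\Omega_{\At}\cup\Theta_{\At}\cup\bigcup_i\Xi^i_{\At}\cup\bigcup_j\Xi'^j_{\At}$. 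Up to multiset reordering, this is exactly the target.

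\emph{Degenerate cases and expected obstacle.} When $\Gamma,\Sigma$ is empty, the conclusion is a direct support claim in $\B$ rather than an (Inf) statement. There the same two steps are carried out in $\B$ itself, taking $\C=\B$ and no premises. The main obstacle is bookkeeping rather than ideas. First, the lemma must legitimately be used with a non-atomic side context $\Delta$ for the cut formula; this is precisely what Lemma~\ref{lemma:non-atomicinference} provides, and (Inf) alone would not suffice. Second, resource multisets must be combined as multiset unions with multiplicities, consistently with how $\Theta_{\At}$ and $\Omega_{\At}$ enter the two hypotheses. No structural rule of the base is used, so the argument holds uniformly for every class of bases considered.
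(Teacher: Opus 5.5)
Your proof is correct and follows the same approach as the paper. You unfold the first hypothesis by (Inf) at an arbitrary $\C \supseteq \B$, and then feed the resulting support of $A$, with its non-atomic side context $\Delta$, into the second hypothesis via Lemma~\ref{lemma:non-atomicinference}. The direction you need there is the nontrivial ($\Rightarrow$) one, not a ``right-to-left reading'' as you wrote.

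The only difference is organisational. The paper first discharges $\Sigma$ with Lemma~\ref{lemma:partialInf}, obtaining a single-premise statement $A \Vdash \ldots$, and only then applies Lemma~\ref{lemma:non-atomicinference}. You instead apply Lemma~\ref{lemma:non-atomicinference} once to all premises at the same time, which is equally valid and slightly more direct.
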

We note, however, that this result can be extended to the {\em resources} if and only if we assume that bases are closed under \init\ and \Cutc, as the next result shows.

\begin{proposition}\label{prop:subs}
For all $\Gamma_{\At}, \Delta_{\At}$ and $\mathcal{B}$ 
\[\Gamma_{\At} \Vdash^{\Theta_{\At}}_{\B} \Delta_{\At} \mbox{ iff }\vdash_{\B}, \Theta_{\At} \Gamma_{\At} \Rightarrow \Delta_{\At}\] 
holds if and only if $\B$ is closed under the atomic rules \init\ and \Cutc. 
\end{proposition}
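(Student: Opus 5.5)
The plan is to treat the two directions of the outer ``if and only if'' separately. Throughout, I read (Inf) with multiset union of resources.

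\textbf{Closure implies the equivalence.} Assume $\B$ is closed under \init\ and \Cutc, and write $\Gamma_{\At}=\{p_1,\ldots,p_n\}$.

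For the left-to-right direction, suppose $\Gamma_{\At}\Vdash^{\Theta_{\At}}_{\B}\Delta_{\At}$. I instantiate (Inf) with $\C=\B$, $\Sigma^i_{\At}=\{p_i\}$ and $\Pi^i_{\At}=\varnothing$. Each premise $\Vdash^{p_i}_{\B}p_i$ is, by (At), just $\vdash_{\B}p_i\Rightarrow p_i$, and this holds by \init. Hence $\Vdash^{\Theta_{\At},p_1,\ldots,p_n}_{\B}\Delta_{\At}$, which by (At) is exactly $\vdash_{\B}\Theta_{\At},\Gamma_{\At}\Rightarrow\Delta_{\At}$.

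For the right-to-left direction, suppose $\vdash_{\B}\Theta_{\At},\Gamma_{\At}\Rightarrow\Delta_{\At}$. Fix $\C\supseteq\B$ and atomic $\Sigma^i_{\At},\Pi^i_{\At}$ with $\vdash_{\C}\Sigma^i_{\At}\Rightarrow p_i,\Pi^i_{\At}$ for each $i$.
\begin{enumerate}
\item Every derivation in $\B$ is a derivation in $\C$, since $\C$ contains all rules of $\B$. So $\vdash_{\C}\Theta_{\At},p_1,\ldots,p_n\Rightarrow\Delta_{\At}$.
\item All \Cutc\ instances belong to $\B$, hence to $\C$.
\item Cutting successively on $p_1,\ldots,p_n$ removes one $p_i$ at a time and adds $\Sigma^i_{\At}$ on the left and $\Pi^i_{\At}$ on the right. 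This yields $\vdash_{\C}\Theta_{\At},\Sigma^1_{\At},\ldots,\Sigma^n_{\At}\Rightarrow\Delta_{\At},\Pi^1_{\At},\ldots,\Pi^n_{\At}$.
\end{enumerate}
By (At) this is the consequent required by (Inf). The only care needed here is multiset bookkeeping, so that each cut matches the fixed premise shape of an instance exactly; since \Cutc\ is available for arbitrary atomic contexts, this causes no difficulty.

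\textbf{The equivalence implies closure.} For \init, take $\Gamma_{\At}=\Delta_{\At}=\{p\}$ and $\Theta_{\At}=\varnothing$. The semantic side $p\Vdash^{\varnothing}_{\B}p$ holds trivially by (Inf): whenever $\Vdash^{\Sigma_{\At}}_{\C}p,\Pi_{\At}$, the required conclusion is that very statement. The equivalence then gives $\vdash_{\B}p\Rightarrow p$.

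For \Cutc\ I intend to use extensions. Given an instance with premises $\Gamma_{\At}\Rightarrow\Sigma_{\At},p$ and $\Theta_{\At},p\Rightarrow\Delta_{\At}$, form the extension $\C$ of $\B$ obtained by adding both premises as zero-level rules. In $\C$ we have $\vdash_{\C}\Theta_{\At},p\Rightarrow\Delta_{\At}$, so the equivalence for $\C$ gives $p\Vdash^{\Theta_{\At}}_{\C}\Delta_{\At}$. Applying (Inf) to the premise $\Vdash^{\Gamma_{\At}}_{\C}p,\Sigma_{\At}$ then yields $\vdash_{\C}\Gamma_{\At},\Theta_{\At}\Rightarrow\Sigma_{\At},\Delta_{\At}$. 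So the conclusion of the instance follows whenever its premises are available, i.e.\ \Cutc\ is derivable in $\B$ and in all its extensions.

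\textbf{Main obstacle.} The delicate step is this last one. First, it requires reading the proposition as quantifying over $\B$ together with its extensions in the chosen class. Second, derivability of every \Cutc\ instance must be identified with closure under \Cutc. Since $\vdash$ only sees derivability, adding the derivable instances as rules changes no derivability facts, so I would state explicitly that closure is understood up to this harmless addition.
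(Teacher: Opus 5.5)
Your ``if'' half is the paper's argument: \init\ supplies the premises $\Vdash^{p_i}_{\B}p_i$ for (Inf), and $n$ applications of \Cutc, available in every $\C\supseteq\B$, give the converse. For the ``only if'' half you take a different route. The paper does not derive closure from the equivalence. It only exhibits a base with no zero-level rules whose rules hard-wire (Inf) for $\Gamma_{\At}\Vdash^{\varnothing}_{\B}\Delta_{\At}$. There the semantic side holds while nothing is derivable, which shows only that the equivalence fails for \emph{some} base lacking the closure conditions. Your \init\ step is sound and uses only the hypothesis at $\B$. It also explains the paper's counterexample: any base satisfying the equivalence must derive every $p\Rightarrow p$, and that base derives nothing.

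Your \Cutc\ step uses the equivalence at the auxiliary extension $\C$, which the statement does not grant. Under the per-base reading this cannot be repaired, because that reading of the converse is false. Let $\B$ consist of exactly the \init\ instances. Its only derivable sequents are the $q\Rightarrow q$. So the derivable-to-semantic direction reduces to $\Vdash^{q}_{\B}q$ and $q\Vdash^{\varnothing}_{\B}q$, both immediate. The other direction uses \init\ exactly as in your first part. Hence the equivalence holds at $\B$. Yet for distinct atoms $a,p,b$ the extension $\B\cup\{[/a\Rightarrow p],[/p\Rightarrow b]\}$ does not derive $a\Rightarrow b$. So $\B$ is not closed under \Cutc, even up to derivability in extensions.

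With the hypothesis at $\B$ alone, the best you can do is take $\C=\B$ in (Inf). That gives admissibility of cut in $\B$ itself, with no auxiliary extension needed, but not closure. Your reinterpretation is therefore necessary rather than merely convenient: require the equivalence at every extension of $\B$, and read closure up to derivability. Under it, your argument is correct and proves a sharper converse than the paper's counterexample. That counterexample concerns only the semantic-to-derivable direction and says nothing about whether \Cutc\ is needed.
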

We now show that structural principles of semantic satisfaction are closely tied to the corresponding atomic rules in the base. This correspondence is not automatic: in sequent calculi, contraction cannot, in general, be restricted to atomic formulas without loss of provability~\cite{DBLP:journals/ndjfl/Brunnler06}. Our results show that, in $\Bes$, atomic contraction rules in the base suffice to ensure contraction at the semantic level for arbitrary formulas. The same kind of results hold for weakening.
 
\begin{theorem}[Contraction]\label{thm:contractionleftright}
  For all $\B$,  $((\Gamma, A, A \Vdash^{\Theta_{\At}}_{\C} \Delta$ implies $\Gamma, A \Vdash^{\Theta_{\At}}_{\C} \Delta)$ and $(\Gamma \Vdash^{\Theta_{\At}}_{\C} \Delta, A, A$ implies $\Gamma \Vdash^{\Theta_{\At}}_{\C} \Delta, A))$ holds for all $\C \supseteq \B$, $\Gamma, \Delta, \Theta_{\At}, A$,  if, and only if, $((\Vdash^{p, p, \Gamma_{\At}}_{\C} \Delta_{\At}$ implies  $\Vdash^{p, \Gamma_{\At}}_{\C} \Delta_{\At})$ and $(\Vdash^{ \Gamma_{\At}}_{\C} \Delta_{\At}, p, p$ implies  $\Vdash^{ \Gamma_{\At}}_{\C} \Delta_{\At}, p)$) holds for all $\C \supseteq \B$, $\Gamma_{\At}, \Delta_{\At}$, $p$.
\end{theorem}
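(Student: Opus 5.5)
The plan is to prove the two directions separately. Almost all of the work is in the ``if'' direction, which I would handle by a chain of inductions on formula complexity, using Lemma~\ref{lemma:non-atomicinference} to turn left-hand contraction into right-hand contraction plus resource contraction. The ``only if'' direction should mostly be a matter of instantiating the semantic principles at atoms, with one case I have not yet resolved (see the last paragraph).

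\emph{Step 1: resource contraction for arbitrary succedents.} Assume atomic left contraction for resources holds in every $\C\supseteq\B$. I would show, by induction on the formulas in $\Delta$, that $\Vdash^{p,p,\Gamma_{\At}}_{\C}\Delta$ implies $\Vdash^{p,\Gamma_{\At}}_{\C}\Delta$.
\begin{itemize}
\item The atomic case is the hypothesis.
\item In each clause of Figure~\ref{fig:sup} ($\&,\oplus,\to,\otimes,\parr,\multimap,1$), the resource multiset $\Gamma_{\At}$ is simply carried into the resource multiset $\Gamma_{\At}\cup\Theta_{\At}(\cup\Pi_{\At})$ of the conclusion. That conclusion has strictly fewer non-atomic formulas in its succedent, so the induction hypothesis applies after unfolding the clause.
\item The cases $\bot$ and $0$ are direct, and $\top$ is trivial.
\end{itemize}

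\emph{Step 2: right contraction for arbitrary formulas.} Assume atomic right contraction. I would show $\Vdash^{\Gamma_{\At}}_{\C}\Delta,A,A \Rightarrow \Vdash^{\Gamma_{\At}}_{\C}\Delta,A$ by induction on $A$, with an inner induction on $\Delta$.
\begin{itemize}
\item For the $\otimes$ case: to verify $\Vdash^{\Gamma_{\At}}_{\C}\Delta,A\otimes B$, take $\C'\supseteq\C$ and $A,B\Vdash^{\Theta_{\At}}_{\C'}\Sigma_{\At}$. Unfold the $\otimes$-clause twice on the hypothesis $\Vdash\Delta,A\otimes B,A\otimes B$, which yields $\Vdash^{\Gamma_{\At},\Theta_{\At},\Theta_{\At}}_{\C'}\Delta,\Sigma_{\At},\Sigma_{\At}$. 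Then remove the duplicated $\Sigma_{\At}$ by atomic right contraction, lifted through $\Delta$ exactly as in Step~1, and remove the duplicated $\Theta_{\At}$ by Step~1.
\item The cases $\parr,\multimap,\&,\oplus,\to,1$ go the same way.
\end{itemize}
Both kinds of atomic contraction are needed here, which explains why the theorem packages left and right together.

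\emph{Step 3: left contraction.} Suppose $\Gamma,A,A\Vdash^{\Theta_{\At}}_{\C}\Delta$. By Lemma~\ref{lemma:non-atomicinference}, to show $\Gamma,A\Vdash^{\Theta_{\At}}_{\C}\Delta$ it suffices to take supports of the members of $\Gamma$ together with some $\Vdash^{\Sigma_{\At}}_{\C'}A,\Pi$.
\begin{itemize}
\item Use that single support of $A$ twice, once for each occurrence of $A$. This gives $\Vdash^{\ldots,\Sigma_{\At},\Sigma_{\At}}_{\C'}\Delta,\ldots,\Pi,\Pi$.
\item Contract the duplicated $\Pi$ formula by formula using Step~2.
\item Contract the duplicated $\Sigma_{\At}$ using Step~1.
\end{itemize}
The right-hand semantic contraction is exactly Step~2 for arbitrary $\Gamma$. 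For $\Gamma$ nonempty this again goes through (Inf) and Lemma~\ref{lemma:non-atomicinference}, applied to the conclusion.

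\emph{The ``only if'' direction.} The right-hand atomic principle is immediate: instantiate semantic right contraction with $\Gamma=\varnothing$ and $A=p$, and use (At). The left-hand atomic principle, contraction of \emph{resources}, is the step I expect to be the main obstacle. Semantic contraction speaks about hypotheses, not resources, and converting one into the other is exactly what Proposition~\ref{prop:subs} needs \init\ and \Cutc\ for. I would first try to pass through a formula that internalises resources. The natural candidate is the unit $1$ (or $\bot$), whose clause quantifies over all $\Vdash^{\Theta_{\At}}_{\C}\Sigma_{\At}$ and adds $\Theta_{\At}$ to the resources. If that fails, the fallback is to instantiate the left principle with $A=p$ and extract resource contraction from it through (Inf), choosing the supports $\Vdash^{p}_{\C}p,\varnothing$ in a suitable extension $\C$. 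This fallback needs some control of the extension class, and it may force the ``only if'' direction to be restricted to bases where such extensions are admissible.
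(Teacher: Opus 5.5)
Your ``if'' direction and the right half of the ``only if'' direction follow the paper's proof. Step~1 and the atomic case of Step~2 are its two auxiliary lemmas, which lift atomic resource contraction and atomic right contraction through an arbitrary succedent. Left contraction then comes from feeding the same support of $A$ into (Inf) twice and contracting. Right contraction of a compound $A$ comes from unfolding its clause twice. Routing Step~3 through Lemma~\ref{lemma:non-atomicinference} is harmless but unnecessary, since (Inf) only quantifies over atomic $\Pi_{\At}$.

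The genuine gap is the left half of the ``only if'' direction, which you leave unresolved, and neither of your suggestions closes it as stated. The clause for $1$ combines the resources of two supports but never turns a resource occurrence of $p$ into a hypothesis $p$. Your fallback is missing its first step. Before semantic contraction can be applied you need $p,p\Vdash^{\Gamma_{\At}}_{\C}\Delta_{\At}$, and nothing in the proposal produces this from $\vdash_{\C}p,p,\Gamma_{\At}\Rightarrow\Delta_{\At}$. Moreover, the atomic principle must hold for \emph{every} $\C\supseteq\B$. So you cannot choose a convenient $\C$, or pass to a further extension where $p\Rightarrow p$ becomes available: derivability obtained there does not transfer back to $\C$.

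The missing idea is to cut.
\begin{enumerate}
\item Given $\vdash_{\C}p,p,\Gamma_{\At}\Rightarrow\Delta_{\At}$, take any $\D\supseteq\C$ with $\vdash_{\D}\Theta_{\At}\Rightarrow\Omega_{\At},p$ and $\vdash_{\D}\Phi_{\At}\Rightarrow\Psi_{\At},p$.
\item Two applications of \Cutc{} give $\vdash_{\D}\Gamma_{\At},\Theta_{\At},\Phi_{\At}\Rightarrow\Delta_{\At},\Omega_{\At},\Psi_{\At}$, so $p,p\Vdash^{\Gamma_{\At}}_{\C}\Delta_{\At}$ by (Inf).
\item Semantic left contraction, with empty formula context, $A=p$ and resources $\Gamma_{\At}$, gives $p\Vdash^{\Gamma_{\At}}_{\C}\Delta_{\At}$.
\item Discharging with $\Vdash^{p}_{\C}p$, obtained from \init{} in $\C$ itself, yields $\Vdash^{p,\Gamma_{\At}}_{\C}\Delta_{\At}$.
\end{enumerate}
In other words, this is Proposition~\ref{prop:subs} applied in both directions.

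Your suspicion that this direction needs more than the statement provides is correct. The paper's own argument uses exactly these \init{} and cut steps, although the theorem is stated ``for all $\B$''. What is actually proved therefore concerns bases closed under \init{} and a cut rule. That holds for every class of bases in the paper; in the single-succedent classes, \Cuti{} plays the role of \Cutc{}.
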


\begin{corollary}\label{cor:leftrightcontraction}
    If $\B$ is closed under left and right contraction, then $((\Gamma, A, A \Vdash^{\Theta_{\At}}_{\B} \Delta$ implies $\Gamma, A \Vdash^{\Theta_{\At}}_{\B} \Delta)$ and $(\Gamma \Vdash^{\Theta_{\At}}_{\B} \Delta, A, A$ implies $\Gamma \Vdash^{\Theta_{\At}}_{\B} \Delta, A))$ holds for all $\Gamma, \Delta, \Theta_{\At}, A$.
\end{corollary}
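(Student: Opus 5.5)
The corollary should follow from Theorem~\ref{thm:contractionleftright} once we check that closure of $\B$ under \LC\ and \RC\ yields the atomic contraction principles on its right-hand side. That is, for every $\C \supseteq \B$, every $\Gamma_{\At}, \Delta_{\At}$ and every $p$:
\[
\Vdash^{p,p,\Gamma_{\At}}_{\C}\Delta_{\At} \text{ implies } \Vdash^{p,\Gamma_{\At}}_{\C}\Delta_{\At},
\qquad
\Vdash^{\Gamma_{\At}}_{\C}\Delta_{\At},p,p \text{ implies } \Vdash^{\Gamma_{\At}}_{\C}\Delta_{\At},p .
\]
The theorem then gives semantic contraction for every $\C \supseteq \B$, and in particular for $\C = \B$, which is what the corollary asserts.

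For the atomic principles, first unfold clause (At) of Figure~\ref{fig:sup}. The left-hand principle becomes: if $\vdash_{\C} p,p,\Gamma_{\At} \Rightarrow \Delta_{\At}$ then $\vdash_{\C} p,\Gamma_{\At}\Rightarrow\Delta_{\At}$. Since extensions only add rules, $\C$ contains every rule of $\B$. In particular it contains the instance of the \LC\ schema with premise $\Gamma_{\At},p,p\Rightarrow\Delta_{\At}$ and conclusion $\Gamma_{\At},p\Rightarrow\Delta_{\At}$. One application of that individual rule to the given derivation yields the required one, using clause 2 of the definition of deducibility. The right-hand principle is handled the same way with the matching instance of \RC.

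The only delicate point is that the contexts must match exactly, because rule instances are fixed atomic sequents. This is harmless here: closure under a schema means the base contains every instance, so the instance with exactly the given $\Gamma_{\At}$ and $\Delta_{\At}$ is available. Multiset equality also takes care of the ordering of $p,p$ relative to $\Gamma_{\At}$.

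I do not expect a real obstacle. One must only note that the theorem's hypothesis quantifies over all $\C\supseteq\B$, and this is discharged because closure under the schemata is inherited by every extension, since rules are never removed.
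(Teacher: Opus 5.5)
Your proposal is correct and is the intended argument. The paper gives no separate proof because the corollary is just the ($\Leftarrow$) direction of Theorem~\ref{thm:contractionleftright}. You discharge that theorem's atomic hypothesis correctly, via clause (At) and the fact that every extension of $\mathcal{B}$ retains all of its contraction instances, and you then instantiate at $\mathcal{B}$ itself. Only this direction is needed, and its proof (through the two auxiliary contraction lemmas) uses no identity or cut rules, so closure under contraction alone suffices, as you assume.
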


\begin{theorem}[Weakening]\label{thm:weakeningleftright}
  For all $\B$, if  $((\Gamma \Vdash^{\Theta_{\At}}_{\C} \Delta$ implies $\Gamma, A \Vdash^{\Theta_{\At}}_{\C} \Delta)$ and $(\Gamma \Vdash^{\Theta_{\At}}_{\C} \Delta$ implies $\Gamma \Vdash^{\Theta_{\At}}_{\C} \Delta, A))$ holds for all $\C \supseteq \B$, $\Gamma, \Delta, \Theta_{\At}, A$,  if, and only if, $((\Vdash^{ \Gamma_{\At}}_{\C} \Delta_{\At}$ implies  $\Vdash^{p, \Gamma_{\At}}_{\C} \Delta_{\At})$ and $(\Vdash^{ \Gamma_{\At}}_{\C} \Delta_{\At}$ implies  $\Vdash^{ \Gamma_{\At}}_{\C} \Delta_{\At}, p)$) holds for all $\C \supseteq \B$, $\Gamma_{\At}, \Delta_{\At}$, $p$.
\end{theorem}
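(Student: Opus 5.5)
The right-to-left direction carries the content; left-to-right is immediate. For left-to-right, instantiate the semantic weakening property with $\Gamma=\varnothing$, $\Delta=\Delta_{\At}$, $A=p$. Right weakening gives $\Vdash^{\Gamma_{\At}}_{\C}\Delta_{\At},p$ directly. Left weakening gives $p\Vdash^{\Gamma_{\At}}_{\C}\Delta_{\At}$. By (Inf) applied in $\C$ itself, we take $\Sigma=\{p\}$, $\Pi=\varnothing$ and use $\Vdash^{p}_{\C}p$. This step needs a derivation of $p\Rightarrow p$; if $\init$ is not available, I would instead use Lemma~\ref{lemma:partialInf} with the resource formulation, or restate via Proposition~\ref{prop:subs}. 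Either way we obtain $\Vdash^{p,\Gamma_{\At}}_{\C}\Delta_{\At}$.

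For right-to-left, I fix $\B$ satisfying atomic weakening in all extensions. I first prove right weakening for support, $\Vdash^{\Theta_{\At}}_{\C}\Delta$ implies $\Vdash^{\Theta_{\At}}_{\C}\Delta,A$, for all $\C\supseteq\B$. The induction is on the complexity of $\Delta$, i.e.\ the multiset of its non-atomic formulas, mirroring the proof of Theorem~\ref{thm:contractionleftright}.
\begin{itemize}
\item \emph{Atomic base case.} Induct on $A$. If $A$ is atomic, this is the hypothesis. If $A$ is compound, unfold its clause. For example, for $A=B\otimes C$ we must show that for all $\D\supseteq\C$, if $B,C\Vdash^{\Pi}_{\D}\Sigma_{\At}$ then $\Vdash^{\Theta,\Pi}_{\D}\Delta_{\At},\Sigma_{\At}$. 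This follows from the atomic hypotheses by weakening $\Delta_{\At}$ with each atom of $\Sigma_{\At}$ and adding the resources $\Pi$ on the left. Here we use both directions of atomic weakening together with monotonicity.
\item \emph{Other compound connectives.} These are handled the same way, since each clause concludes with an atomic support statement $\Vdash^{\Theta\cup\cdots}\Delta,\Sigma_{\At}$. The cases $0$ and $\top$ are trivial, and $\bot$ reduces to the hypothesis.
\item \emph{Non-atomic $\Delta$.} If $\Delta$ contains a compound formula $D$, the statement $\Vdash\Delta$ is given by $D$'s clause, and its conclusion has the form $\Vdash\Delta',\Sigma_{\At}$ with $\Delta'$ of lower complexity. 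Adding $A$ to $\Delta'$ is then justified by the induction hypothesis applied in the extension, so we re-fold $D$'s clause with $A$ added.
\end{itemize}
Left weakening of resources, $\Vdash^{\Theta}_{\C}\Delta$ implies $\Vdash^{p,\Theta}_{\C}\Delta$, follows by the same induction.

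With these two support-level lemmas, I lift to inferences. For right weakening, $\Gamma\Vdash^{\Theta}_{\C}\Delta$ unfolds by (Inf), or by Lemma~\ref{lemma:non-atomicinference}, into a conclusion $\Vdash^{\Theta\cup\Sigma^i}_{\D}\Delta,\Pi^i$, to which we add $A$. For left weakening, $\Gamma,A\Vdash^{\Theta}_{\C}\Delta$ requires that, given additionally $\Vdash^{\Sigma}_{\D}A,\Pi_{\At}$, we show $\Vdash^{\Theta\cup\Sigma\cup\cdots}_{\D}\Delta,\Pi_{\At},\ldots$. We get this from the $\Gamma$-consequence by weakening with the atoms in $\Pi_{\At}$ on the right and the atoms in $\Sigma$ on the left, one at a time. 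The extra hypothesis about $A$ is simply discarded.

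The main obstacle is the atomic base case for a compound $A$ when support is indexed by resources. The clause for, e.g., $\parr$ or $\multimap$ demands a conclusion carrying the resources $\Theta\cup\Pi\cup\ldots$ and succedent atoms $\Sigma,\Omega$ that come from \emph{arbitrary} hypotheses. Weakening must therefore be applied to resources and succedents chosen in an extension, not fixed in advance. This is exactly why the hypothesis is quantified over all $\C\supseteq\B$ and why iterated atomic weakening on both sides is needed.
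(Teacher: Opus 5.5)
Your proposal is correct and essentially follows the paper's proof. Left-to-right is the same instantiation with $\Gamma=\varnothing$, $A=p$ plus $\Vdash^{p}_{\mathcal{C}} p$. For the converse, the paper also lifts atomic weakening to arbitrary succedents by induction on the degree of $\Delta$, adds a compound $A$ on the right by unfolding $A$'s own clause, and gets left weakening of inferences by discarding the hypothesis on $A$ and weakening with its atomic side sets. You merely fold the compound-$A$ case into the base of a single induction.

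One small caveat: your suggested fallback for avoiding the identity axiom $p\Rightarrow p$ does not work. Lemma~\ref{lemma:partialInf} still needs $\Vdash^{p}_{\mathcal{C}} p$, and Proposition~\ref{prop:subs} assumes closure under identity and atomic cut. The paper likewise simply uses $p\Rightarrow p$ at that step.
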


\begin{corollary}
    If $\B$ is closed under left and right weakening, then $((\Gamma \Vdash^{\Theta_{\At}}_{\B} \Delta$ implies $\Gamma, A \Vdash^{\Theta_{\At}}_{\B} \Delta)$ and $(\Gamma \Vdash^{\Theta_{\At}}_{\B} \Delta$ implies $\Gamma \Vdash^{\Theta_{\At}}_{\B} \Delta, A))$ holds for all $\Gamma, \Delta, \Theta_{\At}, A$.
\end{corollary}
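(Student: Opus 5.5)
The statement to prove is the unnamed corollary right after Theorem~\ref{thm:weakeningleftright}. I would obtain it directly from that theorem, in the same way Corollary~\ref{cor:leftrightcontraction} follows from Theorem~\ref{thm:contractionleftright}. The theorem says that semantic weakening holds for all extensions $\C\supseteq\B$ exactly when atomic weakening of supported atomic sequents holds in every such extension. So the proof reduces to checking the right-hand atomic condition of the theorem for every $\C\supseteq\B$, and then instantiating its conclusion at $\C=\B$.

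The key steps, in order, are as follows.

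First, closure under $\LW$ and $\RW$ is inherited by extensions. By the definition of extension, $\C\supseteq\B$ arises by adding rules to $\B$, so it still contains every instance of the $\LW$ and $\RW$ schemata. This is because those instances are individual rules with fixed atomic premise and conclusion, and they remain present in $\C$.

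Second, I would verify the atomic conditions in $\C$, using clause (At).
\begin{itemize}
\item Suppose $\Vdash^{\Gamma_{\At}}_{\C}\Delta_{\At}$. By (At) this means $\vdash_{\C}\Gamma_{\At}\Rightarrow\Delta_{\At}$.
\item Applying the $\LW$ instance with premise $\Gamma_{\At}\Rightarrow\Delta_{\At}$ and conclusion $\Gamma_{\At},p\Rightarrow\Delta_{\At}$, the second clause of deducibility gives $\vdash_{\C}\Gamma_{\At},p\Rightarrow\Delta_{\At}$.
\item By (At) again, this is $\Vdash^{p,\Gamma_{\At}}_{\C}\Delta_{\At}$.
\item The right-hand case is identical, using $\RW$ to get $\Vdash^{\Gamma_{\At}}_{\C}\Delta_{\At},p$.
\end{itemize}

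Third, Theorem~\ref{thm:weakeningleftright} now yields both semantic weakening implications for all $\C\supseteq\B$ and all $\Gamma,\Delta,\Theta_{\At},A$. Taking $\C=\B$, which is allowed since every base extends itself via the empty set of added rules, gives the corollary.

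There is no real obstacle. The only point needing care is that "extension" must preserve closure under the weakening schemata. If the chosen class of bases imposed extra conditions, such as the single-succedent restriction for intuitionistic bases, then $\RW$ instances with nonempty $\Delta_{\At}$ would not be present. In that case the right-weakening half would have to be read relative to sequents allowed in the class. I would note this caveat, but the argument as stated goes through for bases literally closed under both rules.
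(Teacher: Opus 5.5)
Your proposal is correct and follows the paper's intended route. The paper states this corollary without a separate proof, as an immediate consequence of the right-to-left direction of Theorem~\ref{thm:weakeningleftright}: every extension $\mathcal{C}\supseteq\mathcal{B}$ still contains all instances of the atomic weakening rules, so clause (At) gives atomic left and right weakening in every such $\mathcal{C}$, and the theorem's conclusion is then instantiated at $\mathcal{C}=\mathcal{B}$.
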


We move on with the proof of soundness. Interestingly enough, the argument for each logical connective requires neither contraction nor weakening. This makes explicit the separation between the semantic clauses for the connectives and the structural principles of the logic. The proof is therefore modular: the arguments for the logical rules carry over unchanged to the corresponding substructural systems.

\begin{theorem}[Soundness]\label{lemma:classicalsoundnessadd}
    If all bases are classical, the semantics is sound with respect to $\ACLp$ and $\MCLp$. In other words, if there is a proof of $\Gamma\Rightarrow\Delta$ in $\ACLp$ or $\MCLp$, then $\Gamma\supp{}{}\Delta$.
\end{theorem}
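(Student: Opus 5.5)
We argue by induction on the height of the derivation of $\Gamma\Rightarrow\Delta$ in $\ACLp$ or $\MCLp$. The induction hypothesis is used in a strengthened, base-relative form: for every classical base $\B$, if $\Gamma\Rightarrow\Delta$ is derivable then $\Gamma\Vdash^{\varnothing}_{\B}\Delta$. Since every extension of a classical base is classical, monotonicity lets us move freely to extensions $\C\supseteq\B$. For each rule we unfold the antecedent of the conclusion using (Inf) in the form of Lemma~\ref{lemma:non-atomicinference}. That is, we fix $\C\supseteq\B$ and supports $\Vdash^{\Sigma^i_{\At}}_{\C}A^i,\Pi^i$ for the formulas of $\Gamma$, including the principal one, and we must show $\Vdash^{\Sigma_{\At}}_{\C}\Delta,\Pi$ for the combined resources. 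The main tool is Lemma~\ref{lemma:partialInf}, which discharges the side formulas of $\Gamma$ against their supports. This reduces each case to a statement with only the principal formula, and possibly its immediate subformulas, on the left.

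\emph{Structural rules.} \init\ is handled directly: from $\Vdash^{\Sigma_{\At}}_{\C}A,\Pi$ we immediately get $\Vdash^{\Sigma_{\At}}_{\C}A,\Pi$. Weakening and contraction follow from the Weakening Theorem and Corollary~\ref{cor:leftrightcontraction}, since classical bases are closed under \LW, \RW, \LC, \RC.

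\emph{Left rules.} These are essentially immediate, because the clauses were designed from the left rules. Consider $L\&_1$. By the IH, $A,\Gamma\Vdash_{\B}\Delta$. Given $\Vdash^{\Sigma}_{\C}A\&B,\Pi$, we instantiate the ($\&$) clause with the premise $A\Vdash^{\Theta}_{\D}\Sigma'_{\At}$, and we need this premise to hold for suitable atomic $\Sigma'$. The difficulty is that the IH gives a possibly non-atomic conclusion $\Delta$, while the clauses only quantify over atomic $\Sigma_{\At}$. So we first prove an auxiliary ``atomic reduction'' lemma. It says: $\Vdash^{\Gamma_{\At}}_{\B}\Delta$ iff for all $\C\supseteq\B$, all $\Theta_{\At},\Sigma_{\At}$, if $\Delta\Vdash^{\Theta_{\At}}_{\C}\Sigma_{\At}$ then $\Vdash^{\Gamma_{\At},\Theta_{\At}}_{\C}\Sigma_{\At}$. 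We prove it by induction on $\Delta$, using Lemma~\ref{lemma:basecut} for one direction and the clause shapes for the other. With this in hand, the $L\&$, $L\oplus$, $L\to$, $L\otimes$, $L\parr$, $L\multimap$, $L0$, $L\bot$ and $L1$ cases all go the same way. We test the conclusion against arbitrary atomic $\Sigma_{\At}$, combine the IH-supplied left consequences via Lemma~\ref{lemma:basecut} to obtain exactly the hypothesis of the corresponding clause for the principal formula, and apply that clause. For multiplicative rules, the context splitting $\Gamma,\Gamma'$ matches the resource union $\Theta\cup\Pi$ in the ($\parr$) and ($\multimap$) clauses.

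\emph{Right rules.} These are the main obstacle, since the clauses do not mirror them. Take $R\&$. We must show $\Vdash_{\C}\Delta,A\&B$ from $\Vdash\Delta,A$ and $\Vdash\Delta,B$ (after discharging $\Gamma$). By the ($\&$) clause we take $\D\supseteq\C$ with, say, $A\Vdash^{\Theta}_{\D}\Sigma_{\At}$. Applying Lemma~\ref{lemma:basecut} to $\Vdash\Delta,A$ gives $\Vdash^{\Theta}_{\D}\Delta,\Sigma_{\At}$, as required. The cases $R\oplus$, $R\to_{1,2}$, $R\otimes$, $R\parr$ and $R\multimap$ follow the same pattern: cut the premise's supported formula against the hypothesis of the clause. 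For $R\otimes$ we cut $A$ and $B$ successively into $A,B\Vdash^{\Theta}\Sigma$, which is where the summed resources appear. For $R\to_1$ we cut $\Vdash A,\Sigma$ against $A,\Gamma\Vdash\Delta$. $R\top$ is trivial, $R\bot$ is immediate from the ($\bot$) clause, and $R1$ is immediate from the ($1$) clause.

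We expect the bookkeeping of resources and the atomic reduction lemma to be the delicate points. Our plan is to state that lemma before the case analysis and verify it connective by connective. Crucially, no case uses weakening or contraction, which is what yields the modularity claimed in the paper.
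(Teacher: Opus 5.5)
Your right-rule cases and your treatment of weakening and contraction through the structural theorems are fine and match the paper. The left-rule cases, however, rest on your ``atomic reduction'' lemma, and that lemma is false whenever $\Delta$ has more than one formula. On the left of $\Vdash$ a multiset is read conjunctively: by (Inf), all its members are supplied together. On the right it is read disjunctively. So testing $\Vdash\Delta$ against all $\Sigma_{\mathsf{At}}$ with $\Delta\Vdash\Sigma_{\mathsf{At}}$ does not characterise it.

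Concretely, let $\mathcal{B}$ consist of all instances of the classical structural rules together with the zero-premise rule with conclusion $\Rightarrow p,q$. Then $\Vdash^{\varnothing}_{\mathcal{B}}p,q$ holds. By Proposition~\ref{prop:subs}, $p,q\Vdash^{\varnothing}_{\mathcal{B}}p$ also holds, since $p,q\Rightarrow p$ follows by identity and left weakening. Yet $\nVdash^{\varnothing}_{\mathcal{B}}p$, because every rule of $\mathcal{B}$ preserves truth under the valuation $p\mapsto 0$, $q\mapsto 1$.

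The same obstruction appears in how you intend to use the lemma. To establish its right-hand side for the succedent $\Delta$ of, say, $L\&_1$, you must pass from $A\Vdash\Delta$ and $\Delta\Vdash\Sigma_{\mathsf{At}}$ to $A\Vdash\Sigma_{\mathsf{At}}$. That is a simultaneous cut on all of $\Delta$. Lemma~\ref{lemma:basecut} removes only one formula and leaves the others on the right, and the multicut itself is unsound: the example above is an instance with empty $A$. Since multiple succedents are the normal case in classical sequents, your $L\&$, $L\oplus$, $L\to$, $L\otimes$, $L\parr$, $L\multimap$, $L0$ and $L1$ cases do not go through as planned.

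What you actually need is that each clause stays valid when its quantified atomic $\Sigma_{\mathsf{At}}$ is replaced by an arbitrary multiset of formulas. The paper's own proof uses this without comment: in the $L\&_1$ case it applies the ($\&$) clause directly to the non-atomic $\Delta,\Pi^1,\ldots,\Pi^n$. For instance, you need: if $\Vdash^{\Gamma_{\mathsf{At}}}_{\mathcal{C}}\Lambda,A\& B$ and $A\Vdash^{\Theta_{\mathsf{At}}}_{\mathcal{C}}\Sigma$, then $\Vdash^{\Gamma_{\mathsf{At}}\cup\Theta_{\mathsf{At}}}_{\mathcal{C}}\Lambda,\Sigma$.

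This is proved by induction on the degree of $\Sigma$, exactly as in Lemma~\ref{lemma:non-atomicinference}. Pick a complex $C\in\Sigma$ and unfold $C$'s clause in the goal. Push the resulting test data through $A\Vdash\Sigma$ (via (Inf) and $C$'s clause), so that $C$ is replaced by atoms. Apply the induction hypothesis and refold. For ($\oplus$) and ($\to$) the same test data is pushed through both premises, so they keep sharing resources and context.

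Alternatively, you can rescue your own idea by stating the reduction for a single formula with arbitrary side context:
$\Vdash^{\Gamma_{\mathsf{At}}}_{\mathcal{B}}\Lambda,C$ iff, for all $\mathcal{C}\supseteq\mathcal{B}$ and all $\Theta_{\mathsf{At}},\Sigma_{\mathsf{At}}$, $C\Vdash^{\Theta_{\mathsf{At}}}_{\mathcal{C}}\Sigma_{\mathsf{At}}$ implies $\Vdash^{\Gamma_{\mathsf{At}}\cup\Theta_{\mathsf{At}}}_{\mathcal{C}}\Lambda,\Sigma_{\mathsf{At}}$.
Left to right is Lemma~\ref{lemma:basecut}. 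Right to left holds because the test data of $C$'s clause yields $C\Vdash^{\Theta_{\mathsf{At}}}\Sigma_{\mathsf{At}}$ by that same clause. You would then apply it to the complex formulas of $\Delta$ one at a time, cutting each test into the induction hypothesis separately. With either repair, the rest of your plan coincides with the paper's argument.
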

\begin{proof}
The proof follows the techniques developed in~\cite{DBLP:conf/tableaux/BarrosoNascimentoPP25} and proceeds by induction on the derivation of $\Gamma\Rightarrow\Delta$, analysing the last rule applied. The full proof is given in Appendices~\ref{app:add} for the additive case and~\ref{app:mult} for the multiplicative one. We observe that  Lemma~\ref{lemma:basecut} is key, showing that, in fact, ``cut'' plays a central role in establishing soundness of the logical rules.
\end{proof}

To prove completeness, we follow the script given in~\cite{Sandqvist2015IL}, 
by encoding the relevant sequent calculus within an atomic base. More concretely, given a target sequent $\Gamma\Rightarrow\Delta$, we assign a distinct atom $p^A$ to each of its subformulas $A$, leaving atoms unchanged. We then construct a simulation base $\mathcal U$ containing atomic counterparts of the logical and structural rules. For example, the right introduction rule for additive conjunction is represented by
\[\infer[R \&]{\Gamma_{\At} \Rightarrow \Delta_{\At}, p^{A \& B}}{\Gamma_{\At} \Rightarrow  \Delta_{\At}, p^{A}& \Gamma_{\At} \Rightarrow \Delta_{\At}, p^{B}}\]
Although $p^{A\&B}$ is syntactically an atom, this rule reproduces the inferential relationship between a conjunction and its conjuncts. The corresponding left rules are encoded similarly. Thus, the internal structure of formulas is represented through the rules governing their atomic names.

The key step is to show that these names faithfully simulate the semantic behaviour of the formulas they represent. An induction on formulas establishes that, for every extension $\mathcal B\supseteq\mathcal U$,
\[
\Vdash_{\mathcal B}^{\Omega_{\mathsf{At}}} A,\Theta_{\mathsf{At}}
\quad\text{iff}\quad
\vdash_{\mathcal B}
\Omega_{\mathsf{At}}\Rightarrow p^A,\Theta_{\mathsf{At}},
\]
and, more generally, the analogous equivalence holds for contexts of formulas. Now, if $\Gamma\Vdash\Delta$, this consequence holds in particular over $\mathcal U$. The atomic identities $p^A\Rightarrow p^A$, together with the simulation equivalence, supply support for the assumptions in $\Gamma$. Semantic inference then yields an atomic derivation of $\Gamma_{\mathsf{At}}\Rightarrow\Delta_{\mathsf{At}}$, where these contexts contain the names of the formulas in $\Gamma$ and $\Delta$. Finally, replacing each atomic name by its associated formula transforms this derivation into a sequent calculus derivation of $\Gamma\Rightarrow\Delta$, with any resulting cuts removed by cut admissibility.

A key feature of the present formulation, and completely different from~\cite{Sandqvist2015IL,DBLP:journals/corr/abs-2504-08349}, is that support is indexed by finite multisets of atomic resources, with atomic support defined directly through derivability
\[
\Vdash^{\Gamma_{\At}}_{\B} \Delta_{\At}
\quad\text{iff}\quad
\vdash_{\B} \Gamma_{\At} \Rightarrow \Delta_{\At}.
\]
This direct connection makes resource usage explicit and allows structural behaviour to be controlled through conditions on bases and their admissible extensions, providing the basis for a modular treatment of substructural logics.

\begin{definition}[Atomic mapping] Let $\Sigma$ be a set of formulas. Let $\Sigma_{S}$ be the set of all subformulas of formulas in $\Sigma$. We say that a function $\alpha: \Sigma_{S} \rightarrow \At$ is an \emph{atomic mapping} for $\Sigma$ if (1) $\alpha$ is injective, (2) $\alpha(p) = p$ for $p \in \At$. If $A\notin\At$, we denote $\alpha(A) \eqcolon p^A$.
\end{definition}

\begin{definition}[Simulation base]
Let $\Sigma$ be a set of formulas, and let $\alpha$ be an atomic mapping on $\Sigma$. An {\em additive (resp. multiplicative) simulation base} $\U$, for $\Sigma$ and $\alpha$, is a base that is closed under \init, \Cutc, weakening, and contraction, instantiated for all $p \in \At$, in addition to the simulated sequent calculus rules in Figure~\ref{fig:comp} (resp. Figure~\ref{fig:compmult}), instantiated for all $A, B \in \Sigma$.
\end{definition}

We are now ready to present the completeness proof. The proof of the following result is detailed in Appendix~\ref{app:add}.

\begin{lemma}\label{mainlemmacompletenessadd}
   Let $\Theta_{\At}$ and $\Omega_{\At}$ be any (possibly empty) set of atoms, $\Sigma$ any (non-empty) set containing only subformulas of $\Gamma \cup \Delta$ and $\Sigma_{\At} = \{p^{B} | B \in \Sigma\}$.  Then, for all $\B \supseteq \U$, $\Vdash^{\Omega_{\At}}_{\B} \Sigma, \Theta_{\At}$ if and only if $\vdash_{\B} \ \Omega_{\At}\Rightarrow \Sigma_{\At}, \Theta_{\At}$.
\end{lemma}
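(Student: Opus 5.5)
We proceed by induction on the total complexity of $\Sigma$, i.e. on the sum of the degrees of the formulas in $\Sigma$, proving the equivalence simultaneously for all $\B \supseteq \U$, all $\Omega_{\At}$ and all $\Theta_{\At}$. The statement is a property of a multiset of formulas on the right, so the inductive step picks one non-atomic formula $C \in \Sigma$, writes $\Sigma = \Sigma', C$, and treats the remaining formulas as part of the context. The generalisation over $\Theta_{\At}$ lets the induction hypothesis absorb the already simulated members of $\Sigma'$ as atoms.

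Concretely, we first reduce $\Sigma'$ to atoms, so that it suffices to handle a single formula $C$ with atomic side context $\Theta'_{\At}$. The reduction goes through the clause for $C$: every clause in Figure~\ref{fig:sup} quantifies over atomic $\Sigma_{\At}$ only, so we replace the non-atomic part of $\Delta$ by names when instantiating. The base case, when all formulas are atomic, is immediate from (At), since $p^{q}=q$.

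For the inductive step, take $C = A \& B$ as the template.

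\textbf{($\Leftarrow$)} Assume $\vdash_{\B} \Omega_{\At} \Rightarrow p^{A\&B}, \Theta_{\At}$. Let $\C \supseteq \B$ and suppose $A \Vdash^{\Xi_{\At}}_{\C} \Pi_{\At}$. By the induction hypothesis (with $\Omega=\{p^A\}$, using \init\ in $\U$), $p^A\Rightarrow p^A$ gives support of $A$ with resources $p^A$. By (Inf) we then get $\vdash_{\C} \Xi_{\At}, p^A \Rightarrow \Pi_{\At}$. The simulated $L\&_1$ rule yields $\vdash_{\C}\Xi_{\At},p^{A\&B}\Rightarrow\Pi_{\At}$, and \Cutc\ with the assumption yields $\vdash_{\C}\Omega_{\At},\Xi_{\At}\Rightarrow\Theta_{\At},\Pi_{\At}$, which is the required support by (At).

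\textbf{($\Rightarrow$)} Instantiate the clause at $\C=\B$ with $\Xi_{\At}=\varnothing$ and $\Pi_{\At}=p^{A\&B}$. It suffices to check $A \Vdash^{\varnothing}_{\B} p^{A\&B}$ and $B \Vdash^{\varnothing}_{\B} p^{A\&B}$. For the first, unfold (Inf): given $\Vdash^{\Xi}_{\C} A,\Pi_{\At}$, the induction hypothesis gives $\vdash_{\C}\Xi\Rightarrow p^A,\Pi_{\At}$. The simulated $p^A\Rightarrow p^{A\&B}$ (an instance of $L\&_1$ from \init) together with \Cutc\ gives $\vdash_\C \Xi\Rightarrow p^{A\&B},\Pi_{\At}$. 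The case of $B$ is symmetric.

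The other connectives follow the same pattern: the direction ($\Leftarrow$) uses the simulated left rule plus \Cutc, and ($\Rightarrow$) instantiates the clause with the name $p^C$ as conclusion, discharging the premises by simulated left rules and the induction hypothesis.
\begin{itemize}
\item For $\oplus$, we need the two-premise left rule.
\item For $\to$ and $\multimap$, we need the induction hypothesis on $A$ in succedent position.
\item For $\otimes$ and $\parr$, we need Lemma~\ref{lemma:partialInf} to split the multi-assumption inference $A,B\Vdash$.
\item For $0$, $\top$, $\bot$ and $1$, the steps are trivial or use the zero-level simulated rules.
\end{itemize}

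The main obstacle is the ($\Rightarrow$) direction when $\Sigma$ contains several non-atomic formulas. In that case the "$\Delta$" in the clause is itself non-atomic, so after applying the clause we obtain support of $\Sigma',\Pi_{\At}$ rather than an atomic derivation. The induction must be organised so that this step uses the hypothesis on the strictly smaller multiset $\Sigma'$ (with $p^{C}$ moved into the atomic context). The multiset measure above is chosen precisely so that this is legitimate. We also have to verify that every appeal to \init, \Cutc\ and the structural rules in $\U$ respects the exact resource multiplicities.
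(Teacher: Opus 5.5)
Your overall architecture is the paper's: induction on the degree of $\Sigma$, isolating one non-atomic $C$, and moving names into the atomic context. Your ($\Leftarrow$) direction is also fine; it combines $\Vdash^{p^A}_{\C}A$ (from \init\ and the induction hypothesis) with (Inf), the simulated left rule and \Cutc.

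The ($\Rightarrow$) direction fails as written. Your template asserts $A\Vdash^{\varnothing}_{\B}p^{A\&B}$, using a ``simulated $p^A\Rightarrow p^{A\&B}$'' obtained from \init\ by $L\&_1$. But $L\&_1$ applied to $p^A\Rightarrow p^A$ yields the converse, $p^{A\&B}\Rightarrow p^A$. The sequent $p^A\Rightarrow p^{A\&B}$ is not derivable in $\U$, since replacing names by formulas would give $A\Rightarrow A\&B$. Consequently $A\Vdash^{\varnothing}_{\U}p^{A\&B}$ is false in general: together with $\Vdash^{p^A}_{\U}A$ it would force $\vdash_{\U}p^A\Rightarrow p^{A\&B}$.

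What you wrote is really the $\oplus$ argument; note that the $\&$-clause asks for \emph{either} inference, not both. More fundamentally, the $\&$-clause only delivers conclusions that already follow from $A$ alone or from $B$ alone. So no single instantiation produces $p^{A\&B}$, and your general recipe (``instantiate the clause with $p^C$ as conclusion, discharging the premises by simulated left rules'') cannot work.

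The missing idea is that in ($\Rightarrow$) the name $p^C$ must be assembled by the simulated \emph{right} rule. For $\&$, the steps are:
\begin{enumerate}
\item The induction hypothesis and (Inf) give $A\Vdash^{\varnothing}_{\B}p^A$ and $B\Vdash^{\varnothing}_{\B}p^B$.
\item The clause then yields $\Vdash^{\Omega_{\At}}_{\B}\Sigma',p^A,\Theta_{\At}$ and $\Vdash^{\Omega_{\At}}_{\B}\Sigma',p^B,\Theta_{\At}$.
\item The induction hypothesis on the smaller $\Sigma'$ turns these into $\vdash_{\B}\Omega_{\At}\Rightarrow\Sigma'_{\At},p^A,\Theta_{\At}$ and $\vdash_{\B}\Omega_{\At}\Rightarrow\Sigma'_{\At},p^B,\Theta_{\At}$.
\item The simulated $R\&$, whose premises share exactly these contexts, gives $p^{A\&B}$.
\end{enumerate}
This is what the paper does. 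The same correction is needed for two other connectives. For $\parr$, instantiate the two premises with $p^A$ and $p^B$, then apply $R\parr$. For $\multimap$, use $\Vdash^{p^A}_{\B}A$ and $B\Vdash^{\varnothing}_{\B}p^B$ to get $\Vdash^{\Omega_{\At},p^A}_{\B}\Sigma',\Theta_{\At},p^B$, then apply $R\multimap$. For $\oplus$, $\otimes$ and $1$ you can instantiate with $p^C$ directly, but the premises are discharged by $R\oplus_i$, $R\otimes$ and $R1$, not by left rules.

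Two smaller points. First, the preliminary ``reduction of $\Sigma'$ to atoms'' cannot be done in advance. It is just the induction hypothesis applied once $C$ has been replaced by names, as your last paragraph says. Second, Lemma~\ref{lemma:partialInf} is not needed for $\otimes$, since (Inf) already handles the two assumptions $A,B$ at once.
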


\begin{theorem}[Completeness]\label{thm:additivecompleteness}
If $\Gamma \Vdash \Delta$ then the sequent $\Gamma \Rightarrow \Delta$ is provable in $\ACLp$ and $\MCLp$.
\end{theorem}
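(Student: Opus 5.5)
We follow the script sketched just before Lemma~\ref{mainlemmacompletenessadd}. Suppose $\Gamma \Vdash \Delta$, with $\Gamma = \{A^1,\ldots,A^n\}$, and fix an atomic mapping $\alpha$ for $\Gamma \cup \Delta$. Let $\U$ be the additive (resp.\ multiplicative) simulation base for $\Gamma\cup\Delta$ and $\alpha$. Since $\U$ is closed under \init, \Cutc, weakening and contraction, it is a classical base. By the definition of validity we therefore have $\Gamma \Vdash^{\varnothing}_{\U} \Delta$.

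The first step is to discharge the hypotheses. For each $A^i$, the rule \init\ gives $\vdash_{\U} p^{A^i} \Rightarrow p^{A^i}$. Lemma~\ref{mainlemmacompletenessadd}, with $\Sigma = \{A^i\}$, $\Omega_{\At} = \{p^{A^i}\}$ and $\Theta_{\At} = \varnothing$, turns this into $\Vdash^{p^{A^i}}_{\U} A^i$. We then apply the (Inf) clause, or directly Lemma~\ref{lemma:non-atomicinference}, with $\C = \U$, $\Sigma^i_{\At} = \{p^{A^i}\}$ and $\Pi^i = \varnothing$. This yields $\Vdash^{\Gamma_{\At}}_{\U} \Delta$, where $\Gamma_{\At} = \{p^{A} \mid A \in \Gamma\}$ is taken as a multiset. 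If $\Delta$ is empty, this is already an atomic derivability statement. Otherwise we apply Lemma~\ref{mainlemmacompletenessadd} again, with $\Sigma = \Delta$, $\Omega_{\At} = \Gamma_{\At}$ and $\Theta_{\At} = \varnothing$, to obtain $\vdash_{\U} \Gamma_{\At} \Rightarrow \Delta_{\At}$. Multiplicities need some care here: $\Sigma$ in the lemma is written as a set, so we read it as a multiset, or else use contraction and weakening in $\U$ to adjust occurrences. If $\Gamma$ is empty, we skip the first step and apply the lemma directly.

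The second step translates the atomic derivation back into the sequent calculus. We replace every atom $p^A$ by $A$; atoms in the image of $\alpha$ are fixed, and any atom outside the image is left unchanged. Each rule of $\U$ then becomes a rule instance of $\ACLp$ (resp.\ $\MCLp$). Simulated logical rules become the corresponding logical rules, \init, weakening and contraction become their structural counterparts, and \Cutc\ becomes a cut on a formula. Because $\alpha$ is injective, the translation is well defined and sends $\Gamma_{\At} \Rightarrow \Delta_{\At}$ to exactly $\Gamma \Rightarrow \Delta$. This produces a derivation in the system extended with cut, and cut admissibility for $\ACLp$ and $\MCLp$ (standard for classical sequent calculi with these rules) removes the cuts. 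Since the two calculi are equivalent in the presence of the structural rules, the result holds for both.

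The main obstacle is hidden in Lemma~\ref{mainlemmacompletenessadd}, which is assumed here. Within this proof, the delicate point is checking that every rule of $\U$ translates into a genuine rule instance. The simulated rules have fixed atomic contexts rather than schematic ones, so their contexts must translate into the context-sharing or context-splitting forms demanded by the additive or multiplicative rules. We also have to confirm that the empty-side cases (for example $R1$, $L\bot$, and $\Delta = \varnothing$) match the conventions of the multiplicative clauses for $1$ and $\bot$.
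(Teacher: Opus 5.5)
Your proposal is correct and follows the paper's proof essentially step for step: you discharge each hypothesis via \init\ and Lemma~\ref{mainlemmacompletenessadd} to get $\Vdash^{p^{A^i}}_{\U} A^i$, use (Inf) to reach $\Vdash^{\Gamma_{\At}}_{\U}\Delta$, apply the lemma again to obtain $\vdash_{\U}\Gamma_{\At}\Rightarrow\Delta_{\At}$, and translate back by $p^{C}\mapsto C$. Your extra care about empty $\Delta$, about multiplicities, and about needing cut admissibility (since \Cutc\ steps in $\U$ become cuts, which $\ACLp$ and $\MCLp$ lack as primitive rules) only makes explicit what the paper's short proof leaves implicit; the paper mentions the cut-admissibility point only in its informal overview.
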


\begin{proof}
    Fix a mapping $\alpha$ and a simulation base $\U$. Assume $\Gamma \Vdash \Delta$. Then $\Gamma \Vdash^{\varnothing}_{\U} \Delta$. Let $\Gamma = \{A^{1}, \ldots, A^{n}\}$. For every $1 \leq m \leq n$ we have $\vdash_{\U} p^{A^{m}} \Rightarrow p^{A^{m}}$ by \init. Lemma \ref{mainlemmacompletenessadd} yields $\Vdash^{p^{A^{m}}}_{\U} {A^{m}}$ for all such $m$, whence since $\{ A^{1}, \ldots, A^{n}\} \Vdash_{\U} \Delta$ we conclude $\Vdash^{\Gamma_{\At}}_{\U} \Delta$. Lemma \ref{mainlemmacompletenessadd} then yields $\vdash_{\U} \Gamma_{\At} \Rightarrow \Delta_{\At}$. Given the structure of the simulation base, by picking the deduction with conclusion $\Gamma_{\At} \Rightarrow \Delta_{\At}$ on $\U$ and replacing every atoms with shape $p^{C}$ by $C$ we obtain a deduction showing $\Gamma \Rightarrow \Delta$ in our sequent calculus.
    
\end{proof}

\section{Modularity}\label{sec:modular}
A central feature of the semantics is its modular treatment of substructurality. The logical parts of the soundness and completeness arguments do not rely on weakening or contraction. Instead, Theorems~\ref{thm:contractionleftright} and~\ref{thm:weakeningleftright} establish how these principles at the level of semantic entailment are governed by their atomic counterparts. We can therefore vary the structural requirements on bases and their admissible extensions while retaining the semantic clauses for the connectives. For the linear instances, we adopt the following terminology.

We do not need to change anything in the proofs. Importantly, modularity only holds when we are proving soundness and completeness for sequent systems whose rules match their semantic clauses. Soundness and completeness is still provable for classical logic if we use a sequent calculus with rules for $(\oplus)$ and a semantics with a clause for $(\&)$, but the proof requires structural reasoning. 

Modularity also applies to other types of structural restrictions. In particular, if we combine intuitionistic linear support with intuitionistic bases we obtain a semantics for intuitionistic logic, and if we combine it with intuitionistically linear bases we obtain a semantics for intuitionistic linear logic. This is an immediate consequence of the proofs of soundness, completeness, and the following result, proven in Appendix~\ref{app:single}.

\begin{lemma}\label{lemma:intmod}

If all bases only contain rules with at most one succedent, then for all $\B$, $\Sigma$, $\Gamma_{\At}$ and $\Delta$, if  $\Sigma \Vdash^{\Gamma_{\At}}_{\B} \Delta$ then there is at most one formula in $\Delta$, provided the semantics does not contain $(\bot)$, $(\top)$, $(0)$ and $(1)$.
\end{lemma}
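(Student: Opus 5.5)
The plan is to reduce the statement to a claim about plain support, $\Vdash^{\Gamma_{\At}}_{\B}\Delta$ with an empty antecedent, and prove that claim by induction. The induction uses a single auxiliary device: an \emph{explosive} base $\E$. This is the base containing, as zero-level rules, every atomic sequent $\Gamma_{\At}\Rightarrow\Delta_{\At}$ with $|\Delta_{\At}|\leq 1$. It satisfies the single-succedent restriction. It is also trivially closed under whichever structural schemata are required, since every instance already belongs to it. Hence $\E$ is an admissible extension of any base of the relevant class. Moreover, every extension of $\E$ derives exactly the single-succedent atomic sequents.

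\textbf{Main claim and auxiliary claim.} I would prove the following two statements by simultaneous induction on the sum of the complexities of the formulas in $\Delta$, counting atoms as complexity $0$.
\begin{itemize}
\item[(a)] For every $\B$ and $\Gamma_{\At}$, if $\Vdash^{\Gamma_{\At}}_{\B}\Delta$, then $|\Delta|\leq 1$.
\item[(b)] For every $\C\supseteq\E$ and every $\Gamma_{\At}$, if $|\Delta|\leq 1$, then $\Vdash^{\Gamma_{\At}}_{\C}\Delta$.
\end{itemize}

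\textbf{Atomic case.} For purely atomic $\Delta$, (a) holds because $\vdash_{\B}\Gamma_{\At}\Rightarrow\Delta$ must end in a rule of $\B$, whose conclusion has at most one succedent atom. Claim (b) holds by explosiveness.

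\textbf{Compound case for (a).} Write $\Delta=\Delta',A\circ B$, where $\circ\in\{\with,\oplus,\to,\otimes,\parr,\multimap\}$. I instantiate the universal clause of Figure~\ref{fig:sup} with $\C=\B\cup\E$, with resource multisets $\varnothing$, and with the succedent multisets chosen as a single fresh-or-arbitrary atom $q$. For $\parr$ and $\multimap$ I take $\Sigma_{\At}=\{q\}$ and $\Omega_{\At}=\varnothing$.
\begin{itemize}
\item The hypotheses of the clause have the forms $A\Vdash^{\varnothing}_{\C} q$, $B\Vdash^{\varnothing}_{\C}\varnothing$-or-$q$, and $\Vdash^{\varnothing}_{\C}A,q$ (or $A,B\Vdash_{\C} q$). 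These must be verified in $\C$.
\item Take first the (Inf) form $A\Vdash_{\C}q$. Any premise $\Vdash^{\Sigma'}_{\C'}A,\Pi$ forces $\Pi=\varnothing$ by IH (a), since $A,\Pi$ has smaller complexity. The required $\Vdash_{\C'}q$ with extra resources then holds by explosiveness.
\item For $\to$ and $\multimap$ I need $\Vdash_{\C}A,\Sigma_{\At}$. This is why I take $\Sigma_{\At}$ empty for the $A$-premise there, so that (b) applies to $A$ alone.
\end{itemize}
The clause then yields $\Vdash^{\Gamma_{\At}}_{\C}\Delta',q$ (or $\Delta'$ plus at most one atom). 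This multiset has strictly smaller complexity than $\Delta$, so IH (a) gives $|\Delta'|+1\leq 1$, and hence $|\Delta|\leq 1$.

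\textbf{Compound case for (b).} Claim (b) for $\Delta=A\circ B$ is checked the same way. Under each clause the succedent $\Sigma_{\At}$ produced by the hypotheses has at most one atom, by IH (a) applied to the hypothesis support. The conclusion $\Vdash^{\Gamma_{\At}\cup\Theta_{\At}}_{\C}\Sigma_{\At}$ therefore holds by explosiveness.

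\textbf{Excluded clauses.} The excluded clauses are exactly those where this argument breaks. The clause $(\bot)$ deletes a formula without adding an atom. The clause $(0)$ quantifies over arbitrary $\Sigma_{\At}$ in the same base, with no extension. The clauses $(\top)$ and $(1)$ allow support unconditionally or without any single-atom witness.

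\textbf{Non-empty antecedent.} Now suppose $\Sigma=\{A^1,\ldots,A^n\}$ and $\Sigma\Vdash^{\Gamma_{\At}}_{\B}\Delta$. By Lemma~\ref{lemma:non-atomicinference}, I take $\C=\B\cup\E$ and $\Pi^i=\varnothing$. Claim (b) supplies $\Vdash^{\varnothing}_{\C}A^i$. The inference then gives $\Vdash^{\Gamma_{\At}}_{\C}\Delta$, and claim (a) yields $|\Delta|\leq 1$.

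\textbf{Main obstacle.} I expect the hardest step to be making the mutual induction well-founded. Specifically, in each clause the hypotheses must only ever be discharged using (a) and (b) on strictly smaller multisets. The delicate cases are $\to$ and $\multimap$, where the choice of which side receives $q$ matters. Checking that $\B\cup\E$ is a legitimate extension in each class of bases is routine but necessary.
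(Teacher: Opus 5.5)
Your argument is sound for $\with,\oplus,\otimes,\multimap$. Two of the cases you claim, however, do not go through, and they cannot be repaired, because the statement itself is false for those clauses.

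\textbf{The $(\to)$ case of (a).} In the $(\to)$ clause a single $\Sigma_{\At}$ occurs in both premises, $\Vdash^{\Theta_{\At}}_{\C}A,\Sigma_{\At}$ and $B\Vdash^{\Theta_{\At}}_{\C}\Sigma_{\At}$. IH (a) forces $\Sigma_{\At}=\varnothing$ in the first premise. The clause then only returns $\Vdash^{\Gamma_{\At}}_{\C}\Delta'$, which gives $|\Delta'|\leq 1$, not $\Delta'=\varnothing$. No other choice helps: $\Vdash^{\varnothing}_{\E} r,p\to q$ already holds in your explosive base. For $\Sigma_{\At}=\varnothing$ the conclusion $\Theta_{\At}\Rightarrow r$ is a zero-level rule of $\E$, and for $\Sigma_{\At}\neq\varnothing$ the premise $\Theta_{\At}\Rightarrow p,\Sigma_{\At}$ is underivable.

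\textbf{The $(\parr)$ case of (b).} Your check bounds $\Sigma_{\At}$ and $\Omega_{\At}$ separately, but the conclusion needs $\Sigma_{\At},\Omega_{\At}$ together to have at most one atom. Take $\Sigma_{\At}=\{s\}$ and $\Omega_{\At}=\{t\}$. Then $p\Vdash^{\varnothing}_{\C}s$ and $q\Vdash^{\varnothing}_{\C}t$ hold in every $\C\supseteq\E$, while $\Rightarrow s,t$ is underivable. Since every $\B$ has the extension $\B\cup\E$, $\Vdash^{\Theta_{\At}}_{\C}p\parr q,\Sigma_{\At}$ holds for no $\C$, $\Theta_{\At}$, $\Sigma_{\At}$. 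Hence both $p\parr q\Vdash^{\varnothing}_{\B}r,s$ and $\Vdash^{\varnothing}_{\B}r,(p\parr q)\multimap s$ hold vacuously. So your $\multimap$ step and your non-empty-antecedent step, which both rely on (b), fail once $\parr$ is in the language.

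The lemma must therefore be read with $(\to)$ and $(\parr)$ excluded as well. This is also how the paper uses it: intuitionistic and intuitionistic linear support drop both clauses.

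\textbf{Repairs within $\{\with,\oplus,\otimes,\multimap\}$.} With these connectives your proof is correct after two small fixes.
\begin{itemize}
\item For $\multimap$ the witness atom must go on the $B$-side, that is $\Sigma_{\At}=\varnothing$ and $\Omega_{\At}=\{q\}$. Your first sentence has this reversed.
\item In (b) for $\with$, $\oplus$ and $\otimes$, the bound on $\Sigma_{\At}$ does not come from (a) directly. First use IH (b) to obtain $\Vdash^{\varnothing}_{\C'}A$, then discharge the inference, and then apply the atomic case.
\end{itemize}
Also, $\E$ should literally contain the structural rule instances to count as closed under them. This changes nothing.

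\textbf{Comparison with the paper.} The paper runs the same induction on the degree of $\Delta$, but with a local explosion: it adds $\Omega_{\At}\Rightarrow p$ for a single atom $p$. It writes out only $\oplus$ and calls the remaining connectives analogous. In the non-empty-antecedent step it picks an extension in which each $C^i$ is supported without showing that one exists, which is exactly where $\parr$ breaks the claim. Your global explosive base together with claim (b) supplies that existence, and also the premise $\Vdash A$ needed for $\multimap$. For the correctly restricted lemma, your proof is the more complete one.
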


The clauses for units therefore have to be manually restricted, although for the restrictions on bases spread to all others.

\begin{definition}
Linear support is defined using all clauses in Figure~\ref{fig:sup} except $(\to)$. Intuitionistic support is defined using the same clauses, excluding $(\to)$, $(\parr)$ and requiring the $\Delta$ in $(\bot)$, $(\top)$, $(0)$ and $(1)$ to be empty. Intuitionistic linear support uses the clauses of intuitionistic support minus $(\bot)$. 
\end{definition}
\begin{corollary}
If bases are 
\begin{itemize}
\item classically linear, then linear support is sound and complete with respect to $\MALL$;
\item intuitionistically linear, then intuitionistic linear support is sound and complete with respect to $\iMALL$;
\item intuitionistic, the intuitionistic linear support is sound and complete with respect to $\IPL$.
\end{itemize}
\end{corollary}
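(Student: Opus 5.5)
The corollary is obtained by instantiating the soundness and completeness arguments of Theorems~\ref{lemma:classicalsoundnessadd} and~\ref{thm:additivecompleteness} in each of the three settings. The key observation is that the inductive cases for the logical rules never invoke weakening or contraction, so they remain valid unchanged. The only parts that depend on the class of bases are the structural rules, the identity axiom and cut.

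\textbf{Soundness.} For each item I would proceed by induction on the derivation, as in Appendices~\ref{app:add} and~\ref{app:mult}.
\begin{itemize}
\item The cases for $\&,\oplus,\otimes,\parr,\multimap$ and the units are copied verbatim. They rely only on Lemma~\ref{lemma:basecut}, Lemma~\ref{lemma:partialInf}, Lemma~\ref{lemma:non-atomicinference} and monotonicity, none of which uses structural closure of bases.
\item For \init\ I would use Proposition~\ref{prop:subs}, which requires only closure under \init\ and \Cutc. In the intuitionistic settings I would replace \Cutc\ by \Cuti. I would check that the ``if'' direction of Proposition~\ref{prop:subs} goes through with single succedents, because the cut formula then always occupies the unique succedent slot.
\item For $\MALL$ and $\iMALL$ there are no structural rules left to verify.
\item For $\IPL$ with intuitionistic bases, I would invoke Corollary~\ref{cor:leftrightcontraction} and the weakening corollary for left contraction and weakening. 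For right weakening, the single-succedent version $\Gamma\Rightarrow\varnothing$ to $\Gamma\Rightarrow A$ should follow from closure under \RW\ through Theorem~\ref{thm:weakeningleftright}, restricted to empty $\Delta$. I would check that the restriction of the unit clauses to empty $\Delta$ in intuitionistic support is precisely what keeps these cases inside the single-conclusion fragment.
\item For $\IPL$, additive $\to$ is excluded and implication is read as $\multimap$. With weakening and contraction available, $\multimap$ and $\otimes$ collapse to the usual intuitionistic connectives, so the multiplicative soundness cases suffice.
\end{itemize}

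\textbf{Completeness.} I would rebuild the simulation base $\U$ without the structural rules (the linear cases) or with single-succedent restrictions (the intuitionistic cases), and rerun Lemma~\ref{mainlemmacompletenessadd} and the proof of Theorem~\ref{thm:additivecompleteness}. Two checks are needed.
\begin{itemize}
\item $\U$ must belong to the required class. This is immediate for the linear cases. In the intuitionistic cases the simulated rules must be taken in their single-succedent form, so every atomic premise and conclusion has at most one atom on the right.
\item The inductive proof of Lemma~\ref{mainlemmacompletenessadd} must be re-examined where it appeals to contexts $\Theta_{\At}$ on the right. In the intuitionistic settings Lemma~\ref{lemma:intmod} guarantees that at most one formula appears in any supported succedent. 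Hence $\Theta_{\At}$ is empty whenever $\Sigma$ is nonempty, and the equivalences specialise correctly.
\end{itemize}
The final translation, replacing each $p^C$ by $C$, then produces a derivation in the corresponding calculus, with cuts removed by cut elimination for $\MALL$, $\iMALL$ and $\IPL$.

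\textbf{Main obstacle.} I expect the hardest step to be completeness in the single-succedent cases. The clauses in Figure~\ref{fig:sup} quantify over arbitrary $\Sigma_{\At}$ on the right, and in an intuitionistic base these are implicitly forced to have size at most one. The left-rule-driven clauses (for $\oplus$, $\multimap$ and the units) must still be simulated faithfully under this restriction. My plan is to prove a strengthened form of Lemma~\ref{mainlemmacompletenessadd} by induction on formulas, keeping $|\Sigma_{\At}|\le 1$ as an explicit invariant. In the $\oplus$ and $0$ cases I would instantiate the quantified succedent with the single atom $p^{A\oplus B}$ (respectively an arbitrary atom), using the simulated left rules of $\U$ to discharge the hypotheses.
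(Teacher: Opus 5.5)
Your route is the paper's. It reruns the soundness and completeness arguments unchanged, delegates structural behaviour to Theorems~\ref{thm:contractionleftright} and~\ref{thm:weakeningleftright}, and handles the single-succedent restriction through Lemma~\ref{lemma:intmod} and single-succedent simulation bases. Two smaller corrections first.

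\textbf{Identity.} This case needs no closure condition on bases. $A \Vdash^{\varnothing}_{\B} A$ holds in every base directly from (Inf) and Lemma~\ref{lemma:non-atomicinference}. Proposition~\ref{prop:subs} only concerns atomic antecedents and succedents, so it would not cover non-atomic identities anyway.

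\textbf{Structural corollaries.} Intuitionistic bases cannot be closed under \RC\ (its premise has two succedent atoms), nor under unrestricted \RW. So Corollary~\ref{cor:leftrightcontraction} and the weakening corollary do not apply as stated. Apply Theorem~\ref{thm:contractionleftright} directly: its right-hand atomic condition holds vacuously, since no sequent with two succedent atoms is derivable. For weakening, use a single-succedent variant in which the added succedent context $\Lambda_{\At}$ is empty by Lemma~\ref{lemma:intmod}. The unrestricted right-weakening conclusion is in fact false here: $p \Vdash^{\varnothing}_{\B} p$ holds but $p \Vdash^{\varnothing}_{\B} p, q$ does not.

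\textbf{The genuine gap.} Your claim that the quantified $\Sigma_{\At}$ is ``implicitly forced'' to have at most one atom fails for $(0)$, where $\Sigma_{\At}$ occurs only in the conclusion of the clause.
\begin{itemize}
\item With $\Delta$ empty, $\Vdash^{\Gamma_{\At}}_{\B} 0$ requires $\vdash_{\B} \Gamma_{\At},\Theta_{\At} \Rightarrow \Sigma_{\At}$ for every $\Sigma_{\At}$, including $\{q,r\}$. That sequent is never derivable in an intuitionistic base, so $0$ is supported nowhere.
\item Hence $0 \Vdash^{\Pi_{\At}}_{\C} \Omega_{\At}$ holds vacuously for every $\Omega_{\At}$.
\item The $(\multimap)$ clause for $p \multimap 0$ then fails at the extension adding $\Rightarrow p$ with $\Omega_{\At}=\{q,r\}$. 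So $\neg p$ is never supported, and $\neg p \Vdash q$ is valid although $\neg p \Rightarrow q$ is not derivable in $\IPL$.
\item Correspondingly, the ($\Leftarrow$) direction of Lemma~\ref{mainlemmacompletenessadd} fails at $0$: $\vdash_{\U} p^{0} \Rightarrow p^{0}$ does not yield $\Vdash^{p^{0}}_{\U} 0$.
\end{itemize}

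No invariant inside the induction of the lemma can repair this, because the defect is in the clause itself. The bound $|\Sigma_{\At}| \leq 1$ must be imposed on the quantifiers of the semantic clauses: at least in $(0)$, and uniformly as in the paper's remark after the corollary (``requiring $\Sigma_{\At}$ to contain at most one formula''). Once that restriction is built into the definition, your plan for the $0$ case goes through. In the ($\Rightarrow$) direction instantiate $\Sigma_{\At}=\{p^{0}\}$, not an arbitrary atom. In the ($\Leftarrow$) direction cut against the single-succedent instances of $L0$.
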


The results we present for classical and intuitionistic logic can also be combined in an interesting way with those in \cite{DBLP:journals/corr/abs-2504-08349}. Here we show that, by defining semantic clauses as in Figure \ref{fig:sup} and quantifying over arbitrary sets of atoms $\Sigma_{\At}$, we obtain either classical or linear logic, depending on whether bases contain structural rules or not. If we restrict bases and all semantic clauses by requiring $\Sigma_{\At}$ to contain at most one formula --- therefore quantifying only over singletons and the empty set --- we obtain either intuitionistic linear logic or intuitionistic logic. A direct consequence of the results in \cite{DBLP:journals/corr/abs-2504-08349} and our clause for $\bot$ is that, if we further restrict $\Sigma_{\At}$ and require it to be empty, the semantics becomes classical or linear again, provided the atomic support clause is in elimination form when defined for non-empty sets.

By elimination form atomic support clause we mean the following:

\medskip

\noindent $(\mathsf{EAt})$ $\Vdash^{\Gamma_{\At}}_{\B} p$ iff, for all $\C \supseteq \B$ and all $\Sigma_{\At}, \Theta_{\At}$, if $\vdash_{\C} p, \Theta_{\At} \Rightarrow \Sigma_{\At}$ then $\vdash_{\C} \Gamma_{\At}, \Theta_{\At} \Rightarrow \Sigma_{\At}$.

\medskip

This clause has been studied in detail in \cite[pg. 7]{DBLP:journals/corr/abs-2504-08349}, which uses (sequent-style) natural deduction. Their Proposition 3.8 shows that, in the presence of \Cuti, $(\At)$ and $(\mathsf{EAt})$ are equivalent, but in its absence they are not. Unlike there, our clause allows $\Sigma_{\At}$ to be empty, but this does not impact the proof.

The clause encodes the cut principle in the definition of atomic support, and it cannot be defined for $\{\varnothing\}$ because the empty set contains no formula we could cut. Therefore, $\Vdash^{\Gamma_{\At}}_{\B} \varnothing$ would still have to be defined with $(\At)$. This also explains why no obvious elimination-based clause can be provided for $\bot$. Support for the empty set cannot be defined in elimination format, and it is trivial to prove $\Vdash^{\Gamma_{\At}}_{\B} \varnothing$ iff $\Vdash^{\Gamma_{\At}}_{\B} \bot$ iff $\vdash_{\B} \Gamma_{\At} \Rightarrow \varnothing$. On the other hand, support for $\{\varnothing \}$ cannot be conflated with support for $\bot$ because the former, being a set of atoms, is quantified over in semantic clauses, but not the latter.

Now suppose that support is defined through $(\mathsf{EAt})$ for non-empty sets and $(\At)$ for empty sets. The equivalency of those clauses would make it so that this semantics is equivalent to the ones we present for classical, linear, intuitionistic, and intuitionistic linear logic. Therefore, in those semantics, by restricting quantification over arbitrary sets $\Gamma_{\At}$ in the semantic clauses to sets with at most one formula, the semantics becomes one for either intuitionistic or intuitionistic linear logic. If we go further and restrict quantification to the empty set (therefore it is no longer a quantification at all, just like in \cite{DBLP:journals/corr/abs-2504-08349}), due to the shape of the results in \cite{DBLP:journals/corr/abs-2504-08349}, the semantics becomes either classical or linear again, as support of $\varnothing$ is equivalent to support of $\bot$. Therefore, by restricting quantification on the semantics of classical linear logic we obtain intuitionistic linear logic, and by restricting it again we obtain a different semantics for classical linear logic. Likewise, a restriction of the classical semantics yields intuitionistic semantics, and a restriction of the intuitionistic semantics yield a different classical semantics.

This provides interesting insights on the topic of whether intuitionistic logic is a restriction or an extension of classical logic  \cite{LegrisMolina2001}. The unorthodox view that intuitionistic logic contains classical logic as a proper subpart was advocated by \L ukasiewicz \cite{lukasiewicz1952intuitionistic}, but the standard view is that classical logic contains intuitionistic logic as a proper subpart \cite{sep-logic-intuitionistic}. Our semantics provides substance to the claim that both viewpoints are correct and provides insights on why this is so: the empty set behaves like $\bot$, singleton sets behave like intuitionistic formulas and formulas on bigger sets behave like classical formulas. Hence the intuitionistic semantics stands at the midpoint between two distinct classical semantics.

\section{Intermediate logics and modular completeness proofs}\label{sec:jankov}

Modularity becomes more nuanced when we consider intermediate logics. This is not surprising, as most intermediate logics do not have cut-free sequent calculi and cannot be obtained by including only atomic instances of axioms~\cite{DBLP:journals/aml/DyckhoffN12,Ciabattoni2013}. Even if inclusion of atomic rules ends up producing the desired semantic behaviour, there is no guarantee that soundness of the added axioms can be proved inductively. We will show that, surprisingly, completeness, on the other hand, can be proved modularly in at least two ways, provided we find a base with suitable semantic behaviour.
All proofs of this section are contained in Appendix~\ref{app:jankov}.

As a case study for soundness and (modular) completeness, we analyse Jankov logic, which extends intuitionistic logic with weak excluded middle, $\neg A \oplus \neg\neg A$. Its atomic instances suffice to derive the full axiom schema~\cite{Corsi2008,Hosoi1986}, so we use the following presentation.

\begin{definition}
   Let $\Int$ be the sequent system with succedents restricted to at most one formula, containing the structural rules LW and CW, \Cuti, and the rules for $\multimap$, $\bot$, $\oplus$, and $\otimes$ of Figure~\ref{fig:CLp}, in addition to a version of \init\ allowing arbitrary contexts on the left. Then the system $\JK$ is obtained by adding to $\Int$ all instances of the following rule 

\begin{prooftree}
    \AxiomC{}
    \RightLabel{$\sf{AxJK}$}
    \UnaryInfC{$\Rightarrow \neg p \oplus \neg \neg p$}
\end{prooftree}
\end{definition}

For the completeness argument using prime theories, we also extend syntactic consequence to possibly infinite sets of premises.

\begin{definition}
  For possibly infinite $\Gamma$,  $\Gamma \vdash_{\Int} \Delta$ holds iff there is a finite $\Theta \subseteq \Gamma$ such that $\Theta \Rightarrow \Delta$ is derivable in $\Int$. Likewise, $\Gamma \vdash_{\JK} \Delta$ holds iff there is a finite $\Theta \subseteq \Gamma$ such that $\Theta \Rightarrow \Delta$ is derivable in $\JK$.
\end{definition}

The key to modular completeness is that derivations in an axiomatic extension can be represented by intuitionistic derivations with finitely many additional assumptions. For Jankov logic, these assumptions are instances of weak excluded middle.

\begin{definition}
    A \emph{Jankov set} is a set of formulas of shape $\neg A \oplus \neg \neg A$. An \emph{atomic Jankov set} is a set of formulas of shape $\neg p  \oplus \neg \neg p$, for $p \in \At$.
\end{definition}

\begin{lemma}\label{lemma:derivJankov}
    The sequent $\Gamma \Rightarrow \Delta$ is derivable in $\JK$ iff there is a finite atomic Jankov set $\Theta$ such that $\Theta, \Gamma \Rightarrow \Delta$ is derivable in $\Int$.
\end{lemma}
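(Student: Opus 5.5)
We prove the two directions separately. The right-to-left direction is the easier one. Suppose $\Theta = \{\neg p_1 \oplus \neg\neg p_1, \ldots, \neg p_k \oplus \neg\neg p_k\}$ is a finite atomic Jankov set and $\Theta, \Gamma \Rightarrow \Delta$ is derivable in $\Int$. Since $\Int \subseteq \JK$, the same derivation is a $\JK$ derivation. For each $i$ we have $\Rightarrow \neg p_i \oplus \neg\neg p_i$ by $\sf{AxJK}$. We then apply \Cuti\ $k$ times, each time cutting the axiom $\Rightarrow \neg p_i \oplus \neg\neg p_i$ against the current sequent $\neg p_i \oplus \neg\neg p_i, \ldots, \Gamma \Rightarrow \Delta$. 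The left context of the axiom is empty, so each cut simply removes one member of $\Theta$, and after $k$ cuts we obtain $\Gamma \Rightarrow \Delta$. Succedents stay of size at most one throughout, so the single-succedent restriction is respected.

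For the left-to-right direction we argue by induction on the height of the $\JK$ derivation of $\Gamma \Rightarrow \Delta$, proving the stronger invariant that every sequent in the derivation has the required form.
\begin{itemize}
\item \emph{Base cases.} If the last rule is $\sf{AxJK}$, concluding $\Rightarrow \neg p \oplus \neg\neg p$, take $\Theta = \{\neg p \oplus \neg\neg p\}$. Then $\Theta \Rightarrow \neg p \oplus \neg\neg p$ is an instance of \init, which $\Int$ allows with arbitrary left contexts. Every other zero-premise rule (\init, $L\bot$, and so on) is already in $\Int$, so $\Theta = \varnothing$ works.
\item \emph{Inductive step.} Let the last rule have premises $\Gamma_j \Rightarrow \Delta_j$. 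The induction hypothesis gives finite atomic Jankov sets $\Theta_j$ with $\Theta_j, \Gamma_j \Rightarrow \Delta_j$ derivable in $\Int$. Set $\Theta = \bigcup_j \Theta_j$, as a multiset sum. We then reapply the same rule with $\Theta$ carried along as an extra left context.
\end{itemize}

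Carrying $\Theta$ along is where the care lies, and I expect it to be the main (though routine) obstacle. We go through the rules one at a time.
\begin{itemize}
\item \emph{Context-sharing rules} (additive rules such as $L\oplus$, and those with identical contexts in the premises). First use LW to bring every premise up to the common context $\Theta, \Gamma_j$, then apply the rule.
\item \emph{Context-splitting rules} ($R\otimes$, $L\multimap$, \Cuti). The union of the $\Theta_j$ appears automatically in the conclusion, because the left contexts of the premises are concatenated.
\item \emph{Rules that require an empty left context} (for instance $L\bot$, or $R1$-like rules if present). These are zero-premise rules, so they are covered by the base case.
\item \emph{Structural rules} LW and LC. These commute with the extra context, since they act only on the displayed formulas.
\end{itemize}
Here we use the fact that, because $\Int$ has LW, any assumption in $\Theta$ may be freely weakened in. Having LC available would even let us merge duplicate Jankov assumptions, though we do not need this. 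Because all the rules of $\Int$ are left-context-preserving in this sense, the invariant passes through every rule. At the root it gives the desired finite atomic Jankov set $\Theta$ with $\Theta, \Gamma \Rightarrow \Delta$ derivable in $\Int$.
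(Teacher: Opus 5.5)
Your proof is correct and follows the paper's argument. Right to left, you cut the $\sf{AxJK}$ axioms against the $\Int$ derivation; left to right, you replace each $\sf{AxJK}$ leaf by the identity axiom $\neg p \oplus \neg\neg p \Rightarrow \neg p \oplus \neg\neg p$ and carry that formula down the left contexts. Your rule-by-rule induction is slightly more careful than the paper's sketch: it makes explicit the LW step needed to equalise contexts at additive rules such as $L\oplus$, and LC would remove duplicates if $\Theta$ must be a set rather than a multiset.
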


The reduction of derivability in an axiomatic extension to intuitionistic derivability with finitely many additional axiom assumptions is general. For Jankov logic, atomic instances suffice~\cite[Lemma 4]{Hosoi1986}, yielding the following equivalent formulation and its corollary.

\begin{lemma}\label{lemma:finite}
    $\Gamma \Rightarrow \Delta$ is derivable in $\JK$ iff $\Theta, \Gamma \Rightarrow \Delta$ is derivable in $\Int$ for a finite Jankov set $\Theta$.
\end{lemma}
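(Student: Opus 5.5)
We prove the two directions of Lemma~\ref{lemma:finite} separately, reducing both to Lemma~\ref{lemma:derivJankov} and to the fact that every instance of weak excluded middle is already derivable in $\JK$.

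\emph{Right-to-left.} Assume $\Theta, \Gamma \Rightarrow \Delta$ is derivable in $\Int$, where $\Theta = \{\neg A^1 \oplus \neg\neg A^1, \ldots, \neg A^k \oplus \neg\neg A^k\}$ is a finite Jankov set. Since $\Int \subseteq \JK$, the same derivation lives in $\JK$.
\begin{itemize}
\item By \cite[Lemma 4]{Hosoi1986}, each $\Rightarrow \neg A^i \oplus \neg\neg A^i$ is derivable in $\JK$ from the atomic instances $\sf{AxJK}$.
\item If we prefer not to rely on that citation, an induction on $A^i$ inside $\JK$ also works.
\item We then cut away the members of $\Theta$ one at a time with \Cuti. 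Each cut has an empty left context and a single-formula succedent, so the intuitionistic restriction on succedents is respected.
\end{itemize}
Since \Cuti\ is a rule of $\Int$ and hence of $\JK$, this yields $\Gamma \Rightarrow \Delta$ in $\JK$. Contraction is never needed here, because the cuts simply remove the occurrences of $\Theta$.

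\emph{Left-to-right.} Suppose $\Gamma \Rightarrow \Delta$ is derivable in $\JK$. Lemma~\ref{lemma:derivJankov} gives a finite atomic Jankov set $\Theta$ such that $\Theta, \Gamma \Rightarrow \Delta$ is derivable in $\Int$. Every atomic Jankov set is in particular a Jankov set, with $A := p$, so we are done.

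If Lemma~\ref{lemma:derivJankov} is not assumed, we would instead prove the direction directly. We induct on the $\JK$ derivation, replacing each leaf $\sf{AxJK}$ for $p$ by an \init\ instance $\neg p\oplus\neg\neg p \Rightarrow \neg p \oplus \neg\neg p$ and carrying the new antecedent formula downward.
\begin{itemize}
\item \textbf{Unary and right rules} commute with the extra left formula, since the left context is arbitrary.
\item \textbf{Rules with a shared left context} ($L\oplus$, \Cuti, $L\multimap$) are the delicate case. The two premises may have acquired different assumption sets $\Theta_1$ and $\Theta_2$. We first weaken both premises with LW up to $\Theta_1 \cup \Theta_2$, apply the rule, and finally use left contraction to merge duplicates.
\end{itemize}

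The main obstacle is exactly this bookkeeping of the collected assumptions. Because $\Int$ is closed under LW and left contraction, it goes through, but it must be checked rule by rule. The right-to-left direction is routine once derivability of the full schema in $\JK$ is granted.
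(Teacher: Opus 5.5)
Your proof is correct and follows the route the paper indicates; the paper gives no separate proof, only the remark before the lemma. Left-to-right is Lemma~\ref{lemma:derivJankov} together with the observation that an atomic Jankov set is a Jankov set, and right-to-left uses Hosoi's result~\cite[Lemma 4]{Hosoi1986} that the atomic instances of $\sf{AxJK}$ derive every $\neg A \oplus \neg\neg A$, followed by \Cuti. One small correction to your optional direct argument: in $\Int$ only $L\oplus$ shares its left context, whereas \Cuti\ and $L\multimap$ split it, so the weakening step is needed only for $L\oplus$ (your procedure still works for the others, just with superfluous LW and LC steps).
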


\begin{corollary}\label{cor:Jankovcor}
    There is a finite Jankov set $\Theta$ such that $\Theta, \Gamma \Rightarrow \Delta$ is derivable in $\Int$ iff there is a finite Jankov atomic set $\Sigma$ such that $\Sigma, \Gamma \Rightarrow \Delta$ is derivable in $\Int$.
\end{corollary}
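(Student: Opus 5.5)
The plan is to obtain the corollary by chaining Lemma~\ref{lemma:derivJankov} and Lemma~\ref{lemma:finite}, with $\JK$-derivability of $\Gamma \Rightarrow \Delta$ serving as the pivot. Each of the two sides of the biconditional is, by one of these lemmas, equivalent to the derivability of $\Gamma \Rightarrow \Delta$ in $\JK$. So the two sides are equivalent to each other.

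\emph{Left to right.} Suppose $\Theta, \Gamma \Rightarrow \Delta$ is derivable in $\Int$ for some finite Jankov set $\Theta$. The right-to-left direction of Lemma~\ref{lemma:finite} then makes $\Gamma \Rightarrow \Delta$ derivable in $\JK$. Should a direct justification be wanted, it is routine. Every element $\neg A \oplus \neg\neg A$ of $\Theta$ is $\JK$-derivable, since the atomic instances of $\sf{AxJK}$ yield the full schema~\cite{Hosoi1986,Corsi2008}. Finitely many applications of \Cuti\ against these derivations then remove $\Theta$. This is legitimate because the succedent of each cut premise is a single formula, as \Cuti\ requires. The left-to-right direction of Lemma~\ref{lemma:derivJankov} now supplies a finite atomic Jankov set $\Sigma$ such that $\Sigma, \Gamma \Rightarrow \Delta$ is derivable in $\Int$.

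\emph{Right to left.} This direction is immediate. Every atomic Jankov set is in particular a Jankov set, since $\neg p \oplus \neg\neg p$ has the form $\neg A \oplus \neg\neg A$ with $A = p$. So we may simply take $\Theta := \Sigma$. Alternatively, one can route through $\JK$ using the two lemmas in the opposite order.

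No step is a genuine obstacle: all the mathematical content sits in Lemma~\ref{lemma:finite}, that is, in Hosoi's result that atomic instances of weak excluded middle suffice. The only point needing care is the left-to-right direction. If one avoids citing Lemma~\ref{lemma:finite}, one must check that the cuts eliminating non-atomic Jankov assumptions stay within the single-succedent restriction of $\Int$ extended by $\sf{AxJK}$. They do, because each cut formula appears as the sole succedent of its left premise.
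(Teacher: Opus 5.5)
Your proposal is correct and follows the paper's route: the corollary is obtained by chaining Lemma~\ref{lemma:derivJankov} and Lemma~\ref{lemma:finite} through the $\JK$-derivability of $\Gamma \Rightarrow \Delta$. The right-to-left direction is trivial, as you note, because every atomic Jankov set is already a Jankov set.
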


These results allow us to establish soundness without a separate induction on $\JK$ derivations. Although atomic instances suffice to axiomatise the logic, the completeness constructions below use Jankov formulas of arbitrary complexity.

We now define two classes of bases for Jankov logic, retaining the intuitionistic clauses for support. We first introduce notation for atomic rules and their closure under additional left contexts.

\begin{definition}
Let
\begin{itemize}
\item     $[\Gamma^{1}_{\At} \Rightarrow \Delta^{1}_{\At}, \ldots , \Gamma^{n}_{\At} \Rightarrow \Delta^{n}_{\At} / \Theta_{\At} \Rightarrow \Sigma_{\At}]$ be a rule with premises $\Gamma^{1}_{\At} \Rightarrow \Delta^{1}_{\At}, \ldots, \Gamma^{n}_{\At} \Rightarrow \Delta^{n}_{\At}$ and conclusion $\Theta_{\At} \Rightarrow \Sigma_{\At}$.
\item     $\{[\mathfrak{C}^{1}, \Gamma^{1}_{\At} \Rightarrow \Delta^{1}_{\At}, \ldots , \mathfrak{C}^{n}_{\At}, \Gamma^{n}_{\At} \Rightarrow \Delta^{n}_{\At} / \mathfrak{C}^{1}, \ldots, \mathfrak{C}^{n}, \Theta_{\At} \Rightarrow \Sigma_{\At}]\}$ be the set of all rules  $\{[\Omega^{1}_{\At}, \Gamma^{1}_{\At} \Rightarrow \Delta^{1}_{\At}, \ldots , \Omega^{n}_{\At}, \Gamma^{n}_{\At} \Rightarrow \Delta^{n}_{\At} / \Omega^{1}_{\At}, \ldots, \Omega^{n}_{\At}, \Theta_{\At} \Rightarrow \Sigma_{\At}]\}$, where $\Omega_{\At}$ is an arbitrary set of atoms.
\end{itemize}
Then
   $\B$ is said to be \emph{cumulative} if  $[\Gamma^{1}_{\At} \Rightarrow \Delta^{1}_{\At}, \ldots , \Gamma^{n}_{\At} \Rightarrow \Delta^{n}_{\At} / \Theta_{\At} \Rightarrow \Sigma_{\At}] \in \B$ implies  $\{[\mathfrak{C}^{1}, \Gamma^{1}_{\At} \Rightarrow \Delta^{1}_{\At}, \ldots , \mathfrak{C}^{n}_{\At}, \Gamma^{n}_{\At} \Rightarrow \Delta^{n}_{\At} / \mathfrak{C}^{1}, \ldots, \mathfrak{C}^{n}, \Theta_{\At} \Rightarrow \Sigma_{\At}]\} \subseteq \B$.
\end{definition}

Cumulativity allows additional atomic assumptions to be carried through rule applications. It makes atomic derivability behave as in~\cite[Definition 3]{DBLP:conf/tableaux/BarrosoNascimentoPP25} and supports a corresponding composition argument (cf.~\cite[Lemma 5]{DBLP:conf/tableaux/BarrosoNascimentoPP25}).

\begin{definition}
Let   $\B^{\neg} = (\B \cup \{ [/p \Rightarrow \varnothing] \}) $ and $\B^{\neg \neg} = (\B \cup \{ [\mathfrak{C}, p \Rightarrow \varnothing / \mathfrak
   {C}\Rightarrow \varnothing ]\})$ for every $p\in\At$.

    An intuitionistic base $\B$ is 
    \begin{itemize}
    \item \emph{weakly Jankov} if it is cumulative and, for all $\C \supseteq \B$, all $\Gamma_{\At}$, and all $\Delta_{\At}$ with at most one formula, if $\vdash_{\C^{\neg}} \Gamma _{\At}\Rightarrow \Delta_{\At}$ and $\vdash_{\C^{\neg \neg} } \Gamma_{\At} \Rightarrow\Delta_{\At}$ then $\vdash_{\C} \Gamma_{\At} \Rightarrow \Delta_{\At}$.
\item \emph{strongly Jankov} if either $\vdash_{\B} p \Rightarrow \varnothing$ or, for all $\C \supseteq \B$ and all $\Gamma_{\At}$, if $\vdash_{\C} \Gamma_{\At}, p \Rightarrow \varnothing$ then $\vdash_{\C} \Gamma_{\At} \Rightarrow \varnothing$.
\end{itemize}

$\Gamma \Vdash_{\WJK} \Delta$ and $\Gamma \Vdash_{\SJK} \Delta$ iff $\Gamma \Vdash^{\varnothing}_{\B} \Delta$ holds in all weakly and strongly Jankov bases, respectively.
\end{definition}

Both conditions ensure that every atomic instance of weak excluded middle is supported. The weak condition expresses closure under reasoning by cases through the two extensions $\mathcal B^{\neg}$ and $\mathcal B^{\neg\neg}$. The strong condition requires one of the two alternatives to hold at the base. The weak condition could instead be stated directly in terms of atomic rules, but the resulting formulation would be considerably less concise.

The following result shows that the weak excluded middle axiom is sound.

\begin{lemma}\label{lemma:jankovsound}
  If $\B$ is weakly or strongly Jankov then  $\Vdash^{\varnothing}_{\B} \neg p \oplus \neg \neg p$  for all $p \in \At$.
\end{lemma}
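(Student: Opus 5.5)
Fix $p \in \At$ and a weakly or strongly Jankov base $\B$. I read $\neg A$ as $A \multimap \bot$, and use the intuitionistic clauses. By $(\oplus)$, to show $\Vdash^{\varnothing}_{\B} \neg p \oplus \neg\neg p$ I will take an arbitrary $\C \supseteq \B$, $\Theta_{\At}$ and $\Sigma_{\At}$ (at most one atom) with $\neg p \Vdash^{\Theta_{\At}}_{\C} \Sigma_{\At}$ and $\neg\neg p \Vdash^{\Theta_{\At}}_{\C} \Sigma_{\At}$, and show $\vdash_{\C} \Theta_{\At} \Rightarrow \Sigma_{\At}$.

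\textbf{Step 1: unfolding the negations.} I first show that $\Vdash^{\Gamma_{\At}}_{\D} \neg p$ holds iff $\vdash_{\D} \Gamma_{\At}, p \Rightarrow \varnothing$, for intuitionistic bases $\D$ (succedent of $\neg p$ empty). By $(\multimap)$ and $(\bot)$, support of $\neg p$ says: whenever $\Vdash^{\Theta'}_{\D'} p,\Sigma'$ and $\bot \Vdash^{\Pi}_{\D'}\Omega$, then $\Vdash^{\Gamma,\Theta',\Pi}_{\D'}\Sigma',\Omega$.
\begin{itemize}
\item For the left-to-right direction, instantiate with $\Theta' = p$ (by \init) and $\Pi=\Omega=\varnothing$. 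Here $\bot \Vdash \varnothing$ holds trivially through $(\bot)$ and (Inf).
\item For the right-to-left direction, use \Cuti\ together with the fact that $\bot \Vdash^{\Pi}\Omega$ amounts to $\Vdash^{\Pi}\Omega$.
\end{itemize}
I will do the analogous computation for $\neg\neg p$: $\Vdash^{\Gamma_{\At}}_{\D}\neg\neg p$ holds iff, for all $\D'\supseteq\D$ and $\Gamma'$ with $\vdash_{\D'}\Gamma',p\Rightarrow\varnothing$, we have $\vdash_{\D'}\Gamma,\Gamma'\Rightarrow\varnothing$. Combining these with (Inf) via Lemma~\ref{lemma:non-atomicinference}, the hypotheses become: for every $\D\supseteq\C$ and $\Gamma$ with $\Gamma$ supporting $\neg p$ (resp.\ $\neg\neg p$) in $\D$, we have $\vdash_{\D}\Theta_{\At},\Gamma\Rightarrow\Sigma_{\At}$.

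\textbf{Strongly Jankov case.}
\begin{itemize}
\item If $\vdash_{\B} p\Rightarrow\varnothing$, then by monotonicity and Step 1 we get $\Vdash^{\varnothing}_{\C}\neg p$. Feeding this to the first hypothesis gives $\vdash_{\C}\Theta_{\At}\Rightarrow\Sigma_{\At}$.
\item Otherwise, whenever $\vdash_{\D}\Gamma',p\Rightarrow\varnothing$ for $\D\supseteq\C\supseteq\B$, the strong condition yields $\vdash_{\D}\Gamma'\Rightarrow\varnothing$. Hence $\Vdash^{\varnothing}_{\C}\neg\neg p$, and the second hypothesis gives the conclusion.
\end{itemize}
Care is needed here to check that the strong condition is quantified over all extensions $\C\supseteq\B$ and all $\Gamma_{\At}$, which it is.

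\textbf{Weakly Jankov case (main obstacle).} I will show $\vdash_{\C^{\neg}}\Theta_{\At}\Rightarrow\Sigma_{\At}$ and $\vdash_{\C^{\neg\neg}}\Theta_{\At}\Rightarrow\Sigma_{\At}$, and then conclude by the weak condition.
\begin{itemize}
\item In $\C^{\neg}$ the zero-level rule gives $\vdash p\Rightarrow\varnothing$, so $\Vdash^{\varnothing}_{\C^{\neg}}\neg p$. The first hypothesis, instantiated at $\D=\C^{\neg}$, then yields the sequent. One must check that $\C^{\neg}$ is still an intuitionistic base extending $\C$; it is, since the added rule has an empty succedent, and closure under the structural schemata is understood as part of forming the extension.
\item In $\C^{\neg\neg}$, I need $\Vdash^{\varnothing}\neg\neg p$. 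That is, for $\D\supseteq\C^{\neg\neg}$ with $\vdash_{\D}\Gamma',p\Rightarrow\varnothing$, I need $\vdash_{\D}\Gamma'\Rightarrow\varnothing$. This is exactly the cumulative rule $[\mathfrak C,p\Rightarrow\varnothing/\mathfrak C\Rightarrow\varnothing]$ instantiated with $\mathfrak C=\Gamma'$, which is why it was stated with arbitrary contexts.
\end{itemize}
The delicate points are twofold. First, $\C^{\neg}$ and $\C^{\neg\neg}$ must count as legitimate extensions in the chosen class. Second, the resource bookkeeping must work out with empty resource multisets. I expect the latter to be routine once Step 1 is established.
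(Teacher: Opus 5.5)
Your proposal is correct and is essentially the paper's proof. In the weak case you obtain $\Vdash^{\varnothing}_{\C^{\neg}} \neg p$ from the zero-level rule and $\Vdash^{\varnothing}_{\C^{\neg\neg}} \neg\neg p$ from the cumulative rule, discharge the two $(\oplus)$ hypotheses, and close with the weak condition. In the strong case you split on the disjunction in the definition exactly as the paper does. The only difference is that you isolate the unfolding of $\neg p$ and $\neg\neg p$ as a preliminary characterisation and use (Inf) directly, where the paper invokes Lemma~\ref{lemma:basecut}.

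One small correction: $\bot \Vdash^{\Pi} \Omega$ does not ``amount to'' $\Vdash^{\Pi} \Omega$. For instance, $\bot \Vdash^{\varnothing} \varnothing$ holds trivially, while $\vdash \varnothing \Rightarrow \varnothing$ generally fails. You do not need that claim, though: unfolding $\bot \Vdash^{\Pi} \Omega$ via (Inf) at $\Vdash^{\Gamma, \Theta'} \bot$ yields $\Vdash^{\Gamma, \Theta', \Pi} \Omega$ directly.
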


Together with intuitionistic soundness and the reduction of $\JK$ derivations to intuitionistic derivations with Jankov assumptions, this establishes soundness for both classes of bases. 

\begin{theorem}[Soundness]
$\Gamma \vdash_{\JK} A$ implies $\Gamma \Vdash_{\WJK} A$ and $\Gamma \Vdash_{\SJK} A$. 
\end{theorem}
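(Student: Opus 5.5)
We plan to route soundness through the reduction of $\JK$ to $\Int$ rather than through a separate induction on $\JK$ derivations. Suppose $\Gamma \vdash_{\JK} A$. By definition there is a finite $\Gamma' \subseteq \Gamma$ with $\Gamma' \Rightarrow A$ derivable in $\JK$. Lemma~\ref{lemma:derivJankov} then gives a finite atomic Jankov set $\Theta = \{\neg p_1 \oplus \neg\neg p_1, \ldots, \neg p_k \oplus \neg\neg p_k\}$ such that $\Theta, \Gamma' \Rightarrow A$ is derivable in $\Int$.

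Next we use intuitionistic soundness. Weakly and strongly Jankov bases are in particular intuitionistic bases. Extensions of such bases within the relevant class are still intuitionistic. The soundness argument of Theorem~\ref{lemma:classicalsoundnessadd}, restricted to single succedents as in Section~\ref{sec:modular}, proceeds rule by rule and only uses the base being intuitionistic. It therefore yields $\Theta, \Gamma' \Vdash^{\varnothing}_{\B} A$ for every weakly (resp.\ strongly) Jankov base $\B$. One point needs checking here: the inductive soundness proof quantifies over extensions $\C \supseteq \B$, so we must restrict extensions to the chosen class, i.e.\ extensions of a weakly Jankov base must again be weakly Jankov. Under that reading the rule-by-rule argument carries over unchanged. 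This is the step we expect to be the main obstacle. The remaining work is to confirm that nothing in the intuitionistic soundness proof builds an extension outside the class. If some case did, for instance adding atomic rules in the proof of Lemma~\ref{lemma:basecut}, we would instead observe that those auxiliary constructions only use monotonicity and the clauses themselves, not closure properties of the class.

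Finally we discharge $\Theta$. By Lemma~\ref{lemma:jankovsound}, $\Vdash^{\varnothing}_{\B} \neg p_i \oplus \neg\neg p_i$ for every $i$. We apply Lemma~\ref{lemma:partialInf}, taking the $\Omega^m_{\At}$ and $\Pi^m_{\At}$ empty and splitting the context as $\Theta$ and $\Gamma'$. This gives $\Gamma' \Vdash^{\varnothing}_{\B} A$. Since $\B$ was arbitrary in each class, $\Gamma' \Vdash_{\WJK} A$ and $\Gamma' \Vdash_{\SJK} A$. Lastly, we pass from $\Gamma'$ to $\Gamma$ by semantic left weakening, which holds by the weakening corollary because Jankov bases are intuitionistic and hence closed under \LW\ and \RW. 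For possibly infinite $\Gamma$, we read $\Gamma \Vdash A$ via finite subsets, matching the syntactic definition. This gives $\Gamma \Vdash_{\WJK} A$ and $\Gamma \Vdash_{\SJK} A$.
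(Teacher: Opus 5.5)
Your proposal is correct and follows the paper's own route: reduce the $\JK$ derivation via Lemma~\ref{lemma:derivJankov} to an intuitionistic derivation with finitely many atomic Jankov assumptions, apply intuitionistic soundness over the extension-closed class of weakly or strongly Jankov bases, and discharge those assumptions with Lemma~\ref{lemma:jankovsound}. One small correction concerns the step from $\Gamma'$ to $\Gamma$: the weakening corollary you cite needs full atomic right weakening, which single-succedent bases lack, but you only need left weakening. Left weakening follows from atomic left weakening once Lemma~\ref{lemma:intmod} forces the extra right-hand atomic contexts to be empty; alternatively, weaken syntactically before invoking intuitionistic soundness.
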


We now turn to completeness, defining some new notions to adapt the traditional simulation base proof.

\begin{definition}
   For finite $\Gamma$, we define $\Gamma^{S}_{\At}$ as the set of all atomic subformulas of formulas in $\Gamma$.

Moreover, $\mathbb{F}(\Gamma^{S}_{\At})$ is the set of all formulas such that all their atomic subformulas are in $\Gamma^{S}_{\At}$. 
In particular, %
$\bot$ and formulas with it as subformula are vacuously included.
\end{definition}

\begin{definition}
Given any $\mathbb{F}(\Gamma^{S}_{\At})$ we define the {\em simulation mapping} $\alpha$ as a bijection $\alpha: \mathbb{F}(\Gamma^{S}_{\At}) \mapsto \At$ such that, for all $p \in \Gamma^{S}_{\At}$, $\alpha(p) = p$.
\end{definition}

Both completeness arguments (for weak and strong Jankov bases) use a simulation mapping for the formulas over the finite set of atoms occurring in the sequent under consideration. The mapping determines an intuitionistic simulation base $\mathcal U$, which is then extended to obtain a suitable Jankov base. The constructions differ at this point: the weak case gives a constructive proof, whereas the strong case proceeds by contraposition using a prime theory.

For weakly Jankov bases, let $\Theta$ contain all Jankov formulas in this language. We extend $\mathcal U$ with rules supporting the formulas in $\Gamma \cup \Theta$ and show that the resulting base is weakly Jankov. Semantic entailment then yields an intuitionistic derivation of $\Omega,\Gamma \Rightarrow \Delta$ for some finite $\Omega \subseteq \Theta$. Since $\Omega$ is a Jankov set, Corollary~\ref{cor:Jankovcor} and Lemma~\ref{lemma:derivJankov} give a $\JK$ derivation of $\Gamma \Rightarrow \Delta$.

\begin{lemma}\label{lemma:weak}
Let $\U$ be a simulation base built with $\alpha$ defined for $\mathbb{F}(\Gamma^{S}_{\At})$, where rules are allowed to have arbitrary extra left contexts. Let $\Theta$ be the set of all Jankov formulas in $\mathbb{F}(\Gamma^{S}_{\At})$ and $\E$ the extension of $\U$ obtained by adding $[/\Delta_{\At} \Rightarrow p^{A}]$ for all $A \in \Theta$ and all $\Delta_{\At}$. Then $\E$ is a weakly Jankov base.
\end{lemma}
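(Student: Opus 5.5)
We need to show two things about $\E$: that it is cumulative, and that it satisfies the weakly Jankov closure condition. Cumulativity is by inspection. The simulated rules of $\U$ and the structural rules (\init, \LW, \LC, \Cuti) are explicitly allowed arbitrary extra left contexts. The added zero-level rules $[/\Delta_{\At} \Rightarrow p^{A}]$ are given for every $\Delta_{\At}$, so adding a context $\mathfrak{C}$ to the conclusion yields another rule of the same family. Every rule has at most one atom in its succedent, so $\E$ is an intuitionistic base.

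For the main condition, fix $\C \supseteq \E$, a multiset $\Gamma_{\At}$ and $\Delta_{\At}$ with at most one atom. Suppose $\vdash_{\C^{\neg}} \Gamma_{\At}\Rightarrow\Delta_{\At}$ and $\vdash_{\C^{\neg\neg}} \Gamma_{\At}\Rightarrow\Delta_{\At}$; we must obtain the sequent in $\C$. The idea is to use the Jankov formula for $p$ itself: $J = \neg p \oplus \neg\neg p$ lies in $\Theta$ whenever $p \in \Gamma^{S}_{\At}$, so $\vdash_{\C} \Gamma_{\At} \Rightarrow p^{J}$ holds by the added zero-level rule. The simulated $L\oplus$ rule of $\U$, together with \Cuti, then reduces the goal to two sequents: $\Gamma_{\At}, p^{\neg p} \Rightarrow \Delta_{\At}$ and $\Gamma_{\At}, p^{\neg\neg p} \Rightarrow \Delta_{\At}$ in $\C$.

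To obtain these, I would prove a replacement lemma by induction on the height of derivations in $\C^{\neg}$. If $\vdash_{\C^{\neg}} \Pi_{\At} \Rightarrow \Sigma_{\At}$, then $\vdash_{\C} p^{\neg p}, \Pi_{\At} \Rightarrow \Sigma_{\At}$. Each use of the new axiom $[/p\Rightarrow\varnothing]$ is replaced by a derivation of $p^{\neg p}, p \Rightarrow \varnothing$, which comes from the simulated $L\multimap$ (with $\neg p = p\multimap\bot$), \init\ and $L\bot$. Every other rule of $\C$ is reapplied with the extra atom $p^{\neg p}$ carried as context, and \LC\ merges duplicate copies arising from multi-premise rules. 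The $\C^{\neg\neg}$ case works the same way. Its rule $[\mathfrak{C}, p \Rightarrow \varnothing / \mathfrak{C} \Rightarrow \varnothing]$ is simulated in $\C$ by taking the premise, applying the simulated $R\multimap$ to get $\mathfrak{C} \Rightarrow p^{\neg p}$ (using $R\bot$ to introduce $p^{\bot}$ if needed), and then combining $L\multimap$ with $p^{\neg\neg p}$ and $L\bot$ to reach $p^{\neg\neg p}, \mathfrak{C}\Rightarrow\varnothing$.

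\textbf{Main obstacle.} The carrying-context step is the main difficulty, because rules of an arbitrary extension $\C$ need not be cumulative. A rule of $\C$ may not accept the extra $p^{\neg p}$ in its conclusion, so the induction cannot simply reapply it with enlarged contexts. My first approach would be to avoid the induction altogether. The plan is to show directly, for an arbitrary $\C\supseteq\E$, that $\C^{\neg}$-derivability of $\Gamma_{\At}\Rightarrow\Delta_{\At}$ implies $\C$-derivability of $\Gamma_{\At}, p^{\neg p}\Rightarrow\Delta_{\At}$. The argument would relativize the derivation, using \Cuti\ against the $\C$-derivable sequent $p^{\neg p}\Rightarrow$ ``$p \Rightarrow \varnothing$''. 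This should be possible because the leaf $p\Rightarrow\varnothing$ is a zero-level leaf that we can cut out: we would rewrite the $\C^{\neg}$-derivation so that each leaf $p \Rightarrow \varnothing$ is obtained from $p^{\neg p}, p\Rightarrow\varnothing$, exploiting the fact that all structural rules and the simulation rules in $\E$ are cumulative. If non-cumulative rules in $\C$ still block this, the fallback is to restrict the relevant extensions to those that remain cumulative, which is presumably what the paper's notion of cumulative base is designed to guarantee. A separate, easier point is atoms $p$ outside $\Gamma^{S}_{\At}$. Here I would note that $\alpha$ is a bijection onto $\At$, so every atom is $p^{B}$ for some $B$. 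Then either $p \in \Gamma^{S}_{\At}$, or no rule of $\E$ mentions $p$ essentially, in which case the new rule can be eliminated directly.
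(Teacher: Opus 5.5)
\paragraph{Verdict.} Your route for atoms $p\in\Gamma^{S}_{\At}$ is essentially the paper's. But your handling of the remaining atoms rests on a false claim, so as written the proposal covers only finitely many atoms.

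\paragraph{What matches the paper.} You cut against the zero-level rule concluding $p^{\neg p\oplus\neg\neg p}$, apply the simulated $L\oplus$, and thread $p^{\neg p}$ (resp.\ $p^{\neg\neg p}$) through the two derivations. You replace the new leaves by $L\multimap$ applied to \init\ and $L\bot$, simulate the new $\neg\neg$-rule via $R\bot$, $R\multimap$, $L\multimap$, and merge duplicates with LC. This is what the paper does. Your fallback for the context-threading obstacle is also the paper's resolution. The paper only considers cumulative extensions $\D\supseteq\E$: extensions must satisfy the conditions defining the class, and weakly Jankov bases are cumulative by definition.

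Your ``first approach'' cannot replace that fallback. Atomic cut only removes an atom, and no succedent atom stands for the sequent $p\Rightarrow\varnothing$. So the extra antecedent $p^{\neg p}$ still has to be carried through every rule instance below the leaf, and non-cumulative rules would block this.

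\paragraph{The gap: atoms outside $\Gamma^{S}_{\At}$.} You correctly note that $\alpha$ is a bijection onto $\At$. You then conclude that such an atom is not essentially mentioned by any rule of $\E$, and this is false. Since $\alpha$ is onto and fixes exactly the atoms of $\Gamma^{S}_{\At}$, every other atom is the name $p^{B}$ of some non-atomic $B\in\mathbb{F}(\Gamma^{S}_{\At})$, possibly $\bot$. It therefore occurs in the simulated rules for $B$ and for every formula having $B$ as an immediate subformula.

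Even for a genuinely fresh atom, ``eliminating the new rule directly'' would not be justified. The weak Jankov condition quantifies over arbitrary extensions $\C\supseteq\E$, and these may contain rules about that atom. Since the condition must hold for every $p\in\At$, the proposal leaves infinitely many atoms unhandled.

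\paragraph{The fix.} Use the paper's uniform argument and drop the case split. For an arbitrary atom $p$, let $B=\alpha^{-1}(p)$. Then $\neg B\oplus\neg\neg B\in\Theta$, so $\vdash_{\E}\varnothing\Rightarrow p^{\neg B\oplus\neg\neg B}$. Your argument then goes through verbatim with $p^{\neg B}$ and $p^{\neg\neg B}$ in place of $p^{\neg p}$ and $p^{\neg\neg p}$. For example, $p^{\neg B},p\Rightarrow\varnothing$ comes from the simulated $L\multimap$ applied to \init\ on $p=p^{B}$ and to $L\bot$.
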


\begin{theorem}[Weak completeness]\label{thm:weakcompleteness}
If $\Gamma \Vdash_{\WJK} \Delta$ then $\Gamma \vdash_{\JK} \Delta$.
\end{theorem}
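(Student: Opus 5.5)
We use the extension $\E$ of Lemma~\ref{lemma:weak}, which is weakly Jankov, and follow the simulation-base completeness proof almost verbatim.

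\textbf{Step 1: specialise to $\E$.} Assume $\Gamma \Vdash_{\WJK} \Delta$. Since $\E$ is weakly Jankov by Lemma~\ref{lemma:weak}, we obtain $\Gamma \Vdash^{\varnothing}_{\E} \Delta$.

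\textbf{Step 2: the simulation equivalence over $\E$.} We need the intuitionistic analogue of Lemma~\ref{mainlemmacompletenessadd} for every $\B \supseteq \E$. For subformulas $A$ in $\mathbb{F}(\Gamma^{S}_{\At})$ and at most one atom alongside, it should state
\[
\Vdash^{\Omega_{\At}}_{\B} A \quad\text{iff}\quad \vdash_{\B} \Omega_{\At} \Rightarrow p^{A}.
\]
The proof is the same induction on $A$ as before. It uses the simulated left and right rules for $\multimap$, $\oplus$, $\otimes$ and $\bot$ in $\U$, together with \init\ and \Cuti\ for the intuitionistic clauses. The extra zero-level rules $[/\Delta_{\At} \Rightarrow p^{A}]$ added in $\E$ only introduce further atomic derivations, so the induction is unaffected. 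Cumulativity, which says that rules carry arbitrary extra left contexts, is exactly what lets us apply the simulated left rules with contexts. That is needed for the $(\mathrm{Inf})$ case and for composing deductions.

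\textbf{Step 3: extract a derivation.} Let $\Gamma = \{A^1,\dots,A^n\}$. By \init\ we have $\vdash_{\E} p^{A^m} \Rightarrow p^{A^m}$, so Step 2 gives $\Vdash^{p^{A^m}}_{\E} A^m$ for each $m$. Feeding these into $\Gamma \Vdash^{\varnothing}_{\E}\Delta$ yields $\Vdash^{\Gamma_{\At}}_{\E}\Delta$, and Step 2 again gives $\vdash_{\E} \Gamma_{\At} \Rightarrow \Delta_{\At}$.

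This atomic derivation is finite, so it uses only finitely many of the added axioms $[/\Delta'_{\At}\Rightarrow p^{A}]$ with $A\in\Theta$. Let $\Omega\subseteq\Theta$ be the finite set of these formulas $A$. Replace each atom $p^{C}$ by $C$ throughout, and replace each instance of an added axiom by the $\Int$-derivation of $\Omega,\Delta'\Rightarrow A$ (\init\ with left context, then LW). The result is an $\Int$ derivation of $\Omega,\Gamma\Rightarrow\Delta$. Any cuts are fine because $\Int$ contains \Cuti.

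\textbf{Step 4: conclude in $\JK$.} $\Omega$ is a finite Jankov set. By Corollary~\ref{cor:Jankovcor} there is a finite atomic Jankov set $\Sigma$ with $\Sigma,\Gamma\Rightarrow\Delta$ derivable in $\Int$. Lemma~\ref{lemma:derivJankov} then gives $\Gamma\Rightarrow\Delta$ in $\JK$, that is, $\Gamma\vdash_{\JK}\Delta$.

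\textbf{Main obstacle.} The delicate step is the translation in Step 3. We must carry the extra left context $\Omega$ uniformly through every translated rule instance, using cumulativity and the left-context version of \init. We must also check that the simulation equivalence of Step 2 survives the added Jankov axioms and the single-succedent restriction, which Lemma~\ref{lemma:intmod} guarantees for $\Delta$. I would first try a direct induction on the atomic deduction in $\E$, showing that each step translates to an $\Int$ step with context enlarged by $\Omega$.
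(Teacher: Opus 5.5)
Your proposal is correct and takes essentially the same route as the paper. You specialise to the weakly Jankov base $\E$ of Lemma~\ref{lemma:weak}, use the simulation lemma with \init\ to obtain $\vdash_{\E}\Gamma_{\At}\Rightarrow\Delta_{\At}$, and turn the finitely many added Jankov axioms into \init\ instances whose Jankov formulas are carried in the left context. You then translate back to $\Int$ and finish with Corollary~\ref{cor:Jankovcor} and Lemma~\ref{lemma:derivJankov}. Two small points should be made explicit. First, your induction needs LC to merge the duplicated copies of $\Omega$ at binary rules such as \Cuti\ and $L\multimap$, as the paper's ``removing occurrences later via contraction'' indicates. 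Second, the simulation mapping must be built over $(\Gamma\cup\Delta)^{S}_{\At}$ rather than $\Gamma^{S}_{\At}$, so that atoms occurring only in $\Delta$ are fixed by $\alpha$ and the formulas of $\Delta$ lie in the domain of the simulation equivalence.
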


For strongly Jankov bases, we prove completeness by contraposition. We write the succedent as a single formula $A$: an empty succedent can be represented by $\bot$. The construction uses the following standard prime-extension result~\cite[pg. 170]{Dalen}.

\begin{definition}
A set $\Gamma$ is a \emph{prime theory} if $\Gamma \vdash_{\Int} A$ implies $A \in \Gamma$ and $A \oplus B \in \Gamma$ implies that either $A \in \Gamma$ or $B \in \Gamma$.
\end{definition}

\begin{lemma}\label{lemma:prime}
    If $\Gamma \nvdash_{\Int} A$ then there is a prime theory $\Gamma'$ such that $\Gamma \subseteq \Gamma'$ and $\Gamma' \nvdash_{\Int} A$.
\end{lemma}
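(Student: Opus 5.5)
The plan is the standard Lindenbaum construction for prime extensions. Assume $\Gamma \nvdash_{\Int} A$. The formulas of the language form a countable set, since atoms are countable and formulas are finite strings. Enumerate all disjunctions $B_0 \oplus C_0, B_1 \oplus C_1, \ldots$, listing each disjunction infinitely often; alternatively, list each once and iterate the stage construction through $\omega$ rounds. Build an increasing chain $\Gamma_0 \subseteq \Gamma_1 \subseteq \cdots$ with $\Gamma_0 = \Gamma$ and invariant $\Gamma_k \nvdash_{\Int} A$. At stage $k$, if $\Gamma_k \vdash_{\Int} B_k \oplus C_k$, set $\Gamma_{k+1} = \Gamma_k \cup \{B_k\}$ when this does not derive $A$, and $\Gamma_{k+1} = \Gamma_k \cup \{C_k\}$ otherwise. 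If $\Gamma_k \nvdash_{\Int} B_k \oplus C_k$, set $\Gamma_{k+1} = \Gamma_k$.

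The key step, which I expect to be the only real obstacle, is preserving the invariant in the case where both extensions derive $A$. Suppose $\Gamma_k, B_k \vdash_{\Int} A$ and $\Gamma_k, C_k \vdash_{\Int} A$. By finiteness there are finite $\Theta_1, \Theta_2 \subseteq \Gamma_k$ with $\Theta_1, B_k \Rightarrow A$ and $\Theta_2, C_k \Rightarrow A$ derivable in $\Int$. Weakening (LW) turns both into sequents with the common context $\Theta_1 \cup \Theta_2$. Rule $L\oplus$ then gives $\Theta_1\cup\Theta_2, B_k \oplus C_k \Rightarrow A$. A finite $\Theta_3 \subseteq \Gamma_k$ with $\Theta_3 \Rightarrow B_k \oplus C_k$ also exists, and after weakening, \Cuti\ yields $\Gamma_k \vdash_{\Int} A$, contradicting the invariant. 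Care is needed with the multiset reading of contexts: contraction is required to merge repeated occurrences of formulas from $\Theta_3$ and $\Theta_1 \cup \Theta_2$. Since $\Int$ has weakening and contraction, this is routine. The case of an empty succedent is already handled by writing the succedent as $\bot$, so single succedents suffice.

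Set $\Gamma' = \bigcup_k \Gamma_k$. First, $\Gamma' \nvdash_{\Int} A$, because any derivation uses a finite $\Theta \subseteq \Gamma'$, which lies in some $\Gamma_k$.

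Next, $\Gamma'$ is prime. Suppose $B \oplus C \in \Gamma'$. Then $B \oplus C \in \Gamma_j$ for some $j$, and the pair appears at a later stage $k \ge j$. There $\Gamma_k \vdash_{\Int} B \oplus C$ by \init\ with left context, so $B$ or $C$ is added.

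For deductive closure, the disjunction enumeration alone does not guarantee $\Gamma' \vdash_{\Int} D \Rightarrow D \in \Gamma'$. I would handle this by interleaving a second kind of stage: enumerate all formulas $D_k$, and add $D_k$ whenever $\Gamma_k \vdash_{\Int} D_k$. This preserves the invariant by \Cuti. A simpler alternative is to check closure at the end. If $\Gamma' \vdash D$, then $\Gamma' \vdash D \oplus D$ by $R\oplus_1$, so at the stage treating $D\oplus D$, some $\Gamma_k$ already derives it and $D$ gets added. This works because derivability from $\Gamma'$ is witnessed finitely in some $\Gamma_k$, and $D \oplus D$ recurs at later stages. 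With either device, $\Gamma'$ is a prime theory extending $\Gamma$ with $\Gamma' \nvdash_{\Int} A$, as required.
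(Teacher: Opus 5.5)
Your argument is correct and is the standard Lindenbaum-style construction the paper relies on: the paper gives no proof of this lemma itself but cites van Dalen's prime-extension lemma. That proof enumerates disjunctions with each recurring infinitely often, keeps $\Gamma_k \nvdash_{\Int} A$ via the $L\oplus$/cut argument, and uses the finitary definition of $\vdash_{\Int}$ for the union. One small point: your first closure device only works if each formula $D$ is also enumerated infinitely often, since otherwise $D$ may be checked before it becomes derivable; your $D \oplus D$ device needs no such repetition, so the proof stands.
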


Suppose that $\Gamma \nvdash_{\JK} A$, and let $\Theta$ be the set of all Jankov formulas in $\mathbb{F}((\Gamma \cup \{ A \})^{S}_{\At})$. By Lemma~\ref{lemma:finite}, $\Gamma \cup \Theta \nvdash_{\Int} A$. Lemma~\ref{lemma:prime} therefore yields a prime theory $\Gamma'$ containing $\Gamma \cup \Theta$ such that $\Gamma' \nvdash_{\mathrm{Int}} A$. Let $\mathcal U$ be the simulation base for this language, and extend it to a base $\E$ by adding the zero-premise rule $\Rightarrow p^B$ for every $B \in \Gamma'$. The next two lemmas show that support in $\mathcal E$ coincides with membership in $\Gamma'$ and that $\mathcal E$ is strongly Jankov. Thus $\mathcal E$ supports every formula in $\Gamma$ but does not support $A$, providing the required counterexample to semantic entailment.

\begin{lemma}\label{mainlemmaJankovstrong}
  For all $C$ in $\mathbb{F}((\Gamma \cup \{ A\})^{S}_{\At})$, we have $\Vdash^{\varnothing}_{\E} C$ iff $C \in \Gamma'$. 
\end{lemma}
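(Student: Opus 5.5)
The plan is to reduce the statement to an atomic characterisation of derivability in $\E$, and then run the intuitionistic simulation lemma (the single-succedent analogue of Lemma~\ref{mainlemmacompletenessadd}) over the base $\E\supseteq\U$.

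\textbf{Step 1: atomic derivability in $\E$.} First establish that
\[
\vdash_{\E} p^{B_1},\ldots,p^{B_k}\Rightarrow p^{C}
\quad\text{iff}\quad
\Gamma',B_1,\ldots,B_k\vdash_{\Int} C ,
\]
and similarly with empty succedent, read as $\bot$. The left-to-right direction goes by induction on the $\E$-derivation, replacing each atom $p^{D}$ by $D$. Every simulated rule of $\U$ becomes an $\Int$ rule. Each added zero-premise rule $\Rightarrow p^{B}$ with $B\in\Gamma'$ becomes an assumption from $\Gamma'$. Since $\Int$ has weakening and cuts, the finitely many $\Gamma'$-assumptions used can be collected on the left.

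The right-to-left direction takes an $\Int$ derivation from a finite $\Theta\subseteq\Gamma'$ together with the $B_i$. It simulates that derivation in $\U$, which is possible because $\U$ contains all the rules over $\mathbb{F}$. It then cuts away each $p^{D}$ with $D\in\Theta$ using $\Rightarrow p^{D}$ and \Cuti. In particular, $\vdash_{\E}\Rightarrow p^{C}$ iff $\Gamma'\vdash_{\Int}C$ iff $C\in\Gamma'$, because $\Gamma'$ is deductively closed.

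\textbf{Step 2: transfer to support.} By the simulation lemma for intuitionistic support over $\E\supseteq\U$, we have $\Vdash^{\varnothing}_{\E}C$ iff $\vdash_{\E}\Rightarrow p^{C}$. Combined with Step 1, this gives $\Vdash^{\varnothing}_{\E}C$ iff $C\in\Gamma'$, which is the statement.

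If the simulation lemma is only stated for $\U$ itself, I would re-check its inductive cases for $\oplus$, $\multimap$ and $\otimes$ in an arbitrary extension. Those cases use only the simulated left and right rules together with \init\ and \Cuti, so they should go through for any $\C\supseteq\U$, and hence for $\E$ and all its extensions. Primeness is not needed at this point. Membership in $\Gamma'$ enters only through the zero-premise rules.

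\textbf{Main obstacle.} The delicate point is the $\oplus$ case of the simulation lemma. Supporting $C\oplus D$ in $\E$ needs to match the disjunction property of $\Gamma'$, otherwise the clause could diverge from membership. I expect this to be harmless here, because support of $\oplus$ is given by the elimination-style clause, which the simulation handles through the $L\oplus$ rule on atoms.

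A second delicate point is the treatment of $\bot$ and of negations occurring in Jankov formulas. These require the empty-succedent case of Step 1. That case corresponds to $\Gamma'\vdash_{\Int}\bot$, which must be matched with $\bot\in\Gamma'$. This match holds because $\bot\in\mathbb{F}$ by definition and $\Gamma'$ is deductively closed.
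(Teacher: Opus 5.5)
Your proposal is correct and follows essentially the paper's route: Lemma~\ref{mainlemmacompletenessadd} (stated for every $\mathcal{B}\supseteq\mathcal{U}$, so it applies to $\mathcal{E}$ as is) reduces $\Vdash^{\varnothing}_{\mathcal{E}} C$ to $\vdash_{\mathcal{E}}\Rightarrow p^{C}$. The right-to-left direction is then immediate from the added rule $\Rightarrow p^{C}$, and the left-to-right direction turns the $\mathcal{E}$-derivation into an $\Int$-derivation from finitely many members of $\Gamma'$ and uses deductive closure. The only difference is bookkeeping: the paper replaces each added $\Rightarrow p^{B}$ by $p^{B}\Rightarrow p^{B}$ and carries $p^{B}$ down inside $\mathcal{U}$ via its extra-left-context rule instances, whereas you collect the $\Gamma'$-assumptions directly in $\Int$ by weakening and cut, which avoids relying on that cumulativity of $\mathcal{U}$; the general right-to-left half of your Step~1 is not needed.
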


\begin{lemma}\label{lemma:isstronglyJankov}
    $\E$ is a strongly Jankov base.
\end{lemma}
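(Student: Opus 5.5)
The plan is to unfold the definition of a strongly Jankov base atom by atom. Since the simulation mapping $\alpha$ is a bijection from $\mathbb{F}((\Gamma\cup\{A\})^{S}_{\At})$ onto $\At$, every atom $p$ is $p^{B}$ for a unique formula $B$ of the language; when $B$ is itself an atom, $p^B = B = p$. The Jankov formula $\neg B \oplus \neg\neg B$ lies in the same language, so it belongs to $\Theta \subseteq \Gamma'$. Because $\Gamma'$ is a prime theory, either $\neg B \in \Gamma'$ or $\neg\neg B \in \Gamma'$. By construction of $\E$, the corresponding zero-premise rule $\Rightarrow p^{\neg B}$ or $\Rightarrow p^{\neg\neg B}$ is in $\E$. (Alternatively, one can read the two cases off Lemma~\ref{mainlemmaJankovstrong}, but working directly with the rules of $\E$ is simpler.) I would treat the two cases separately, writing $\neg C$ as $C \multimap \bot$.

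\emph{Case $\neg B \in \Gamma'$.} I aim to show $\vdash_{\E} p^{B} \Rightarrow \varnothing$, the first disjunct of the definition.
\begin{itemize}
\item From \init\ we have $p^{B}\Rightarrow p^{B}$, and from the simulated $L\bot$ we have $p^{\bot}\Rightarrow\varnothing$.
\item The simulated $L\multimap$ rule then gives $p^{\neg B}, p^{B} \Rightarrow \varnothing$.
\item A \Cuti\ against the zero-premise rule $\Rightarrow p^{\neg B}$ of $\E$ yields $p^{B}\Rightarrow\varnothing$.
\end{itemize}

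\emph{Case $\neg\neg B \in \Gamma'$.} I aim for the second disjunct. Take any $\C \supseteq \E$ and suppose $\vdash_{\C} \Gamma_{\At}, p^{B} \Rightarrow \varnothing$.
\begin{itemize}
\item The simulated $R\bot$ gives $\Gamma_{\At}, p^{B} \Rightarrow p^{\bot}$, and the simulated $R\multimap$ then gives $\Gamma_{\At} \Rightarrow p^{\neg B}$.
\item Using $p^{\bot}\Rightarrow\varnothing$, the simulated $L\multimap$ gives $\Gamma_{\At}, p^{\neg\neg B} \Rightarrow \varnothing$.
\item A \Cuti\ with $\Rightarrow p^{\neg\neg B}$ gives $\Gamma_{\At}\Rightarrow\varnothing$, as required.
\end{itemize}
All rules used are inherited by $\C$, since extensions only add rules. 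Since every atom falls under one of the two cases, $\E$ is strongly Jankov. I would also check that $\E$ is a legitimate intuitionistic base: it contains $\U$ (closed under \init, \LW, \RW, \LC, \Cuti) and only adds zero-premise rules with at most one atom in the succedent.

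The main obstacle is the bookkeeping of contexts in the simulated rules. The second case needs the simulated $R\bot$, $R\multimap$ and $L\multimap$ with an arbitrary extra atomic context $\Gamma_{\At}$, and this $\Gamma_{\At}$ comes from an arbitrary extension $\C$. I would rely on the simulation base being instantiated for all atomic contexts, as in Lemma~\ref{lemma:weak}, where rules are allowed arbitrary extra left contexts. If needed, I would use \LW\ and \LC\ to match the contexts exactly before applying the $L\multimap$ instance, whose two premises must carry specified contexts. I would also confirm that nothing special happens when $B$ is an atom, or is $\bot$ itself, which is included vacuously; in those cases the same derivations go through verbatim.
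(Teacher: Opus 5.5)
Your proposal is correct and follows the paper's proof essentially step for step. Primeness of $\Gamma'$, applied to $\neg B \oplus \neg\neg B \in \Theta \subseteq \Gamma'$, gives the cases $\neg B \in \Gamma'$ and $\neg\neg B \in \Gamma'$, and the paper handles them with exactly your two derivations: the simulated $L\multimap$ plus a cut against $\Rightarrow p^{\neg B}$, and respectively $R\bot$, $R\multimap$, $L\multimap$ plus a cut against $\Rightarrow p^{\neg\neg B}$. Like the paper, you rely on the simulated rules being available with arbitrary extra left contexts. Your labelling of the $R\multimap$ step is more accurate than the paper's displayed derivation, which calls it $L\multimap$.
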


\begin{theorem}[Strong completeness]\label{thm:strongcomplete}
$\Gamma \Vdash_{\SJK} A$ implies $\Gamma \vdash_{\JK} A$. %
\end{theorem}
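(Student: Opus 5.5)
We argue by contraposition, following the outline given just before Lemma~\ref{mainlemmaJankovstrong}. Suppose $\Gamma \nvdash_{\JK} A$, where $A$ is a single formula; an empty succedent is replaced by $\bot$, and the $\bot$-clause of support makes this harmless. Let $\Theta$ be the set of all Jankov formulas in $\mathbb{F}((\Gamma \cup \{A\})^{S}_{\At})$. We first need $\Gamma \cup \Theta \nvdash_{\Int} A$. If some finite $\Omega \subseteq \Gamma \cup \Theta$ derived $A$ in $\Int$, then separating $\Omega$ into its $\Gamma$-part and its finite Jankov part and applying Lemma~\ref{lemma:finite} would give a $\JK$ derivation of $\Gamma \Rightarrow A$, contradicting the assumption. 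Lemma~\ref{lemma:prime} then supplies a prime theory $\Gamma' \supseteq \Gamma \cup \Theta$ with $\Gamma' \nvdash_{\Int} A$.

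Next we take the intuitionistic simulation base $\U$ for the simulation mapping $\alpha$ on $\mathbb{F}((\Gamma \cup \{A\})^{S}_{\At})$. We extend it to $\E$ by adding the zero-premise rule $[/\Rightarrow p^{B}]$ for each $B \in \Gamma'$. Lemma~\ref{lemma:isstronglyJankov} says $\E$ is strongly Jankov, so $\E$ is an admissible base for $\Vdash_{\SJK}$. Lemma~\ref{mainlemmaJankovstrong} gives $\Vdash^{\varnothing}_{\E} C$ iff $C \in \Gamma'$ for every $C$ in the language. Hence $\Vdash^{\varnothing}_{\E} B$ for each $B \in \Gamma$, since $\Gamma \subseteq \Gamma'$. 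Also $\nVdash^{\varnothing}_{\E} A$: otherwise $A \in \Gamma'$, and then $\Gamma' \vdash_{\Int} A$ by \init, which is false.

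To finish, assume $\Gamma \Vdash_{\SJK} A$. Then $\Gamma \Vdash^{\varnothing}_{\E} A$. Write $\Gamma = \{B^1,\dots,B^n\}$ and apply Lemma~\ref{lemma:non-atomicinference} with $\C = \E$, every $\Sigma^{i}_{\At} = \varnothing$ and every $\Pi^{i} = \varnothing$. Since each premise $\Vdash^{\varnothing}_{\E} B^{i}$ holds, we obtain $\Vdash^{\varnothing}_{\E} A$. This contradicts the previous paragraph, so $\Gamma \vdash_{\JK} A$.

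The contraposition argument itself is routine; the substantive work sits in Lemmas~\ref{mainlemmaJankovstrong} and~\ref{lemma:isstronglyJankov}. The delicate step is the reduction from $\Gamma \nvdash_{\JK} A$ to $\Gamma \cup \Theta \nvdash_{\Int} A$. Lemma~\ref{lemma:finite} uses arbitrary Jankov formulas, while $\Theta$ contains only those in the restricted language. We therefore check that this restriction loses nothing: by Corollary~\ref{cor:Jankovcor}, any Jankov set used in an $\Int$ derivation can be traded for atomic Jankov formulas. These can be chosen over atoms of $\Gamma \cup \{A\}$, substituting $\bot$ for any other atom if needed, so they lie in $\Theta$. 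A related point is the case $\Gamma$ infinite or empty, and the case of an empty succedent: in both, the finite-premise convention and the $\bot$-encoding need to be applied consistently.
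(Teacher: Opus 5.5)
Your proposal is correct and follows essentially the same route as the paper: contraposition, then a prime theory $\Gamma' \supseteq \Gamma \cup \Theta$ with $\Gamma' \nvdash_{\Int} A$, then the base $\E$ obtained from the simulation base by adding $[/\Rightarrow p^{B}]$ for each $B \in \Gamma'$, with Lemmas~\ref{mainlemmaJankovstrong} and~\ref{lemma:isstronglyJankov} supplying a strongly Jankov countermodel. The worry in your last paragraph is unnecessary: the reduction only uses the right-to-left direction of Lemma~\ref{lemma:finite}, which holds for any finite subset of $\Theta$ whatever the language restriction, and your first paragraph already argues exactly that.
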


Jankov logic thus admits two distinct base-extension semantics, with completeness established by two adaptations of the intuitionistic argument. The modularity lies in separating the general completeness construction from the conditions on bases needed to validate the additional axioms. Beyond the verification of those conditions, the constructions {\em do not depend} on the particular form of weak excluded middle. They therefore provide two routes for proving completeness for {\em any} axiomatic extension of intuitionistic logic.

\section{Concluding remarks}
We have developed a sequent-based formulation of base-extension semantics that provides a common treatment of classical, intuitionistic, and substructural reasoning. Its central ingredients are atomic derivability, support indexed by multisets of resources, and semantic clauses guided by the consequences licensed by the left introduction rules. Together, these ingredients make explicit how the interpretation of connectives interacts with structural principles and restrictions on contexts.

The soundness and completeness results show how this construction can be instantiated for classical and intuitionistic propositional logic and their multiplicative-additive linear fragments. A significant feature is that atomic weakening and contraction suffice to establish the corresponding semantic principles for arbitrary formulas. In particular, the treatment of contraction reveals a distinction between syntactic and semantic accounts: although contraction cannot generally be restricted to atoms in a sequent calculus, its semantic counterpart can be secured through conditions on atomic bases. The framework thus provides both adequacy results and an account of how structural behaviour propagates through the interpretation of logical connectives.

The treatment of Jankov logic extends this investigation beyond structural variations. Weakly and strongly Jankov bases impose different requirements on the support of weak excluded middle, yet both recover Jankov consequence. Their completeness proofs also exhibit two approaches: a constructive extension of the simulation-base method and a countermodel construction using prime theories and simulation bases. This case study shows how the common framework can accommodate an intermediate logic while making visible the additional requirements imposed by its characteristic axiom.

A natural next step is to identify general conditions under which axiomatic extensions of intuitionistic logic admit analogous treatments. This includes determining when constructive simulation-base arguments suffice and when additional structure on bases is needed. Extensions to modal logics and cyclic proof systems offer further opportunities to investigate how the choice of atomic rules and admissible extensions shapes proof-theoretic validity.

\newpage
\bibliographystyle{splncs03}
\bibliography{references}

@inproceedings{Ciabattoni2013,
  author    = {Ciabattoni, Agata and Maffezioli, Paolo and Spendier, Lara},
  title     = {Hypersequent and Labelled Calculi for Intermediate Logics},
  booktitle = {Automated Reasoning with Analytic Tableaux and Related Systems (TABLEAUX 2013)},
  editor    = {Galmiche, Didier and Larchey-Wendling, Dominique},
  series    = {Lecture Notes in Computer Science},
  volume    = {8123},
  pages     = {81--96},
  year      = {2013},
  publisher = {Springer},
  doi       = {10.1007/978-3-642-40537-2_9}
}

@article{Barroso-Nascimento2025,
  author    = {Barroso-Nascimento, Victor and Pereira, Luiz Carlos and Pimentel, Elaine},
  title     = {An ecumenical view of proof-theoretic semantics},
  journal   = {Synthese},
  year      = {2025},
  volume    = {206},
  number    = {4},
  pages     = {197},
  issn      = {1573-0964},
  doi       = {10.1007/s11229-025-05269-z},
  url       = {https://doi.org/10.1007/s11229-025-05269-z}
}

@Inbook{Indrzejczak2022,
author="Indrzejczak, Andrzej",
editor="Citkin, Alex
and Vandoulakis, Ioannis M.",
title="Dialogues and Proofs; Yankov's Contribution to Proof Theory",
bookTitle="V.A. Yankov on Non-Classical Logics, History and Philosophy of Mathematics",
year="2022",
publisher="Springer International Publishing",
address="Cham",
pages="53--70",
isbn="978-3-031-06843-0",
doi="10.1007/978-3-031-06843-0_3",
url="https://doi.org/10.1007/978-3-031-06843-0_3"
}

@article{Yankov1968,
  author  = {Vadim A. Yankov},
  title   = {The calculus of the weak ``Law of excluded middle''},
  journal = {Mathematics of the USSR-Izvestiya},
  volume  = {2},
  number  = {5},
  pages   = {997--1004},
  year    = {1968},
  doi     = {10.1070/IM1968v002n05ABEH000690}
}

@book{CitkinVandoulakis2022,
  editor    = {Alex Citkin and Ioannis M. Vandoulakis},
  title     = {V.A. Yankov on Non-Classical Logics, History and Philosophy of Mathematics},
  series    = {Outstanding Contributions to Logic},
  volume    = {24},
  publisher = {Springer Nature Switzerland},
  year      = {2022},
  doi       = {10.1007/978-3-031-06843-0},
  isbn      = {978-3-031-06842-3}
}

@article{DBLP:journals/ndjfl/Brunnler06,
  author       = {Kai Br{\"{u}}nnler},
  title        = {Locality for Classical Logic},
  journal      = {Notre Dame J. Formal Log.},
  volume       = {47},
  number       = {4},
  pages        = {557--580},
  year         = {2006},
  url          = {https://doi.org/10.1305/ndjfl/1168352668},
  doi          = {10.1305/NDJFL/1168352668},
  bibsource    = {dblp computer science bibliography, https://dblp.org}
}

@inproceedings{DBLP:journals/corr/abs-2504-08349,
  author       = {Victor Barroso{-}Nascimento and
                  Ekaterina Piotrovskaya and
                  Elaine Pimentel},
  editor       = {Clemens Kupke and
                  Stefan Milius},
  title        = {A Proof-Theoretic Approach to the Semantics of Classical Linear Logic},
  booktitle    = {Proceedings of the 41st Conference on the Mathematical Foundations
                  of Programming Semantics, {MFPS} XLI, University of Strathclyde, Glasgow,
                  UK, June 16-21, 2025},
  series       = {Electronic Notes in Theoretical Informatics and Computer Science},
  volume       = {5},
  publisher    = {EpiSciences},
  year         = {2025},
  month        = apr,
  url          = {https://doi.org/10.46298/entics.16888},
  doi          = {10.46298/ENTICS.16888},
  bibsource    = {dblp computer science bibliography, https://dblp.org}
}

@Inbook{Corsi2008,
author="Corsi, Giovanna",
editor="Lupacchini, Rossella
and Corsi, Giovanna",
title="The Logic of the Weak Excluded Middle: A Case Study of Proof-Search",
bookTitle="Deduction, Computation, Experiment: Exploring the Effectiveness of Proof",
year="2008",
publisher="Springer Milan",
address="Milano",
pages="95--116",
isbn="978-88-470-0784-0",
doi="10.1007/978-88-470-0784-0_6",
url="https://doi.org/10.1007/978-88-470-0784-0_6"
}

@article{Steinberger2011,
  author  = {Steinberger, Florian},
  title   = {Why Conclusions Should Remain Single},
  journal = {Journal of Philosophical Logic},
  volume  = {40},
  number  = {3},
  pages   = {333--355},
  year    = {2011},
  doi     = {10.1007/s10992-010-9153-3}
}

@InCollection{sep-logic-intuitionistic,
	author       =	{Moschovakis, Joan},
	title        =	{{Intuitionistic Logic}},
	booktitle    =	{The {Stanford} Encyclopedia of Philosophy},
	editor       =	{Edward N. Zalta and Uri Nodelman},
	howpublished =	{\url{https://plato.stanford.edu/archives/sum2024/entries/logic-intuitionistic/}},
	year         =	{2024},
	edition      =	{{S}ummer 2024},
	publisher    =	{Metaphysics Research Lab, Stanford University}
}

@article{lukasiewicz1952intuitionistic,
  author  = {Lukasiewicz, Jan},
  title   = {On the intuitionistic theory of deduction},
  journal = {Indagationes Mathematicae},
  volume  = {14},
  pages   = {202--212},
  year    = {1952}
}

@article{LegrisMolina2001,
  author    = {Legris, Javier and Molina, Jorge A.},
  title     = {Intuitionistic Logic Considered As An Extension of Classical Logic: Some Critical Remarks},
  journal   = {Philosophia Scientiae},
  volume    = {5},
  number    = {2},
  pages     = {27--50},
  year      = {2001},
  url       = {}
}

@article{Hosoi1986,
  author    = {Tsutomu Hosoi},
  title     = {Pseudo two-valued evaluation method for intermediate logics},
  journal   = {Studia Logica},
  volume    = {45},
  number    = {1},
  pages     = {3--8},
  year      = {1986},
  doi       = {10.1007/BF01881544},
  url       = {https://link.springer.com/article/10.1007/BF01881544}
}

@misc{buzoku2026ILL,
      title={A Proof-theoretic Semantics for Intuitionistic Linear Logic}, 
      author={Yll Buzoku},
      year={2026},
      eprint={2402.01982},
      archivePrefix={arXiv},
      primaryClass={cs.LO},
      url={https://arxiv.org/abs/2402.01982}, 
}

@article{DBLP:journals/sLogica/GheorghiuGP26,
  author       = {Alexander V. Gheorghiu and
                  Tao Gu and
                  David J. Pym},
  title        = {Proof-theoretic Semantics for Intuitionistic Multiplicative Linear
                  Logic},
  journal      = {Studia Logica},
  volume       = {114},
  number       = {2},
  pages        = {451--511},
  year         = {2026},
  url          = {https://doi.org/10.1007/s11225-024-10158-6},
  doi          = {10.1007/S11225-024-10158-6},
  bibsource    = {dblp computer science bibliography, https://dblp.org}
}

@article{stafford2025logics,
author     = {Stafford, Will and Piecha, Thomas and Schroeder-Heister, Peter},
title      = {Logics of Proof-Theoretic Validity},
journal   = {Topoi},
year      = {2025},
volume    = {45},
number    = {2},
pages     = {643--661},
doi       = {10.1007/s11245-025-10248-7}
}

@article{DBLP:journals/rsl/Ayhan26,
  author       = {Sara Ayhan},
  title        = {Contradictions without Negation and a Proof-Theoretic, Bilateralist
                  Account of Connexive Logics},
  journal      = {Rev. Symb. Log.},
  volume       = {19},
  number       = {2},
  pages        = {375--402},
  year         = {2026},
  url          = {https://doi.org/10.1017/s1755020326101063},
  doi          = {10.1017/S1755020326101063},
  bibsource    = {dblp computer science bibliography, https://dblp.org}
}

@article{Daragona-Prawitz,
author  = {Antonio Piccolomini d'Aragona and Dag Prawitz},
year    = {2026},
title   = {Some Variants of Proof-theoretic Semantics and Their Relations with Intuitionistic Logic},
journal = {Topoi},
pages   = {663--674},
volume  = {45},
issue = {2},
doi     = {10.1007/s11245-026-10408-3},
}

@inproceedings{DBLP:conf/tableaux/BarrosoNascimentoPP25,
  author       = {Victor Barroso{-}Nascimento and
                  Ekaterina Piotrovskaya and
                  Elaine Pimentel},
  editor       = {Gian Luca Pozzato and
                  Tarmo Uustalu},
  title        = {A Sequent Calculus Perspective on Base-Extension Semantics},
  booktitle    = {Automated Reasoning with Analytic Tableaux and Related Methods - 34th
                  International Conference, {TABLEAUX} 2025, Reykjavik, Iceland, September
                  27-29, 2025, Proceedings},
  series       = {Lecture Notes in Computer Science},
  volume       = {15980},
  pages        = {3--21},
  publisher    = {Springer},
  year         = {2025},
  url          = {https://doi.org/10.1007/978-3-032-06085-3\_1},
  doi          = {10.1007/978-3-032-06085-3\_1},
  bibsource    = {dblp computer science bibliography, https://dblp.org}
}

@article{pts-91,
  author    = {Peter Schroeder-Heister},
  title = {Uniform Proof-Theoretic Semantics for Logical Constants (Abstract)},
  journal = {Journal of Symbolic Logic},
  volume = {56},
  pages = {1142},
  year = {1991}
}

@article{schroeder2006validity,
 author       = {Peter Schroeder{-}Heister},
  title        = {Validity Concepts in Proof-theoretic Semantics},
  journal      = {Synth.},
  volume       = {148},
  number       = {3},
  pages        = {525--571},
  year         = {2006},
  url          = {https://doi.org/10.1007/s11229-004-6296-1},
  doi          = {10.1007/S11229-004-6296-1},
  bibsource    = {dblp computer science bibliography, https://dblp.org}
}

@incollection{Gentzen1969,
	publisher = {North-Holland Publishing Company},
	year = {1969},
	author = {Gerhard Gentzen},
	editor = {Manfred E. Szabo},
	booktitle = {{The Collected Papers of Gerhard Gentzen}},
	title = {{Investigations into Logical Deduction}}
}

@article{Sandqvist2015IL,
	year = {2015},
	doi = {10.1093/jigpal/jzv021},
	pages = {719--731},
	number = {5},
	title = {Base-Extension Semantics for Intuitionistic Sentential Logic},
	volume = {23},
	journal = {Logic J. of  the IGPL},
	author = {Tor Sandqvist}
}

@InCollection{sep-proof-theoretic-semantics,
	author       =	{Schroeder-Heister, Peter},
	title        =	{{Proof-Theoretic Semantics}},
	address = {Palo Alto},
	booktitle    =	{The {Stanford} Encyclopedia of Philosophy},
	editor       =	{Edward N. Zalta and Uri Nodelman},
	howpublished =	{\url{https://plato.stanford.edu/archives/sum2024/entries/proof-theoretic-semantics/}},
	year         =	{2024},
	edition      =	{{S}ummer 2024},
	publisher    =	{Metaphysics Research Lab, Stanford University}
}

@article{piecha2015failure,
  author = {Thomas Piecha and Wagner de Campos Sanz and Peter Schroeder{-}Heister},
	doi = {10.1007/s10992-014-9322-x},
	journal = {Journal of Philosophical Logic},
	number = {3},
	pages = {321--335},
	publisher = {Springer Netherlands},
	title = {Failure of Completeness in Proof-Theoretic Semantics},
	volume = {44},
	year = {2015}
}

@Article{girard87tcs,
  author =       "Jean-Yves Girard",
  title =        "Linear Logic",
  journal =      "Theoretical Computer Science",
  volume =       "50",
  year =         "1987",
  pages =        "1--102",
  	doi = {10.1016/0304-3975(87)90045-4},
}

@article{DBLP:journals/aml/DyckhoffN12,
  author    = {Roy Dyckhoff and
               Sara Negri},
  title     = {Proof analysis in intermediate logics},
  journal   = {Arch. Math. Log.},
  volume    = {51},
  number    = {1-2},
  pages     = {71--92},
  year      = {2012},
  url       = {http://dx.doi.org/10.1007/s00153-011-0254-7},
  doi       = {10.1007/s00153-011-0254-7},
  bibsource = {dblp computer science bibliography, http://dblp.org}
}

@article{Prawitz-2006,
 doi       = {10.1007/s11229-004-6295-2},
 title     = {Meaning Approached Via Proofs},
 author    = {Dag Prawitz},
 publisher = {Springer Netherlands},
 journal   = {Synthese},
 issn      = {0039-7857,1573-0964},
 year      = {2006},
 volume    = {148},
 issue     = {3},
 pages     = {507--524},
 url       = {}
}

@book{Dalen,
   title =     {Logic and Structure},
   author =    {Dirk van Dalen},
   publisher = {Springer},
   isbn =      {},
   year =      {2013},
   series =    {Universitext},
   edition =   {5th},
   volume =    {},
   url =       {}
}

@article{Sandqvist,
    author = {Sandqvist, Tor},
    title = "{Classical logic without bivalence}",
    journal = {Analysis},
    volume = {69},
    number = {2},
    pages = {211-218},
    year = {2009},
    month = {04},
    issn = {0003-2638},
    doi = {10.1093/analys/anp003},
    url = {https://doi.org/10.1093/analys/anp003},
    eprint = {https://academic.oup.com/analysis/article-pdf/69/2/211/296090/anp003.pdf},
}

\appendix

\section{Additive and multiplicative simulation bases}\label{app:sim-base}
Figures~\ref{fig:comp} and~\ref{fig:compmult} present the additive and multiplicative simulation bases, respectively, used in the completeness proof presented in Section~\ref{sec:soundcomp}.
\begin{figure}[h]
\[\infer[L \&]{p^{A \& B}, \Gamma_{\At} \Rightarrow \Delta_{\At}}{p^{A},\Gamma_{\At} \Rightarrow \Delta_{\At}}
\quad
\infer[L \&]{p^{A \& B}, \Gamma_{\At} \Rightarrow \Delta_{\At}}{p^{B},\Gamma_{\At} \Rightarrow \Delta_{\At}}
\quad 
\infer[R \&]{\Gamma_{\At} \Rightarrow \Delta_{\At}, p^{A \& B}}{\Gamma_{\At} \Rightarrow  \Delta_{\At}, p^{A}& \Gamma_{\At} \Rightarrow \Delta_{\At}, p^{B}}
\]
\smallskip
\[
\infer[L\oplus]{p^{A \oplus B}, \Gamma_{\At}, \Rightarrow \Delta_{\At}}{p^{A}, \Gamma_{\At} \Rightarrow  \Delta_{\At} & p^{B}, \Gamma_{\At} \Rightarrow  \Delta_{\At}}\quad
\infer[R \oplus_{1}]{\Gamma_{\At} \Rightarrow \Delta_{\At}, p^{A \oplus B}}{\Gamma_{\At} \Rightarrow \Delta_{\At}, p^{A}}
\quad
\infer[R \oplus_{2}]{\Gamma_{\At} \Rightarrow \Delta_{\At}, p^{A \oplus B}}{\Gamma_{\At} \Rightarrow \Delta_{\At}, p^{B}}
\]
\smallskip
\[
\infer[L \to]{p^{A \to B}, \Gamma_{\At} \Rightarrow \Delta_{\At}}{\Gamma_{\At} \Rightarrow  \Delta_{\At}, p^{A} & p^{B}, \Gamma_{\At} \Rightarrow \Delta_{\At}}
\quad 
\infer[R \to_{1}]{\Gamma_{\At} \Rightarrow \Delta_{\At}, p^{A \to B}}{p^{A}, \Gamma_{\At} \Rightarrow \Delta_{\At}}
\quad 
\infer[R \to_{2}]{\Gamma_{\At} \Rightarrow \Delta_{\At}, p^{A \to B}}{\Gamma_{\At} \Rightarrow \Delta_{\At}, p^{B}}
\]
\smallskip
\[
\infer[L0]{\Gamma_{\At}, p^{0} \Rightarrow \Delta_{\At}}{}
\quad
\infer[R\top]{\Gamma_{\At}\Rightarrow \Delta_{\At},p^\top}{}
\]

\caption{Rules of the additive simulation base $\U$.}\label{fig:comp}
\end{figure}

\begin{figure}[h!]
\[\infer[L \otimes]{p^{A \otimes B}, \Gamma_{\At} \Rightarrow \Delta_{\At}}{p^{A}, p^{B}, \Gamma_{\At} \Rightarrow \Delta_{\At}}
\quad
\infer[R \otimes]{\Gamma_{\At}, \Gamma'_{\At} \Rightarrow \Delta_{\At}, \Delta'_{\At} p^{A \otimes B}}{\Gamma_{\At} \Rightarrow  \Delta_{\At}, p^{A}& \Gamma'_{\At} \Rightarrow \Delta'_{\At}, p^{B}}
\]
\smallskip
\[
\infer[L\parr]{p^{A \parr B}, \Gamma_{\At}, \Gamma'_{\At}, \Rightarrow \Delta_{\At}, \Delta'_{\At}}{p^{A}, \Gamma_{\At} \Rightarrow  \Delta_{\At} & p^{B}, \Gamma'_{\At} \Rightarrow  \Delta'_{\At}}\quad
\infer[R \parr_{1}]{\Gamma_{\At} \Rightarrow \Delta_{\At}, p^{A \parr B}}{\Gamma_{\At} \Rightarrow \Delta_{\At}, p^{A}, p^{B}}
\]
\smallskip
\[
\infer[L \multimap]{p^{A \multimap B}, \Gamma_{\At}, \Gamma_{\At}' \Rightarrow \Delta_{\At}, \Delta_{\At}'}{\Gamma_{\At} \Rightarrow  \Delta_{\At}, p^{A} & p^{B}, \Gamma_{\At}' \Rightarrow \Delta_{\At}'}
\quad 
\infer[R \multimap]{\Gamma_{\At} \Rightarrow \Delta_{\At}, p^{A \multimap B}}{p^{A}, \Gamma_{\At} \Rightarrow \Delta_{\At}, p^{B}}
\]
\smallskip
\[
\infer[L \bot]{ p^{\bot} \Rightarrow \varnothing}{}
\quad
\infer[R \bot]{\Gamma_{\At} \Rightarrow \Delta_{\At}, p^{\bot}}{\Gamma_{\At} \Rightarrow \Delta_{\At}}
\quad 
\infer[L 1]{\Gamma_{\At}, p^{1} \Rightarrow \Delta_{\At}}{\Gamma_{\At} \Rightarrow \Delta_{\At}}
\quad 
\infer[R 1]{\varnothing \Rightarrow p^1}{}
\]
\caption{Rules of the multiplicative simulation base $\U$.}\label{fig:compmult}
\end{figure}

\newpage
\section{Proofs of structural properties -- Section~\ref{sec:soundcomp}}\label{app:struc}
\begin{definition}
    The {\em degree of a formula $A$} is the number of logical operators occurring in it. The {\em degree of a set $\Delta$} is the sum of the degree of all formulas in $\Delta$. Given a collection of $n$ sets $\Pi^{i}$ ($1 \leq i \leq n$), $\{\Pi^i\}_1^n$ is defined as the sum of degrees of all $\Pi^{i}$.
\end{definition}

\begin{proof}[of Lemma~\ref{lemma:non-atomicinference}]
    ($\Leftarrow$): This direction follows immediately from (Inf), as the statement holds in particular for atomic sets $\Pi^i\Atomic$.

    ($\Rightarrow$):  For the base case, if $\mathrm{deg}(\{\Pi^i\}_1^n)=0$, it follows that for all $i$, $\mathrm{deg}(\Pi^i)=0$. In other words, all sets only contain atoms, hence the result follows by (Inf). For our induction hypothesis, assume the statement holds for $\mathrm{deg}(\{\Pi^i\}_1^n)\leq k$. 

    Suppose $\mathrm{deg}(\{\Pi^i\}_1^n)=k+1$. Then, some set contains a complex formula $B$. Without loss of generality, say $\Pi^1 = \{\Omega, B\}$. We shall proceed by a structural induction on $B$, where in all cases we want to show that:
    \begin{align*}
        \text{if }&\supp{C}{\Sigma^1\Atomic}A^1, \Omega, B,\,\supp{C}{\Sigma^2\Atomic}A^2,\Pi^2,\dots,\,\supp{C}{\Sigma^n\Atomic}A^n,\Pi^n\\
        \text{ then }&\supp{C}{\Gamma\Atomic\cup(\bigcup\Sigma^i\Atomic)}\Delta,\Omega,B,\Pi^2,\dots,\Pi^n \qquad (\star).
    \end{align*}
    We shall demonstrate the case for one of the connectives here. The other cases are analogous.
    
    \begin{itemize}
        \item[$B = C_{1}\& C_{2}$:] We have that $\supp{C}{\Sigma^1\Atomic}A^1, \Omega, C_{1}\& C_{2}$. By ($\&$), and for $i$ equal to either 1 or 2, for all $\base{D}\supseteq\base{C}$, and for any atomic $\Theta\Atomic,\Lambda\Atomic$;
        \[
        (\star) \iff\text{if } C_i\supp{D}{\Theta\Atomic}\Lambda\Atomic \text{, then }\supp{D}{\Gamma\Atomic\cup\bigcup\Sigma^i\Atomic\cup\Theta\Atomic}\Delta,\Omega,\Lambda\Atomic,\Pi^2,\dots,\Pi^n
        \]
        Let $\base{D}\supseteq\base{C}$ such that $C_i\supp{D}{\Theta\Atomic}\Lambda\Atomic$. As $\supp{C}{\Sigma^1\Atomic}A^1, \Omega, C_{1}\& C_{2}$, we have by ($\&$) that $\supp{D}{\Sigma^1\Atomic\cup\Theta\Atomic}\Omega,\Lambda\Atomic,A^1$.
        
        Note that $\mathrm{deg}(\Lambda\Atomic)=0$ and $\mathrm{deg}(\varphi)\geq 1$, as $\varphi$ is a complex formula. Hence, $\mathrm{deg}(\{\Omega,\Lambda\Atomic,\Pi^2,\dots,\Pi^n\})$ is strictly less than $\mathrm{deg}(\{\Omega,\varphi,\Pi^2,\dots,\Pi^n\})$. As a result, we can apply the induction hypothesis: as $\supp{D}{\Sigma^1\Atomic\cup\Theta\Atomic}\Omega,\Lambda\Atomic,A^1$, and by monotonicity, $\supp{D}{\Sigma^i\Atomic}A^i,\Pi^i$ for $i=2,\dots,n$, we thus have that $\supp{D}{\Gamma\Atomic\cup\bigcup\Sigma\Atomic^i\cup\Theta\Atomic}\Delta,\Omega,\Lambda\Atomic,\Pi^2,\dots,\Pi^n$, as desired.
\end{itemize}             
\end{proof}

\begin{proof}[of Lemma~\ref{lemma:partialInf}]
  Without loss of generality, let $\Delta = \{ B_{1}, \ldots, B_{k}\}$.  Assume $\Gamma, \Delta \Vdash^{\Theta_{\At}}_{\B} \Sigma$ and $\Vdash^{\Omega^{m}_{\At}}_{\B} \Pi^{m}_{\At}, A_{m}$ for all $1 \leq m \leq k$. Now take any $\C \supseteq \B$ such that $\Vdash^{\Phi^{i}_{\At}}_{\C} B_{i}, \Psi^{i}_{\At}$ for all $1 \leq i \leq k$. By monotonicity we have $\Vdash^{\Omega^{m}_{\At}}_{\C} \Pi^{m}_{\At}, A_{m}$ for all such $m$, hence since $\Gamma, \Delta \Vdash^{\Theta_{\At}}_{\B} \Sigma$ we conclude $ \Vdash^{\Theta_{\At},  \Omega^{1}_{\At}, \ldots, \Omega^{n}_{\At}, \Phi^{1}_{\At}, \ldots, \Phi^{n}_{\At}}_{\B} \Sigma, \Pi^{1}_{\At}, \ldots, \Pi^{n}_{\At}, \Psi^{1}_{\At}, \ldots, \Psi^{n}_{\At}$. Since $\C$, $\Phi^{i}_{\At}, \Psi^{i}_{\At}$ were all arbitrary we conclude $\Delta \Vdash^{\Theta_{\At}, \Omega^{1}_{\At}, \ldots, \Omega^{n}_{\At}}_{\B} \Sigma, \Pi^{1}_{\At}, \ldots, \Pi^{n}_{\At}$.
\end{proof}

\begin{proof}[of Lemma~\ref{lemma:basecut}]
 Assume $\Gamma \Vdash^{\Theta_{\At}}_{\B} A, \Delta$ and $A, \Sigma \Vdash^{\Omega_{\At}}_{\B} \Pi$. If $\Gamma = \Sigma = \{\varnothing\}$ the result follows immediately by Lemma \ref{lemma:non-atomicinference}.  If neither set is empty, let $\Gamma = \{B_{1}, \ldots, B_{n}\}$, $\Sigma = \{C_{1}, \ldots, C_{k}\}$. Now pick any $\C \supseteq \B$ such that $\Vdash^{\Phi^{m}_{\At}}_{\C} \Psi^{m}_{\At}, B_{m}$ for $1 \leq m \leq n$ and $\Vdash^{\Upsilon^{i}_{\At}}_{\C} \Lambda^{i}_{\At}, C_{i}$ for $1 \leq i \leq k$. Since $\Gamma \Vdash^{\Theta_{\At}}_{\B} A, \Delta$ we have $ \Vdash^{\Theta_{\At}, \Phi^{1}_{\At}, \ldots, \Phi^{n}_{\At}}_{\C} \Psi^{1}_{\At}, \ldots, \Psi^{n}_{\At}, A, \Delta,$. Since $A, \Sigma \Vdash^{\Omega_{\At}}_{\B} \Pi$ by Lemma \ref{lemma:partialInf} we have $A \Vdash^{\Omega_{\At}, \Upsilon^{1}_{\At}, \ldots, \Upsilon^{k}_{\At}}_{\B} \Lambda^{1}_{\At}, \ldots, \Lambda^{k}_{\At}, \Pi$, so we can apply Lemma \ref{lemma:non-atomicinference} to conclude $\Vdash^{\Theta_{\At}, \Omega_{\At}, \Upsilon^{1}_{\At}, \ldots, \Upsilon^{k}_{\At}}_{\B} \Lambda^{1}_{\At}, \ldots, \Lambda^{k}_{\At}, \Pi, \Delta$. Since the $\Phi^{m}_{\At}, \Psi^{m}_{\At}, \Upsilon^{i}_{\At}, \Lambda^{i}_{\At}$ were all arbitrary we conclude $\Gamma, \Sigma \Vdash^{\Theta_{\At}, \Omega_{\At}}_{\B} \Delta, \Pi$. The proofs are analogous if only one set is empty, the difference being that if $\Gamma = \{\varnothing\}$ we conclude $\Vdash^{\Theta_{\At}}_{\C} A, \Delta$ by Monotonicity before applying Lemma \ref{lemma:non-atomicinference}, and if $\Sigma = \{\varnothing\}$ we omit the steps of assuming $\Vdash^{\Upsilon^{i}_{\At}}_{\C} \Lambda^{i}_{\At}, C_{i}$ for $1 \leq i \leq k$ and applying Lemma \ref{lemma:partialInf}.
\end{proof}

\begin{proof}[of Proposition~\ref{prop:subs}]
For a counterexample, let $\Gamma_{\At} = \{p_{1}, \ldots, p_{n}\}$. Define $\B$ as the base containing all rules of the following shape for all $\Phi^{1}_{\At}, \ldots, \Phi^{n}_{\At}, \Psi^{1}_{\At}, \ldots, \Psi^{n}_{\At}$, and no other rules
\begin{prooftree}
    \AxiomC{$\Phi^{1}_{\At} \Rightarrow p_{1}, \Psi^{1}_{\At}$}
    \AxiomC{$\ldots$}
     \AxiomC{$\Phi^{n}_{\At} \Rightarrow p_{n}, \Psi^{n}_{\At}$}
     \TrinaryInfC{$\Gamma_{\At}, \Phi^{1}_{\At}, \ldots, \Phi^{n}_{\At} \Rightarrow \Psi^{1}_{\At}, \ldots, \Psi^{n}_{\At}, \Delta_{\At}$}
\end{prooftree}
 Take any $\C \supseteq \B$ such that $\Vdash^{\Omega^{m}_{\At}}_{\C} \Pi^{m}_{\At}, p_{m}$ for all $1 \leq m \leq n$. Then $\vdash_{\C} \Omega^{m}_{\At} \Rightarrow \Pi^{m}_{\At}, p_{m}$ for all $m$, so by applying the appropriate rule of $\B$ we get $\vdash_{\C} \Gamma_{\At}, \Omega^{1}_{\At}, \ldots, \Omega^{n}_{\At} \Rightarrow \Pi^{1}_{\At}, \ldots , \Pi^{n}_{\At}, \Delta_{\At}$ and then $\Vdash^{\Gamma_{\At},  \Omega^{1}_{\At}, \ldots, \Omega^{n}_{\At}}_{\C} \Pi^{1}_{\At}, \ldots , \Pi^{n}_{\At}, \Delta_{\At}$. Since the $\C$, $\Omega^{m}_{\At}$ and $\Pi^{m}_{\At}$ were all arbitrary, we conclude $\Gamma_{\At} \Vdash^{\Theta_{\At}}_{\B} \Delta_{\At}$. But since there are no rules with empty premises in $\B$ we have $\nvdash_{\B}, \Upsilon_{\At} \Rightarrow \Lambda_{\At}$ for all $\Upsilon_{\At}$ and $\Lambda_{\At}$, so in particular $\nvdash_{\B}, \Theta_{\At} \Gamma_{\At} \Rightarrow \Delta_{\At}$.

Now suppose that $\B$ is closed under the atomic rules \init\ and \Cutc. If $\Gamma_{\At}$ is empty the result is immediate. Otherwise, without loss of generality, let $\Gamma_{\At} = \{p_{1}, \ldots , p_{n}\}$. Assume $\Gamma_{\At} \Vdash^{\Theta_{\At}}_{\B} \Delta_{\At}$. By \init\ we conclude $\vdash_{\B} p_{m} \Rightarrow p_{m}$ and thus  $\Vdash^{p_{m}}_{\B} p_{m}$ for all $1 \leq m \leq n$. By (Inf) we conclude $\Vdash^{\Gamma_{\At}, \Theta_{\At}}_{\B} \Delta_{\At}$, so $\vdash_{\B} \Theta_{\At}, \Gamma_{\At} \Rightarrow \Delta_{\At} $. For the converse, assume $\vdash_{\B} \Theta_{\At}, \Gamma_{\At} \Rightarrow \Delta_{\At} $. Pick any $\C \supseteq \B$ such that $\Vdash^{\Omega^{m}_{\At}}_{\C} p_{m}, \Pi^{m}_{\At}$ for all $1 \leq m \leq n$. Then $\vdash_{\C} \Omega^{m}_{\At}  \Rightarrow p_{m}, \Pi^{m}_{\At}$ for all such $m$, and since $\vdash_{\B} \Theta_{\At}, \Gamma_{\At} \Rightarrow \Delta_{\At} $ we obtain $\vdash_{\C} \Theta_{\At},  \Omega^{1}_{\At} , \ldots, \Omega^{n}_{\At} \Rightarrow   \Pi^{1}_{\At} , \ldots, \Pi^{n}_{\At}   \Delta_{\At}$ by applying Cut $n$ times, so $\Vdash^{\Theta_{\At},  \Omega^{1}_{\At} , \ldots, \Omega^{n}_{\At}}_{\C}   \Pi^{1}_{\At} , \ldots, \Pi^{n}_{\At}   \Delta_{\At}$. Since $\C$ and the $\Omega^{m}_{\At}, \Pi^{m}_{\At}$ were all arbitrary we apply (Inf) to conclude $\Gamma_{\At} \Vdash^{\Theta_{\At}}_{\B} \Delta_{\At}$.
\end{proof}

\begin{lemma}\label{lemma:auxiliarycontractionleft} 

      For any $\B$, if $(\Vdash^{p, p, \Gamma_{\At}}_{\mathcal{C}} \Delta_{\At}$ implies  $\Vdash^{p, \Gamma_{\At}}_{\C} \Delta_{\At})$ holds for all $\C \supseteq \B$, $\Gamma_{\At}, \Delta_{\At}$, $p$ then $(\Vdash^{p, p, \Gamma_{\At}}_{\C} \Delta$ implies  $\Vdash^{p, \Gamma_{\At}}_{\C} \Delta)$ holds for all $\C \supseteq \B$, $\Gamma_{\At}, \Delta$, $p$.
    
\end{lemma}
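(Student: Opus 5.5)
The proof goes by induction on the degree of the (multi)set $\Delta$, following the template of the proof of Lemma~\ref{lemma:non-atomicinference}. The base case $\mathrm{deg}(\Delta)=0$ is exactly the hypothesis, since then $\Delta$ is an atomic multiset $\Delta_{\At}$ and support reduces to derivability by (At). For the inductive step, write $\Delta = \Omega, B$ with $B$ complex, and assume $\Vdash^{p,p,\Gamma_{\At}}_{\C} \Omega, B$ for some $\C \supseteq \B$. We do a case analysis on the main connective of $B$ and unfold the corresponding clause of Figure~\ref{fig:sup}.

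Most clauses have a common shape: $\Vdash^{\Xi_{\At}}_{\C}\Omega,B$ holds iff for all $\D \supseteq \C$ and all atomic $\Theta_{\At},\Sigma_{\At}$ satisfying some premise condition $P(\D,\Theta_{\At},\Sigma_{\At})$, we have $\Vdash^{\Xi_{\At}\cup\Theta_{\At}}_{\D}\Omega,\Sigma_{\At}$. This covers $(\&)$, $(\oplus)$, $(\to)$, $(\otimes)$ and $(1)$; for $(\parr)$ and $(\multimap)$ the same holds with two resource sets $\Theta_{\At},\Pi_{\At}$ and two atomic sets $\Sigma_{\At},\Omega'_{\At}$. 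Crucially, the condition $P$ does not mention $\Xi_{\At}$.

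To show $\Vdash^{p,\Gamma_{\At}}_{\C}\Omega,B$, we fix $\D\supseteq\C$ and $\Theta_{\At},\Sigma_{\At}$ satisfying $P$. From the assumption we obtain $\Vdash^{p,p,\Gamma_{\At},\Theta_{\At}}_{\D}\Omega,\Sigma_{\At}$. The degree of $\Omega,\Sigma_{\At}$ is strictly less than that of $\Omega,B$. Moreover, the hypothesis of the lemma is quantified over all extensions of $\B$, and $\D\supseteq\B$, so the induction hypothesis applies in $\D$, taking $\Gamma_{\At},\Theta_{\At}$ as the atomic context. This yields $\Vdash^{p,\Gamma_{\At},\Theta_{\At}}_{\D}\Omega,\Sigma_{\At}$. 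Since $\D,\Theta_{\At},\Sigma_{\At}$ were arbitrary, the clause gives $\Vdash^{p,\Gamma_{\At}}_{\C}\Omega,B$.

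The unit cases are handled separately. For $(\top)$ the conclusion holds trivially. For $(\bot)$ we apply the induction hypothesis directly to $\Omega$. For $(0)$ we apply it to $\Omega,\Sigma_{\At}$ in the same base $\C$, with resource $\Gamma_{\At},\Theta_{\At}$.

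The only delicate point is setting up the induction correctly. The induction must be on the degree of $\Delta$ alone, uniformly in $\C\supseteq\B$ and in the atomic resources, because each step both passes to extensions and enlarges the resource multiset. Multiset bookkeeping is also needed: the two copies of $p$ must remain inside the resource index $\Gamma_{\At}\cup\Theta_{\At}$ so that the hypothesis, which has $\Gamma_{\At}$ arbitrary, applies. Once the statement is quantified as in the lemma, both issues are handled, and no structural property beyond the hypothesis itself is needed.
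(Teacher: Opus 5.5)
Your proof is correct and follows the paper's own argument: induction on the degree of $\Delta$, uniform in the extension and the atomic resources. In each case you unfold the clause for a complex formula of $\Delta$, apply the induction hypothesis in the extension $\mathcal{D}$ with the enlarged resource multiset, and fold the clause back up. The paper writes out only the $\parr$ case and calls $\bot$ and $0$ trivial, whereas your uniform schema for the clauses, together with the explicit unit cases, spells out precisely what ``analogous'' means there.
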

\begin{proof}
  Assume that $(\Vdash^{p, p, \Gamma_{\At}}_{\C} \Delta_{\At}$ implies  $\Vdash^{p, \Gamma_{\At}}_{\C} \Delta_{\At})$ holds for all $\C \supseteq \B, \Gamma_{\At}, \Delta_{\At}$, $p$.  Now pick an arbitrary $\C \supseteq \B$ and assume $\Vdash^{p, p, \Gamma_{\At}}_{\C} \Delta$. If $\Delta = \Delta_{\At}$ the assumptions already yield $\Vdash^{p, \Gamma_{\At}}_{\C} \Delta$. If $\Delta$ is not a set of atoms, there is some non-atomic $B$ with $B \in \Delta$. The proof for $\bot$ and $0$ is trivial. The proof is analogous for all remaining connectives, so we only show the case were $B = C \parr D$ and $\Delta = \Sigma \cup \{ C \parr D\}$. By assumption, $\Vdash^{p, p, \Gamma_{\At}}_{\C} \Sigma, C \parr D$. Now pick any $\D \supseteq \C$ and arbitrary $\Omega_{\At}, \Pi_{\At}, \Phi_{\At}, \Psi_{\At}$ such that $C \Vdash^{\Omega_{\At}}_{\D} \Pi_{\At}$ and $D\Vdash^{\Phi_{\At}}_{\D} \Psi_{\At}$. Then $\Vdash^{p, p, \Gamma_{\At}, \Omega_{\At}, \Psi_{\At}}_{\D} \Sigma, \Pi_{\At}, \Psi_{\At}$. Since the degree of $\Sigma$ is lesser than that of $\Delta$ the induction hypothesis yields $\Vdash^{p, \Gamma_{\At}, \Omega_{\At}, \Psi_{\At}}_{\D} \Sigma, \Pi_{\At}, \Psi_{\At}$, so by the semantic clause for disjunction we conclude $\Vdash^{p,  \Gamma_{\At}}_{\C} \Sigma, C \parr D$.
\end{proof}

\begin{lemma}\label{lemma:auxiliarycontractionright} 

      For any $\B$, if $(\Vdash^{\Gamma_{\At}}_{\C} \Delta_{\At}, p, p$ implies  $\Vdash^{ \Gamma_{\At}}_{\C} \Delta_{\At}, p)$ holds for all $\C \supseteq \B$, $\Gamma_{\At}, \Delta_{\At}$, $p$ then $(\Vdash^{ \Gamma_{\At}}_{\C} \Delta, p, p$ implies  $\Vdash^{\Gamma_{\At}}_{\C} \Delta, p)$ holds for all $\C \supseteq \B$, $\Gamma_{\At}, \Delta$, $p$.
    
\end{lemma}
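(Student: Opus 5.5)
We mirror the proof of Lemma~\ref{lemma:auxiliarycontractionleft}, proceeding by induction on the degree of $\Delta$ (the sum of the degrees of its formulas). Assume the atomic right contraction property holds for all $\C \supseteq \B$, $\Gamma_{\At}$, $\Delta_{\At}$ and $p$. Fix $\C \supseteq \B$ and suppose $\Vdash^{\Gamma_{\At}}_{\C} \Delta, p, p$.

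\textbf{Base case.} If $\Delta = \Delta_{\At}$ is atomic, the hypothesis gives $\Vdash^{\Gamma_{\At}}_{\C} \Delta, p$ immediately.

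\textbf{Inductive step.} Otherwise pick a non-atomic $B \in \Delta$ and write $\Delta = \Sigma \cup \{B\}$. We case on the main connective of $B$; the cases are uniform.
\begin{itemize}
\item For $B = C \parr D$: take $\D \supseteq \C$ and $\Omega_{\At}, \Pi_{\At}, \Phi_{\At}, \Psi_{\At}$ with $C \Vdash^{\Omega_{\At}}_{\D} \Pi_{\At}$ and $D \Vdash^{\Phi_{\At}}_{\D} \Psi_{\At}$. Clause $(\parr)$ applied to the assumption yields $\Vdash^{\Gamma_{\At}, \Omega_{\At}, \Phi_{\At}}_{\D} \Sigma, \Pi_{\At}, \Psi_{\At}, p, p$. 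The multiset $\Sigma, \Pi_{\At}, \Psi_{\At}$ has strictly smaller degree than $\Delta$, so the induction hypothesis (applied at $\D \supseteq \B$) gives $\Vdash^{\Gamma_{\At}, \Omega_{\At}, \Phi_{\At}}_{\D} \Sigma, \Pi_{\At}, \Psi_{\At}, p$. Since $\D$ and the atomic sets were arbitrary, $(\parr)$ gives $\Vdash^{\Gamma_{\At}}_{\C} \Sigma, C \parr D, p$.
\item The cases $\with$, $\oplus$, $\to$, $\otimes$, $\multimap$ and $1$ are identical: unfold the clause, replace $B$ by an atomic $\Sigma_{\At}$ (degree $0$) while the two copies of $p$ are carried along in the context, apply the induction hypothesis, and refold.
\item For $\top$ the conclusion holds outright.
\item For $\bot$, clause $(\bot)$ removes $\bot$ from the context, which strictly decreases the degree, so the induction hypothesis applies before re-adding $\bot$.
\item For $0$: from $\Vdash^{\Gamma_{\At}}_{\C} \Sigma, 0, p, p$ we get $\Vdash^{\Gamma_{\At}\cup\Theta_{\At}}_{\C} \Sigma, \Sigma'_{\At}, p, p$ for all $\Theta_{\At}, \Sigma'_{\At}$. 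The induction hypothesis removes one $p$, and $(0)$ refolds.
\end{itemize}

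\textbf{Main obstacle.} The delicate point is bookkeeping. The contracted copies of $p$ must sit in the passive context $\Delta$ of every clause, and the induction must be over \emph{all} extensions $\D \supseteq \B$ and all resource multisets simultaneously. That generality is exactly what the hypothesis ``for all $\C \supseteq \B$, $\Gamma_{\At}$, $\Delta_{\At}$'' provides. One must also check that each clause leaves $p,p$ untouched, which holds because every clause in Figure~\ref{fig:sup} quantifies with $\Delta$ passive. Well-foundedness follows because the degree strictly drops whenever $B$ is replaced by atoms.
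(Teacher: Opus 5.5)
Your proposal is correct and follows the paper's proof essentially step for step. Both argue by induction on the degree of $\Delta$, discharge the atomic case directly from the hypothesis, and handle a non-atomic $B \in \Delta$ (with $C \parr D$ as the representative case) by unfolding its clause at an arbitrary $\mathcal{D} \supseteq \mathcal{C}$, applying the induction hypothesis to the lower-degree context in which the two copies of $p$ stay passive, and refolding. Your explicit treatment of the units, which the paper only calls trivial, is fine provided the degree counts $\bot$ and $0$ as logical operators, as you implicitly assume, so that removing them strictly lowers it.
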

\begin{proof}
Assume that $(\Vdash^{\Gamma_{\At}}_{\C} \Delta_{\At}, p, p$ implies  $\Vdash^{ \Gamma_{\At}}_{\C} \Delta_{\At}, p)$ holds for all $\C \supseteq \B$, $\Gamma_{\At}, \Delta_{\At}$, $p$. Now take an arbitrary $\C \supseteq \B$ and assume $\Vdash^{ \Gamma_{\At}}_{\C} \Delta, p, p$.  If $\Delta = \Delta_{\At}$ the assumptions already yield $\Vdash^{ \Gamma_{\At}}_{\C} \Delta,p$. If $\Delta$ is not a set of atoms, there is some non-atomic $B$ with $B \in \Delta$. Once again, the proof for $\bot$ and $0$ is trivial and the cases are analogous for all other connectives, so we only show the case where $B = C \parr D$ and $\Delta = \Sigma \cup \{ C \parr D\} \cup \{p\} \cup \{p\}$. By assumption, $\Vdash^{ \Gamma_{\At}}_{\C} \Sigma, C \parr D, p, p$. Now take any $\D \supseteq \C$ and arbitrary $\Omega_{\At}, \Pi_{\At}, \Phi_{\At}, \Psi_{\At}$ such that $C \Vdash^{\Omega_{\At}}_{\D} \Pi_{\At}$ and $D\Vdash^{\Phi_{\At}}_{\D} \Psi_{\At}$. Then $\Vdash^{ \Gamma_{\At}, \Omega_{\At}, \Psi_{\At}}_{\D} \Sigma, \Pi_{\At}, \Psi_{\At}, p, p$. Since the degree of $\Sigma$ is lower than that of $\Delta$ the induction hypothesis yields $\Vdash^{ \Gamma_{\At}, \Omega_{\At}, \Psi_{\At}}_{\D} \Sigma, \Pi_{\At}, \Psi_{\At}, p$, so by the semantic clause for disjunction we conclude $\Vdash^{ \Gamma_{\At}}_{\C} \Sigma, C \parr D, p$.
\end{proof}

\begin{proof}[of Theorem~\ref{thm:contractionleftright}]
($\Rightarrow$): For this direction, we actually prove something stronger: if $(\Gamma, A, A \Vdash^{\Theta_{\At}}_{\C} \Delta$ implies $\Gamma, A \Vdash^{\Theta_{\At}}_{\C} \Delta)$  for all $\C \supseteq \B$, $\Gamma_{\At}, \Delta$, $p$ then $(\Vdash^{p, p, \Gamma_{\At}}_{\C} \Delta_{\At}$ implies  $\Vdash^{p, \Gamma_{\At}}_{\C} \Delta_{\At})$ for all  $\C \supseteq \B$, $\Gamma_{\At}, \Delta_{\At}$, $p$, and if $(\Gamma \Vdash^{\Theta_{\At}}_{\C} \Delta, A, A$ implies $\Gamma \Vdash^{\Theta_{\At}}_{\C} \Delta, A)$ for all $\C \supseteq \B$, $\Gamma_{\At}, \Delta$, $p$ then $(\Vdash^{ \Gamma_{\At}}_{\C} \Delta_{\At}, p, p$ implies  $\Vdash^{ \Gamma_{\At}}_{\C} \Delta_{\At}, p)$ for all $\C \supseteq \B$, $\Gamma_{\At}, \Delta_{\At}$, $p$.

Assume that $(\Gamma, A, A \Vdash^{\Theta_{\At}}_{\C} \Delta$ implies $\Gamma, A \Vdash^{\Theta_{\At}}_{\C} \Delta$) holds for all $A$ and all $\C \supseteq \B$, $\Theta_{\At}, \Gamma, \Delta$. Now pick an arbitrary $\C \supseteq \B$ and assume $\Vdash^{p, p, \Gamma_{\At}}_{\C} \Delta_{\At}$. Then $\vdash_{\C} p, p, \Gamma_{\At} \Rightarrow \Delta_{\At}$.  Pick any $\D \supseteq \C$ such that $\Vdash^{\Theta_{\At}}_{\D} \Omega_{\At}, p$ and $\Vdash^{\Phi_{\At}}_{\D} \Psi_{\At}, p$. Then $\vdash_{\D} \Theta_{\At} \Rightarrow \Omega_{\At}, p$ and $\vdash_{\D} \Phi_{\At} \Rightarrow \Psi_{\At}, p$. Since $\vdash_{\C} \Gamma_{\At}, p, p \Rightarrow \Delta_{\At}$ we also have $\vdash_{\D} \Gamma_{\At}, p, p \Rightarrow \Delta_{\At}$, so by applying Cut twice we get $\vdash_{\D} \Gamma_{\At}, \Theta_{\At}, \Phi_{\At} \Rightarrow \Delta_{\At}, \Omega_{\At}, \Psi_{\At}$, hence $\Vdash^{\Gamma_{\At}, \Theta_{\At}, \Phi_{\At}}_{\D} \Omega_{\At}, \Psi_{\At}, \Delta_{\At}$. Since $\Theta_{\At}, \Omega_{\At}, \Phi_{\At}, \Psi_{\At}$ and $\D$ were arbitrary we conclude $p, p \Vdash^{\Gamma^{\At}}_{\C} \Delta_{\At}$. Then we put $\Gamma= \{\varnothing \}$, $\Delta = \Delta_{\At}$,  $\Theta_{\At} = \Gamma_{\At}$ and $A = p$ to conclude $p \Vdash^{\Gamma^{\At}}_{\C} \Delta_{\At}$. By Axiom, $\vdash_{\C} p \Rightarrow p$ and then $\Vdash^{p}_{\C}p$, so by (Inf) we conclude $\Vdash^{p, \Gamma_{\At}}_{\C} \Delta_{\At}$, as desired.

Assume that $(\Gamma \Vdash^{\Theta_{\At}}_{\C} \Delta, A, A$ implies $\Gamma \Vdash^{\Theta_{\At}}_{\C} \Delta, A)$ holds for all $\C \supseteq \B$, $A, \Theta_{\At}, \Gamma, \Delta$. Now pick an arbitrary $\C \supseteq \B$ and assume $\Vdash^{ \Gamma_{\At}}_{\C} \Delta_{\At}, p, p$. After putting $\Gamma= \{\varnothing \}$, $\Delta = \Delta_{\At}$,  $\Theta_{\At} = \Gamma_{\At}$ and $A = p$ we can use the original assumption to conclude $\Vdash^{ \Gamma_{\At}}_{\C} \Delta_{\At}, p$, so the result follows from arbitrariness of $\C \supseteq \B$, $\Gamma_{\At}$, $\Delta_{\At}$, $p$.

($\Leftarrow$): Assume that $((\Vdash^{p, p, \Gamma_{\At}}_{\C} \Delta_{\At}$ implies  $\Vdash^{p, \Gamma_{\At}}_{\C} \Delta_{\At})$ and $(\Vdash^{ \Gamma_{\At}}_{\C} \Delta_{\At}, p, p$ implies  $\Vdash^{ \Gamma_{\At}}_{\C} \Delta_{\At}, p)$) holds for all $\C \supseteq \B$, $\Gamma_{\At}, \Delta_{\At}$,$p$. 

Pick an arbitrary $\C \supseteq \B$ and assume $\Gamma, A, A \Vdash^{\Theta_{\At}}_{\C} \Delta$. Let $\Gamma = \{B_{1}, \ldots, B_{n}\}$. Pick any  $\D \supseteq \C$ such that $\Vdash^{\Phi^{m}_{\At}}_{\D} B_{m}, \Psi^{m}_{\At}$ for all $1 \leq m \leq n$ and $\Vdash^{\Upsilon_{\At}}_{\D} \Lambda_{\At}, A$. By applying (Inf) and using $\Vdash^{\Upsilon_{\At}}_{\D} \Lambda_{\At}, A$ twice we conclude $\Vdash^{\Theta_{\At}, \Phi^{1}_{\At}, \ldots, \Phi^{n}_{\At}, \Upsilon_{\At}, \Upsilon_{\At}}_{\D} \Delta,  \Psi^{1}_{\At}, \ldots, \Psi^{n}_{\At}, \Lambda_{\At}, \Lambda_{\At}$. By applying Lemmas \ref{lemma:auxiliarycontractionleft} and \ref{lemma:auxiliarycontractionright} a sufficient amount of times we conclude $\Vdash^{\Theta_{\At}, \Phi^{1}_{\At}, \ldots, \Phi^{n}_{\At}, \Upsilon_{\At}}_{\D} \Delta,  \Psi^{1}_{\At}, \ldots, \Psi^{n}_{\At}, \Lambda_{\At}$, so by (Inf) we conclude $\Gamma, A, A \Vdash^{\Theta_{\At}}_{\C} \Delta$. The proof for $\Gamma = \{ \varnothing \}$ is analogous.

Now pick any $\C \supseteq \B$ and assume $\Gamma \Vdash^{\Theta_{\At}}_{\C} \Delta, A, A$. If $\Gamma = \{\varnothing\}$ and $A = p$ if follows from Lemma \ref{lemma:auxiliarycontractionright} that $\Gamma \Vdash^{\Theta_{\At}}_{\C} \Delta, p$. The proofs for the cases in which $\Gamma = \{\varnothing\}$ and $A$ is non-atomic and in which $\Gamma \neq \{ \varnothing\}$ are analogous, so we only prove the former. Just as in the proof of Lemma \ref{lemma:auxiliarycontractionleft}, the proof is similar in the case of all connectives, so we only write the case where $A = \{C \parr D\}$. By assumption, $\Vdash^{ \Theta_{\At}}_{\C} \Delta, C \parr D, C \parr D$. Now pick any $\D \supseteq \C$ and arbitrary $\Omega_{\At}, \Pi_{\At}, \Phi_{\At}, \Psi_{\At}$ such that $C \Vdash^{\Omega_{\At}}_{\D} \Pi_{\At}$ and $D\Vdash^{\Phi_{\At}}_{\D} \Psi_{\At}$. Then by applying the semantic clause for disjunction we obtain $\Vdash^{ \Theta_{\At}, \Omega_{\At}, \Psi_{\At}}_{\D} \Delta, C \parr D, \Pi_{\At}, \Psi_{\At}$. By applying the semantic clause for disjunction again we obtain $\Vdash^{ \Theta_{\At}, \Omega_{\At}, \Psi_{\At}, \Omega_{\At}, \Psi_{\At}}_{\D} \Delta, \Pi_{\At}, \Psi_{\At} , \Pi_{\At}, \Psi_{\At}$. By applying Lemmas \ref{lemma:auxiliarycontractionleft} and \ref{lemma:auxiliarycontractionright} a sufficient number of times we obtain $\Vdash^{ \Theta_{\At}, \Omega_{\At}, \Psi_{\At}}_{\D} \Delta,  \Pi_{\At}, \Psi_{\At}$, so one last application of the semantic clause for disjunction yields $\Vdash^{\Theta_{\At}}_{\C} \Delta, C \parr D$. The only difference in the proof of the case where $\Gamma$ is non-empty is that we let $\Gamma = \{B_{1}, \ldots, B_{n}\}$ and the extension $\D \supseteq \C$ is also required to have $\Vdash^{\Upsilon^{m}_{\At}}_{\D} B_{m}, \Lambda^{m}_{\At}$ for all $1 \leq m \leq n$.
\end{proof}

\begin{lemma}\label{lemma:auxiliaryweakeningleft} 

      For any $\B$, if $(\Vdash^{ \Gamma_{\At}}_{\mathcal{C}} \Delta_{\At}$ implies  $\Vdash^{p, \Gamma_{\At}}_{\C} \Delta_{\At})$ holds for all $\C \supseteq \B$, $\Gamma_{\At}, \Delta_{\At}$, $p$ then $(\Vdash^{ \Gamma_{\At}}_{\C} \Delta$ implies  $\Vdash^{p, \Gamma_{\At}}_{\C} \Delta)$ holds for all $\C \supseteq \B$, $\Gamma_{\At}, \Delta$, $p$.
    
\end{lemma}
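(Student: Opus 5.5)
I would prove Lemma~\ref{lemma:auxiliaryweakeningleft} exactly as Lemma~\ref{lemma:auxiliarycontractionleft} was proved. The argument is an induction on the degree of $\Delta$ (the sum of the degrees of its formulas), for arbitrary $\C \supseteq \B$ and $\Gamma_{\At}$. Fix the hypothesis that $\Vdash^{\Gamma_{\At}}_{\C} \Delta_{\At}$ implies $\Vdash^{p,\Gamma_{\At}}_{\C} \Delta_{\At}$ for all $\C \supseteq \B$. Then take $\C \supseteq \B$ with $\Vdash^{\Gamma_{\At}}_{\C} \Delta$.

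\textbf{Base case.} If $\Delta = \Delta_{\At}$ has degree $0$, the hypothesis gives $\Vdash^{p,\Gamma_{\At}}_{\C} \Delta$ directly.

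\textbf{Inductive step.} Otherwise pick a non-atomic $B \in \Delta$, write $\Delta = \Sigma \cup \{B\}$, and argue by cases on the main connective of $B$.

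For every clause of the form ``for all $\D \supseteq \C$ and atomic sets satisfying some premise, $\Vdash^{\Gamma_{\At}\cup\Theta_{\At}\cup\ldots}_{\D} \Sigma, \Lambda_{\At}$'', namely $(\&)$, $(\oplus)$, $(\to)$, $(\otimes)$, $(\parr)$, $(\multimap)$ and $(1)$, the argument is uniform. Take $\D \supseteq \C$ and atomic $\Theta_{\At},\Lambda_{\At},\ldots$ meeting the premise of the clause for $B$. The assumption $\Vdash^{\Gamma_{\At}}_{\C}\Sigma,B$ then yields $\Vdash^{\Gamma_{\At},\Theta_{\At},\ldots}_{\D}\Sigma,\Lambda_{\At}$. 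The multiset $\Sigma,\Lambda_{\At}$ has strictly smaller degree than $\Sigma,B$, since $\Lambda_{\At}$ is atomic and $B$ has degree at least one. Because $\D \supseteq \B$, the inductive hypothesis applies at $\D$ and gives $\Vdash^{p,\Gamma_{\At},\Theta_{\At},\ldots}_{\D}\Sigma,\Lambda_{\At}$. Since $\D$ and the atomic sets were arbitrary, the same clause gives $\Vdash^{p,\Gamma_{\At}}_{\C}\Sigma,B$.

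The remaining clauses are handled individually:
\begin{itemize}
\item $(\top)$ is trivial.
\item For $(\bot)$, we have $\Vdash^{\Gamma_{\At}}_{\C}\Sigma$. The inductive hypothesis gives $\Vdash^{p,\Gamma_{\At}}_{\C}\Sigma$, and the clause gives $\Vdash^{p,\Gamma_{\At}}_{\C}\Sigma,\bot$.
\item For $(0)$, for all $\Theta_{\At},\Lambda_{\At}$ we have $\Vdash^{\Gamma_{\At},\Theta_{\At}}_{\C}\Sigma,\Lambda_{\At}$. Instantiating $\Theta_{\At}$ as $p,\Theta'_{\At}$ yields $\Vdash^{p,\Gamma_{\At},\Theta'_{\At}}_{\C}\Sigma,\Lambda_{\At}$ for all $\Theta'_{\At}$, which is exactly the clause for $p,\Gamma_{\At}$. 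No induction is needed here.
\end{itemize}

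\textbf{Main obstacle.} The only real point of care is bookkeeping. The induction must be stated uniformly over all extensions $\C\supseteq\B$ and all resource multisets, since the recursive call happens at $\D$ and with enlarged resources $\Gamma_{\At},\Theta_{\At},\ldots$. The degree measure must also decrease even though atoms $\Lambda_{\At}$, and possibly several of them as in the $(\parr)$ and $(\multimap)$ cases with $\Sigma_{\At},\Omega_{\At}$, are added. Both hold because the measure counts only connectives, and because the hypothesis is assumed for all $\C\supseteq\B$ while extensions compose transitively.
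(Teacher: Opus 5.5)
Your proposal is correct and follows the paper's proof: induction on the degree of $\Delta$, where the atomic case comes straight from the hypothesis and each non-atomic case unfolds the clause at an arbitrary $\mathcal{D} \supseteq \mathcal{C}$, applies the induction hypothesis to the lower-degree multiset $\Sigma, \Lambda$ (with $\Lambda$ atomic), and refolds the clause. The paper writes out only the $\parr$ case and calls the rest analogous, so your explicit treatment of $(\top)$, $(\bot)$ and $(0)$ simply fills in cases it leaves implicit; note that $(0)$ needs no induction, since $p$ can be absorbed into the quantified resources.
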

\begin{proof}
  Assume that $(\Vdash^{ \Gamma_{\At}}_{\C} \Delta_{\At}$ implies  $\Vdash^{p, \Gamma_{\At}}_{\C} \Delta_{\At})$ holds for all $\C \supseteq \B, \Gamma_{\At}, \Delta_{\At}$, $p$.  Now pick an arbitrary $\C \supseteq \B$ and assume $\Vdash^{\Gamma_{\At}}_{\C} \Delta$. If $\Delta = \Delta_{\At}$ the assumptions already yield $\Vdash^{p, \Gamma_{\At}}_{\C} \Delta$. If $\Delta$ is not a set of atoms, there is some non-atomic $B$ with $B \in \Delta$. The proof is analogous for all connectives, so we only show the case were $B = C \parr D$ and $\Delta = \Sigma \cup \{ C \parr D\}$. By assumption, $\Vdash^{ \Gamma_{\At}}_{\C} \Sigma, C \parr D$. Now pick any $\D \supseteq \C$ and arbitrary $\Omega_{\At}, \Pi_{\At}, \Phi_{\At}, \Psi_{\At}$ such that $C \Vdash^{\Omega_{\At}}_{\D} \Pi_{\At}$ and $D\Vdash^{\Phi_{\At}}_{\D} \Psi_{\At}$. Then $\Vdash^{ \Gamma_{\At}, \Omega_{\At}, \Psi_{\At}}_{\D} \Sigma, \Pi_{\At}, \Psi_{\At}$. Since the degree of $\Sigma$ is lesser than that of $\Delta$ the induction hypothesis yields $\Vdash^{p, \Gamma_{\At}, \Omega_{\At}, \Psi_{\At}}_{\D} \Sigma, \Pi_{\At}, \Psi_{\At}$, so by the semantic clause for disjunction we conclude $\Vdash^{p,  \Gamma_{\At}}_{\C} \Sigma, C \parr D$.
\end{proof}

\begin{lemma}\label{lemma:auxiliaryweakeningright} 

      For any $\B$, if $(\Vdash^{\Gamma_{\At}}_{\C} \Delta_{\At}$ implies  $\Vdash^{ \Gamma_{\At}}_{\C} \Delta_{\At}, p)$ holds for all $\C \supseteq \B$, $\Gamma_{\At}, \Delta_{\At}$, $p$ then $(\Vdash^{ \Gamma_{\At}}_{\C} \Delta$ implies  $\Vdash^{\Gamma_{\At}}_{\C} \Delta, p)$ holds for all $\C \supseteq \B$, $\Gamma_{\At}, \Delta$, $p$.
    
\end{lemma}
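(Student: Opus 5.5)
The plan mirrors the proofs of Lemmas~\ref{lemma:auxiliarycontractionright} and~\ref{lemma:auxiliaryweakeningleft}. We induct on the degree of $\Delta$, uniformly over all $\C \supseteq \B$ and all $\Gamma_{\At}$; this uniformity is needed because the inductive step moves to extensions $\D \supseteq \C$ and to enlarged resource multisets. Throughout we assume that $(\Vdash^{\Gamma_{\At}}_{\C} \Delta_{\At}$ implies $\Vdash^{\Gamma_{\At}}_{\C} \Delta_{\At}, p)$ holds for all $\C \supseteq \B$, all $\Gamma_{\At}, \Delta_{\At}$ and all $p$.

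\textbf{Base case.} If $\Delta = \Delta_{\At}$ is atomic, the hypothesis gives the result directly.

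\textbf{Inductive step.} Otherwise $\Delta = \Sigma \cup \{B\}$ with $B$ non-atomic, and we unfold the clause for the main connective of $B$. Take $B = C \parr D$ as the representative case. From $\Vdash^{\Gamma_{\At}}_{\C} \Sigma, C\parr D$, pick any $\D \supseteq \C$ and atomic $\Omega_{\At}, \Pi_{\At}, \Phi_{\At}, \Psi_{\At}$ with $C \Vdash^{\Omega_{\At}}_{\D} \Pi_{\At}$ and $D \Vdash^{\Phi_{\At}}_{\D} \Psi_{\At}$. The clause yields $\Vdash^{\Gamma_{\At},\Omega_{\At},\Phi_{\At}}_{\D} \Sigma, \Pi_{\At}, \Psi_{\At}$. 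The set $\Sigma, \Pi_{\At}, \Psi_{\At}$ has strictly smaller degree than $\Delta$, so the induction hypothesis (at $\D$) gives $\Vdash^{\Gamma_{\At},\Omega_{\At},\Phi_{\At}}_{\D} \Sigma, \Pi_{\At}, \Psi_{\At}, p$. Since $\D$ and the atomic sets were arbitrary, the clause for $\parr$, read with context $\Sigma, p$, returns $\Vdash^{\Gamma_{\At}}_{\C} \Sigma, p, C\parr D$.

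\textbf{Other connectives.} The cases for $\&$, $\oplus$, $\to$, $\otimes$, $\multimap$ and $1$ are identical in shape: each clause is universally quantified over extensions and atomic $\Theta_{\At}, \Sigma_{\At}$, and concludes support of the context together with atoms only, so the degree drops and the hypothesis applies. For $\top$ the conclusion holds trivially because $\top$ remains in $\Delta, p$. For $\bot$, support of $\Sigma,\bot$ is support of $\Sigma$, which has lower degree, so the induction hypothesis followed by the $\bot$ clause finishes. For $0$, the clause quantifies over all atomic $\Sigma'_{\At}$ and $\Theta_{\At}$ at the same base. Applying the induction hypothesis to each $\Sigma, \Sigma'_{\At}$ gives $\Vdash \Sigma, \Sigma'_{\At}, p$, from which the $0$ clause yields $\Sigma, p, 0$. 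Alternatively, one can simply choose $\Sigma'_{\At}$ to contain $p$, which avoids the induction hypothesis altogether.

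\textbf{Main obstacle.} The only delicate point is bookkeeping. The induction measure must be the total degree of the non-atomic part of $\Delta$, and the extra atom $p$ must be carried into the context of every clause, not treated as a new active formula. It also matters that the quantified $\Sigma_{\At}$ in each clause is allowed to interact with $p$ only as an ordinary context atom. No structural rules of the base are used beyond the stated hypothesis, exactly as in the left-weakening lemma.
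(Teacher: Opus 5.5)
Your proposal is correct and follows essentially the same route as the paper: induction on the degree of the succedent, uniformly over extensions and resources, with the atomic case given directly by the hypothesis. The representative $\parr$ case is handled the same way in both, by unfolding the clause in an arbitrary extension, applying the induction hypothesis to the lower-degree context, and refolding with $p$ carried as an ordinary context atom.

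Your extra treatment of the unit cases is a minor refinement of the same argument; the paper simply calls $\bot$ and $0$ trivial. So is your resource bookkeeping, which indexes by $\Phi$ where the paper's write-up mistakenly writes $\Psi$.
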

\begin{proof}
Assume that $(\Vdash^{\Gamma_{\At}}_{\C} \Delta_{\At}$ implies  $\Vdash^{ \Gamma_{\At}}_{\C} \Delta_{\At}, p)$ holds for all $\C \supseteq \B$, $\Gamma_{\At}, \Delta_{\At}$, $p$. Now pick an arbitrary $\C \supseteq \B$ and assume $\Vdash^{ \Gamma_{\At}}_{\C} \Delta$.  If $\Delta = \Delta_{\At}$ the assumptions already yield $\Vdash^{ \Gamma_{\At}}_{\C} \Delta,p$. If $\Delta$ is not a set of atoms, there is some non-atomic $B$ with $B \in \Delta$. Once again, the proof is trivial for $\bot$ and $0$ and analogous for all other connectives, so we only show the case were $B = C \parr D$ and $\Delta = \Sigma \cup \{ C \parr D\}$. By assumption, $\Vdash^{ \Gamma_{\At}}_{\C} \Sigma, C \parr D$. Now pick any $\D \supseteq \C$ and arbitrary $\Omega_{\At}, \Pi_{\At}, \Phi_{\At}, \Psi_{\At}$ such that $C \Vdash^{\Omega_{\At}}_{\D} \Pi_{\At}$ and $D\Vdash^{\Phi_{\At}}_{\D} \Psi_{\At}$. Then $\Vdash^{ \Gamma_{\At}, \Omega_{\At}, \Psi_{\At}}_{\D} \Sigma, \Pi_{\At}, \Psi_{\At}$. Since the degree of $\Sigma$ is lesser than that of $\Delta$ the induction hypothesis yields $\Vdash^{ \Gamma_{\At}, \Omega_{\At}, \Psi_{\At}}_{\D} \Sigma, \Pi_{\At}, \Psi_{\At}, p$, so by the semantic clause for disjunction we conclude $\Vdash^{ \Gamma_{\At}}_{\C} \Sigma, C \parr D, p$.

\end{proof}

\begin{proof}[of Theorem~\ref{thm:weakeningleftright}]
($\Rightarrow$): For this direction, we once again prove something stronger: if $(\Gamma\Vdash^{\Theta_{\At}}_{\C} \Delta$ implies $\Gamma, A \Vdash^{\Theta_{\At}}_{\C} \Delta)$  for all $\C \supseteq \B$, $\Gamma_{\At}, \Delta$, $p$ then $(\Vdash^{ \Gamma_{\At}}_{\C} \Delta_{\At}$ implies  $\Vdash^{p, \Gamma_{\At}}_{\C} \Delta_{\At})$ for all  $\C \supseteq \B$, $\Gamma_{\At}, \Delta_{\At}$, $p$, and if $(\Gamma \Vdash^{\Theta_{\At}}_{\C} \Delta$ implies $\Gamma \Vdash^{\Theta_{\At}}_{\C} \Delta, A)$ for all $\C \supseteq \B$, $\Gamma_{\At}, \Delta$, $p$ then $(\Vdash^{ \Gamma_{\At}}_{\C} \Delta_{\At}$ implies  $\Vdash^{ \Gamma_{\At}}_{\C} \Delta_{\At}, p)$ for all $\C \supseteq \B$, $\Gamma_{\At}, \Delta_{\At}$, $p$.

Assume that $(\Gamma \Vdash^{\Theta_{\At}}_{\C} \Delta$ implies $\Gamma, A \Vdash^{\Theta_{\At}}_{\C} \Delta$) holds for all $A$ and all $\C \supseteq \B$, $\Theta_{\At}, \Gamma, \Delta$. Now pick an arbitrary $\C \supseteq \B$ and assume $\Vdash^{\Gamma_{\At}}_{\C} \Delta_{\At}$. The assumption yields $p \Vdash^{\Gamma_{\At}}_{\C} \Delta_{\At}$. From \init\ we have $\vdash_{\C} p \Rightarrow p$ and then $\Vdash^{p}_{\C} p$, so by (Inf) we conclude $\Vdash^{\Gamma_{\At}, p}_{\C} \Delta_{\At}$

Assume that $(\Gamma \Vdash^{\Theta_{\At}}_{\C} \Delta$ implies $\Gamma \Vdash^{\Theta_{\At}}_{\C} \Delta, A)$ holds for all $\C \supseteq \B$, $A, \Theta_{\At}, \Gamma, \Delta$. Now pick an arbitrary $\C \supseteq \B$ and assume $\Vdash^{ \Gamma_{\At}}_{\C} \Delta_{\At}$. We can use the original assumption to conclude $\Vdash^{ \Gamma_{\At}}_{\C} \Delta_{\At}, p$, so the result follows from arbitrariness of $\C \supseteq \B$, $\Gamma_{\At}$, $\Delta_{\At}$, $p$.

($\Leftarrow$): Assume that $((\Vdash^{ \Gamma_{\At}}_{\C} \Delta_{\At}$ implies  $\Vdash^{p, \Gamma_{\At}}_{\C} \Delta_{\At})$ and $(\Vdash^{ \Gamma_{\At}}_{\C} \Delta_{\At}$ implies  $\Vdash^{ \Gamma_{\At}}_{\C} \Delta_{\At}, p)$) holds for all $\C \supseteq \B$, $\Gamma_{\At}, \Delta_{\At}$,$p$.  Pick an arbitrary $\C \supseteq \B$. If $\Gamma \neq \{\varnothing\}$, assume $\Gamma \Vdash^{\Theta_{\At}}_{\C} \Delta$. Let $\Gamma = \{B_{1}, \ldots, B_{n}\}$. Pick any  $\D \supseteq \C$ such that $\Vdash^{\Phi^{m}_{\At}}_{\D} B_{m}, \Psi^{m}_{\At}$ for all $1 \leq m \leq n$ and $\Vdash^{\Upsilon_{\At}}_{\D} A, \Lambda_{\At}$. Since $\Gamma \Vdash^{\Theta_{\At}}_{\C} \Delta$ holds, we can apply (Inf) to conclude $\Vdash^{\Theta_{\At}, \Phi^{1}_{\At}, \ldots, \Phi^{n}_{\At},}_{\D} \Delta,  \Psi^{1}_{\At}, \ldots, \Psi^{n}_{\At}$. By applying Lemmas \ref{lemma:auxiliaryweakeningleft} and \ref{lemma:auxiliaryweakeningright} a sufficient amount of times we conclude $\Vdash^{\Theta_{\At}, \Phi^{1}_{\At}, \ldots, \Phi^{n}_{\At}, \Upsilon_{\At}}_{\D} \Delta,  \Psi^{1}_{\At}, \ldots, \Psi^{n}_{\At}, \Lambda_{\At}$, so by (Inf) we conclude $\Gamma, A \Vdash^{\Theta_{\At}}_{\C} \Delta$. If $\Gamma = \{\varnothing\}$, we once again pick a $\D \supseteq \C$ such that $\Vdash^{\Upsilon_{\At}}_{\D} A, \Lambda_{\At}$, but then from the assumption $ \Vdash^{\Theta_{\At}}_{\C} \Delta$ we get $ \Vdash^{\Theta_{\At}}_{\D} \Delta$ by Monotonicity, to which we apply Lemmas \ref{lemma:auxiliaryweakeningleft} and \ref{lemma:auxiliaryweakeningright} in the same way.

Now pick any $\C \supseteq \B$ and assume $\Gamma \Vdash^{\Theta_{\At}}_{\C} \Delta$. If $\Gamma = \{\varnothing\}$ and $A$ is atomic ir follows from Lemma \ref{lemma:auxiliarycontractionright} that $\Gamma \Vdash^{\Theta_{\At}}_{\C} \Delta, p$. Once again, the proofs for the cases in which $\Gamma = \{\varnothing\}$ and $A$ is non-atomic and in which $\Gamma \neq \{ \varnothing\}$ are analogous, so we only prove the former. The proof is also similar in for all connectives, so we only write the case where $A = \{C \parr D\}$. By assumption, $\Vdash^{ \Theta_{\At}}_{\C} \Delta$.  Now pick any $\D \supseteq \C$ and arbitrary $\Omega_{\At}, \Pi_{\At}, \Phi_{\At}, \Psi_{\At}$ such that $C \Vdash^{\Omega_{\At}}_{\D} \Pi_{\At}$ and $D\Vdash^{\Phi_{\At}}_{\D} \Psi_{\At}$. By Monotonicity, $\Vdash^{ \Theta_{\At}}_{\D} \Delta$. Then we apply Lemmas \ref{lemma:auxiliaryweakeningleft} and \ref{lemma:auxiliaryweakeningright} a sufficient ammount of times to obtain $\Vdash^{ \Theta_{\At}, \Omega_{\At}, \Psi_{\At}}_{\D} \Delta, \Pi_{\At}, \Psi_{\At}$, so by applying the semantic clause for disjunction we conclude $\Vdash^{ \Theta_{\At}}_{\D} \Delta, A \parr B$. The only difference in the proof of the case where $\Gamma$ is non-empty is that we let $\Gamma = \{B_{1}, \ldots, B_{n}\}$ and the extension $\D \supseteq \C$ is also required to have $\Vdash^{\Upsilon^{m}_{\At}}_{\D} B_{m}, \Lambda^{m}_{\At}$ for all $1 \leq m \leq n$.

\end{proof}

\section{Proofs of soundness and completeness in Section~\ref{sec:soundcomp}}\label{app:sound-comp}
\subsection{The additive case}\label{app:add}
\begin{proof}[of Theorem~\ref{lemma:classicalsoundnessadd} -- The additive case]
We shall prove soundness with respect to the the additive sequent system $\ACLp$ for classical logic. In the following, 
$\Gamma = \{A^1,\dots,A^n\}$. We proceed by structural induction, based on the last rule applied in the derivation of $\Gamma\Rightarrow\Delta$.
    \begin{itemize}
        \item [(Ax):]
        The last rule applied in the derivation is (Ax), so in particular suppose our derivation concludes $A\Rightarrow A$. Pick a classical base $\B$. In any extension $\base{C}$, it holds trivially that if $\supp{C}{\varnothing}A$ then $\supp{C}{\varnothing}A$. By (Inf) we thus have that $A\supp{B}{\varnothing}A$, and as $\base{B}$ was arbitrary, we have that $A\supp{}{}A$ in general.
        
        \item [($L\&$):]
        If the last rule in the derivation is $(L\&)$, then the derivation concludes in one of the the following two ways:
        \begin{prooftree}
            \AxiomC{$A,\Gamma\Rightarrow\Delta$}
            \RightLabel{$L \&_{1}$}
            \UnaryInfC{$A\& B,\Gamma\Rightarrow\Delta$}
            \quad
            \DisplayProof
            \quad
             \AxiomC{$B,\Gamma\Rightarrow\Delta$}
             \RightLabel{$L \&_{2}$}
            \UnaryInfC{$A\& B,\Gamma\Rightarrow\Delta$}
        \end{prooftree}

        We only prove the case of $L \&_{1}$, as the other one is analogous. For the induction, we assume that $A, \Gamma\supp{}{}\Delta$. Let $\Gamma = A^1,\dots, A^n$ and let $\base{B}$ be an arbitrary classical base. We want to show that $A \& B, \Gamma\supp{B}{\varnothing}\Delta$. By (Inf), this is equivalent to showing that for all $\base{C}\supseteq\base{B}$, for all $i$ from 1 to $n$:
        \begin{align*}
        \text{if: }& \supp{C}{\Sigma^i_{\At}}A^i,\Pi^i_{\At},\text{ and }\supp{C}{\Upsilon\Atomic}A\& B, \Lambda\Atomic \\
        \text{then: }& \supp{C}{\left(\bigcup\Sigma\Atomic^i\right)\cup\Upsilon\Atomic}\Delta,\Lambda\Atomic,\Pi^1\Atomic,\dots,\Pi^n\Atomic \quad (\star)
        \end{align*}

        So, choose a $\base{C}\supseteq\base{B}$ such that the above conditions hold. By Definition~\ref{def:addandmultsupport}, we have $A,\Gamma\supp{C}{\varnothing}\Delta$. As $\supp{C}{\Sigma^i_{\At}}A^i,\Pi\Atomic^i$, we cut by all elements in $\Gamma$ to give $A\supp{C}{\bigcup\Sigma\Atomic^i}\Delta,\Pi\Atomic^1,\dots,\Pi\Atomic^n$.
        
        By assumption $\supp{C}{\Upsilon\Atomic}A\& B, \Lambda\Atomic$. Hence by the semantic clause for conjunction we have that $\supp{C}{\Upsilon\Atomic\cup\bigcup\Sigma\Atomic^i}\Delta,\Lambda\Atomic,\Pi\Atomic^1,\dots,\Pi\Atomic^n$, which is $(\star)$, as desired.
        
        Hence, we have shown that $A\& B,\Gamma\supp{B}{\varnothing}\Delta$, and as $\base{B}$ was an arbitrary classical base, the result holds without restriction on base or context. Thus, assuming $A,\Gamma\supp{}{}\Delta$ we have shown $A\& B,\Gamma\supp{}{}\Delta$.

        \item[$(R\&)$:]
        If the last rule in the derivation is $(R\&)$, then the derivation concludes in the following way:
        \begin{prooftree}
            \AxiomC{$\Gamma\Rightarrow\Delta,A$}
            \AxiomC{$\Gamma\Rightarrow\Delta,B$}
            \BinaryInfC{$\Gamma,\Rightarrow\Delta,A\& B$}
        \end{prooftree}
        For the induction, we assume that $\Gamma\supp{}{}\Delta,A$ and $\Gamma\supp{}{}\Delta,B$. Let $\Gamma = A^1,\dots, A^n$ and let $\base{B}$ be an arbitrary classical base. By Definition~\ref{def:addandmultsupport}, we have that $\Gamma\supp{B}{\varnothing}\Delta,A$ and $\Gamma\supp{B}{\varnothing}\Delta,B$. Pick an extension $\base{C}\supseteq\base{B}$ such that for all $i$ from 1 to $n$, $\supp{C}{\Sigma^i_{\At}}A^i,\Pi^i_{\At}$, where $\Pi^i\Atomic,\Sigma^i\Atomic$ represent arbitrary atomic contexts. We want to show that in this base:
        \begin{equation}
        \supp{C}{\left(\bigcup\Sigma^i\Atomic\right)}A\& B,\Delta,\Pi\Atomic^1,\dots,\Pi\Atomic^n.
        \end{equation}

        For clarity, let:
        \begin{align*}
            \hat\Gamma&\coloneq\left\{\bigcup^n_{i=1}\Sigma^i\Atomic\right\},\\
            \hat\Delta&\coloneq\{\Delta,\Pi\Atomic^1,\dots,\Pi\Atomic^n\}, \quad (\star)
        \end{align*}
        so that the consequence $(\star)$ above can be rewritten as $\supp{C}{\hat\Gamma}A\& B,\hat\Delta$. By the clause for conjunction, $(\star)$ is thus equivalent to: for all $\base{D}\supseteq\base{C}$, and for arbitrary $\Phi\Atomic,\Psi\Atomic$;
        \[
        \text{if either }A\supp{D}{\Phi\Atomic} \Psi\Atomic\text{ or } B\supp{D}{\Phi\Atomic}\Psi\Atomic\text{, then }\supp{D}{\hat\Gamma\cup\Phi\Atomic}\hat\Delta,\Psi\Atomic.
        \]

        Now, choose a $\base{D}\supseteq\base{C}$ such that $A\supp{D}{\Phi\Atomic}\Psi\Atomic$, for arbitrary atomic contexts $\Phi\Atomic,\Psi\Atomic$. By monotonicity, $\Gamma\supp{D}{\varnothing}\Delta,A$, therefore by Lemma~\ref{lemma:basecut}, $\Gamma\supp{D}{\Phi\Atomic}\Delta,\Psi\Atomic$. As $\supp{C}{\Sigma^i_{\At}}A^i,\Pi^i_{\At}$, then by (Inf) we have that $\supp{D}{\Phi\Atomic\cup(\bigcup\Sigma^i\Atomic)}\Delta,\Psi\Atomic,\Pi^1\Atomic,\dots,\Pi^n\Atomic$, which is $\supp{D}{\hat\Gamma\cup\Phi\Atomic}\hat\Delta,\Psi\Atomic$, as desired. The case when instead $B\supp{D}{\Phi\Atomic}\Psi\Atomic$ follows similarly.

        Hence, we have shown: if $\supp{C}{\Sigma^i_{\At}}A^i,\Pi^i_{\At}$, then $\supp{C}{\hat\Gamma}A\& B,\hat\Delta$. By (Inf), this is equivalent to having shown $\Gamma\supp{B}{\varnothing}A\& B,\Delta$. As $\base{B}$ was arbitrary, we have $\Gamma\supp{}{}A\& B,\Delta$ in general, assuming that $\Gamma\supp{}{}\Delta,A$ and $\Gamma\supp{}{}\Delta,B$. That concludes the induction for this case.

\item[$(L\oplus)$:]
        If the last rule in the derivation is $(L\oplus)$, then the derivation concludes in the following way:
        \begin{prooftree}
            \AxiomC{$A,\Gamma\Rightarrow\Delta$}
            \AxiomC{$B,\Gamma\Rightarrow\Delta$}
            \BinaryInfC{$A\oplus B,\Gamma\Rightarrow\Delta$}
        \end{prooftree}
        For the induction, we assume that $A,\Gamma\supp{}{}\Delta$ and $B,\Gamma\supp{}{}\Delta$. Let $\Gamma = A^1,\dots, A^n$ and let $\base{B}$ be an arbitrary classical base. We want to show that $A\oplus B,\Gamma\supp{B}{\varnothing}\Delta$. By (Inf), this is equivalent to showing that for all $\base{C}\supseteq\base{B}$, for all $i$ from 1 to $n$:
        \begin{align*}
        \text{if: }& \supp{C}{\Sigma^i_{\At}}A^i,\Pi^i_{\At},\text{ and }\supp{C}{\Upsilon\Atomic}A\oplus B, \Lambda\Atomic \\
        \text{then: }& \supp{C}{\left(\bigcup\Sigma\Atomic^i\right)\cup\Upsilon\Atomic}\Delta,\Lambda\Atomic,\Pi^1\Atomic,\dots,\Pi^n\Atomic \quad (\star)
        \end{align*}

        So, choose a $\base{C}\supseteq\base{B}$ such that the above conditions hold. By Definition~\ref{def:addandmultsupport}, we have $A,\Gamma\supp{C}{\varnothing}\Delta$ and $B,\Gamma\supp{C}{\varnothing}\Delta$. As $\supp{C}{\Sigma^i_{\At}}A^i,\Pi\Atomic^i$, we cut by all elements in $\Gamma$ to give $A\supp{C}{\bigcup\Sigma\Atomic^i}\Delta,\Pi\Atomic^1,\dots,\Pi\Atomic^n$. Similarly, we have $B\supp{C}{\bigcup\Sigma\Atomic^i}\Delta,\Pi\Atomic^1,\dots,\Pi\Atomic^n$.
        
        By assumption $\supp{C}{\Upsilon\Atomic}A\oplus B, \Lambda\Atomic$. Hence by the semantic clause for disjunction we have that $\supp{C}{\Upsilon\Atomic\cup\bigcup\Sigma\Atomic^i}\Delta,\Lambda\Atomic,\Pi\Atomic^1,\dots,\Pi\Atomic^n$, which is $(\star)$, as desired.
        
        Hence, we have shown that $A\oplus B,\Gamma\supp{B}{\varnothing}\Delta$, and as $\base{B}$ was an arbitrary classical base, the result holds without restriction on base or context. Thus, assuming $A,\Gamma\supp{}{}\Delta$ and $B,\Gamma\supp{}{}\Delta$, we have shown $A\oplus B,\Gamma\supp{}{}\Delta$.

        \item[$(R\oplus_1)$:]
        If the last rule in the derivation is $(R\oplus_1)$, then the derivation concludes in the following way:
        \begin{prooftree}
            \AxiomC{$\Gamma\Rightarrow\Delta,A$}
            \UnaryInfC{$\Gamma\Rightarrow\Delta,A\oplus B$}
        \end{prooftree}
        For the induction, we assume that $\Gamma\supp{}{}\Delta,A$. Let $\Gamma = A^1,\dots, A^n$ and let $\base{B}$ be an arbitrary classical base. We want to show that $\Gamma\supp{B}{\varnothing}\Delta,A\oplus B$. By (Inf), this is equivalent to showing that for all $\base{C}\supseteq\base{B}$, for all $i$ from 1 to $n$:
        \begin{align*}
        \text{if: }& \supp{C}{\Sigma^i_{\At}}A^i,\Pi^i_{\At},\\
        \text{then: }& \supp{C}{\bigcup\Sigma\Atomic^i}A\oplus B,\Delta,\Pi^1\Atomic,\dots,\Pi^n\Atomic \quad (\star)
        \end{align*}
        For clarity, let:
        \begin{align*}
            \hat\Gamma&\coloneq\left\{\bigcup^n_{i=1}\Sigma^i\Atomic\right\},\\
            \hat\Delta&\coloneq\{\Delta,\Pi\Atomic^1,\dots,\Pi\Atomic^n\},
        \end{align*}
        so that the consequence $(\star)$ above can be rewritten as $\supp{C}{\hat\Gamma}A\oplus B,\hat\Delta$. Let $\base{C}$ be a base with the above condition. By the semantic clause for disjunction, $(\star)$ is thus equivalent to: for all $\base{D}\supseteq\base{C}$, and for arbitrary $\Theta\Atomic,\Lambda\Atomic$;
        \[
        \text{if both }A\supp{D}{\Theta\Atomic}\Lambda\Atomic\text{ and } B\supp{D}{\Theta\Atomic}\Lambda\Atomic\text{, then}\supp{D}{\hat\Gamma\cup\Theta\Atomic}\hat\Delta,\Lambda\Atomic.
        \]
        
        Now, choose a $\base{D}\supseteq\base{C}$ such that $A\supp{D}{\Theta\Atomic}\Lambda\Atomic$ and  $B\supp{D}{\Theta\Atomic}\Lambda\Atomic$. By monotonicity, $\Gamma\supp{D}{\varnothing}\Delta,A$, therefore by Lemma~\ref{lemma:basecut}, $\Gamma\supp{D}{\Theta\Atomic}\Delta,\Lambda\Atomic$. As $\supp{C}{\Sigma^i_{\At}}A^i,\Pi^i_{\At}$, then by (Inf) we have that $\supp{D}{\Theta\Atomic\cup(\bigcup\Sigma^i\Atomic)}\Delta,\Lambda\Atomic,\Pi^1\Atomic,\dots,\Pi^n\Atomic$, which is $\supp{D}{\hat\Gamma\cup\Theta\Atomic}\hat\Delta,\Lambda\Atomic$, as desired.

        Hence, we have shown: if $\supp{C}{\Sigma^i_{\At}}A^i,\Pi^i_{\At}$, then $\supp{C}{\hat\Gamma}A\oplus B,\hat\Delta$. As $\base{B}$ was arbitrary, we have $\Gamma\supp{}{}\Delta,A\oplus B$ in general, assuming that $\Gamma\supp{}{}\Delta,A$. That concludes the induction for this case. The case $(R\oplus_2)$ is similar.

        \item[$(L\rightarrow)$:] 
        If the last rule in the derivation is $(L\rightarrow)$, then the derivation concludes in the following way:
        \begin{prooftree}
            \AxiomC{$\Gamma\Rightarrow\Delta,A$}
            \AxiomC{$B,\Gamma\Rightarrow\Delta$}
            \BinaryInfC{$A\rightarrow B,\Gamma\Rightarrow\Delta$}
        \end{prooftree}
        
        For the induction, we assume that $\Gamma\supp{}{}\Delta,A$ and $B,\Gamma\supp{}{}\Delta$. Let $\Gamma = A^1,\dots, A^n$ and let $\base{B}$ be an arbitrary classical base. We want to show that $A\to B,\Gamma\supp{B}{\varnothing}\Delta$. By (Inf), this is equivalent to showing that for all $\base{C}\supseteq\base{B}$, for all $i$ from 1 to $n$ and all $j$ from 1 to $m$:
        \begin{align*}
        \text{if: }& \supp{C}{\Sigma^i_{\At}}A^i,\Pi\Atomic^i,\text{ and }\supp{C}{\Upsilon\Atomic}A\rightarrow B, \Lambda\Atomic \\
        \text{then: }& \supp{C}{\left(\bigcup\Sigma\Atomic^i\right)\cup\Upsilon\Atomic}\Delta,\Lambda\Atomic,\Pi^1\Atomic,\dots,\Pi^n\Atomic \quad (\star)
        \end{align*}

        So, choose a $\base{C}\supseteq\base{B}$ such that the above conditions hold. By Definition~\ref{def:addandmultsupport}, we have $\Gamma\supp{C}{\varnothing}\Delta,A$ and $B,\Gamma\supp{C}{\varnothing}\Delta$. As $\supp{C}{\Sigma^i_{\At}}A^i,\Pi\Atomic^i$, we cut by all elements in $\Gamma$ to give $\supp{C}{\bigcup\Sigma\Atomic^i}A,\Delta,\Pi\Atomic^1,\dots,\Pi\Atomic^n$. Similarly, we have $B\supp{C}{\bigcup\Sigma\Atomic^i}\Delta,\Pi\Atomic^1,\dots,\Pi\Atomic^n$.
        
        By assumption $\supp{C}{\Upsilon\Atomic}A\rightarrow B, \Lambda\Atomic$. Hence by the semantic clause for implication we have that $\supp{C}{\Upsilon\Atomic\cup\bigcup\Sigma\Atomic^i}\Delta,\Lambda\Atomic,\Pi\Atomic^1,\dots,\Pi\Atomic^n$, which is $(\star)$, as desired.

        \item[$(R\rightarrow_1)$:]
        If the last rule in the derivation is $(R\rightarrow_1)$, then the derivation concludes in the following way:
        \begin{prooftree}
            \AxiomC{$A,\Gamma\Rightarrow\Delta$}
            \UnaryInfC{$\Gamma\Rightarrow\Delta,A\rightarrow B$}
        \end{prooftree}
        For the induction, we assume that $A,\Gamma\supp{}{}\Delta$. Let $\Gamma = A^1,\dots, A^n$ and let $\base{B}$ be an arbitrary classical base. We want to show that $\Gamma\supp{B}{\varnothing}\Delta,A\rightarrow B$. By (Inf), this is equivalent to showing that for all $\base{C}\supseteq\base{B}$, for all $i$ from 1 to $n$:
        \begin{align*}
        \text{if: }& \supp{C}{\Sigma^i_{\At}}A^i,\Pi^i_{\At},\\
        \text{then: }& \supp{C}{\left(\bigcup\Sigma\Atomic^i\right)}A\rightarrow B,\Delta,\Pi^1\Atomic,\dots,\Pi^n\Atomic
        \end{align*}
        For clarity, let:
        \begin{align*}
            \hat\Gamma&\coloneq\left\{\bigcup^n_{i=1}\Sigma^i\Atomic\right\},\\
            \hat\Delta&\coloneq\{\Delta,\Pi\Atomic^1,\dots,\Pi\Atomic^n\},
        \end{align*}
        so that the consequence above can be rewritten as $\supp{C}{\hat\Gamma}A\rightarrow B,\hat\Delta$. Let $\base{C}$ be a base with the above condition. By the semantic clause for implication, this is equivalent to: for all $\base{D}\supseteq\base{C}$, and for arbitrary $\Theta\Atomic,\Phi\Atomic,$;
        \[
        \text{if both }\supp{D}{\Theta\Atomic}A,\Lambda\Atomic\text{ and } B\supp{D}{\Theta\Atomic}\Lambda\Atomic\text{, then}\supp{D}{\hat\Gamma\cup\Theta\Atomic}\hat\Delta,\Lambda\Atomic \quad (\star)
        \]
        
        Now, choose a $\base{D}\supseteq\base{C}$ such that $\supp{D}{\Theta\Atomic}A,\Lambda\Atomic$ and  $B\supp{D}{\Theta\Atomic}\Lambda\Atomic$. By Definition~\ref{def:addandmultsupport}, $A,\Gamma\supp{D}{\varnothing}\Delta$, therefore by Lemma~\ref{lemma:basecut} of $A$, $\Gamma\supp{D}{\Theta\Atomic}\Delta,\Lambda\Atomic$. As $\supp{C}{\Sigma^i_{\At}}A^i,\Pi^i_{\At}$, then by (Inf) we have that $\supp{D}{\Theta\Atomic\cup(\bigcup\Sigma^i\Atomic)}\Delta,\Lambda\Atomic,\Pi^1\Atomic,\dots,\Pi^n\Atomic$, which is $(\star)$, as desired.

        Hence, we have shown: if $\supp{C}{\Sigma^i_{\At}}A^i,\Pi^i_{\At}$, then $\supp{C}{\hat\Gamma}A\rightarrow B,\hat\Delta$. As $\base{B}$ was arbitrary, we have $\Gamma\supp{}{}\Delta,A\rightarrow B$ in general, assuming that $A,\Gamma\supp{}{}\Delta$. That concludes the induction for this case.

        \item[$(R\rightarrow_2)$]
        If the last rule in the derivation is $(R\rightarrow_2)$, then the derivation concludes in the following way:
        \begin{prooftree}
            \AxiomC{$\Gamma\Rightarrow\Delta,B$}
            \UnaryInfC{$\Gamma\Rightarrow\Delta,A\rightarrow B$}
        \end{prooftree}
        For the induction, we assume that $\Gamma\supp{}{}\Delta,B$. Let $\Gamma = A^1,\dots, A^n$ and let $\base{B}$ be an arbitrary classical base. We want to show that $\Gamma\supp{B}{\varnothing}\Delta,A\rightarrow B$. By (Inf), this is equivalent to showing that for all $\base{C}\supseteq\base{B}$, for all $i$ from 1 to $n$:
        \begin{align*}
        \text{if: }& \supp{C}{\Sigma^i_{\At}}A^i,\Pi^i_{\At},\\
        \text{then: }& \supp{C}{\left(\bigcup\Sigma\Atomic^i\right)}A\rightarrow B,\Delta,\Pi^1\Atomic,\dots,\Pi^n\Atomic
        \end{align*}
        For clarity, let:
        \begin{align*}
            \hat\Gamma&\coloneq\left\{\bigcup^n_{i=1}\Sigma^i\Atomic\right\},\\
            \hat\Delta&\coloneq\{\Delta,\Pi\Atomic^1,\dots,\Pi\Atomic^n\},
        \end{align*}
        so that the consequence above can be rewritten as $\supp{C}{\hat\Gamma}A\rightarrow B,\hat\Delta$. Let $\base{C}$ be a base with the above condition. By the semantic clause for implication, this is equivalent to: for all $\base{D}\supseteq\base{C}$, and for arbitrary $\Theta\Atomic,\Phi\Atomic,$;
        \[
        \text{if both }\supp{D}{\Theta\Atomic}A,\Lambda\Atomic\text{ and } B\supp{D}{\Theta\Atomic}\Lambda\Atomic\text{, then}\supp{D}{\hat\Gamma\cup\Theta\Atomic}\hat\Delta,\Lambda\Atomic \quad (\star)
        \]
        
        Now, choose a $\base{D}\supseteq\base{C}$ such that $\supp{D}{\Theta\Atomic}A,\Lambda\Atomic$ and  $B\supp{D}{\Theta\Atomic}\Lambda\Atomic$. By Definition~\ref{def:addandmultsupport}, $\Gamma\supp{D}{\varnothing}\Delta,B$, therefore by Lemma~\ref{lemma:basecut} of $B$, $\Gamma\supp{D}{\Theta\Atomic}\Delta,\Lambda\Atomic$. As $\supp{C}{\Sigma^i_{\At}}A^i,\Pi^i_{\At}$, then by (Inf) we have that $\supp{D}{\Theta\Atomic\cup(\bigcup\Sigma^i\Atomic)}\Delta,\Lambda\Atomic,\Pi^1\Atomic,\dots,\Pi^n\Atomic$, which is $(\star)$, as desired.

        Hence, we have shown: if $\supp{C}{\Sigma^i_{\At}}A^i,\Pi^i_{\At}$, then $\supp{C}{\hat\Gamma}A\rightarrow B,\hat\Delta$. As $\base{B}$ was arbitrary, we have $\Gamma\supp{}{}\Delta,A\rightarrow B$ in general, assuming that $\Gamma\supp{}{}\Delta,B$. That concludes the induction for this case.

          \item[$(R 0)$] The derivation has the following shape:

            \begin{prooftree}
            \AxiomC{}
            \UnaryInfC{$\Gamma, 0 \Rightarrow \Delta$}
        \end{prooftree}

  Let $\Gamma = \{A^{1}, \ldots, A^{n}\}$.  By picking an arbitrary classical base $\B$ and an arbitrary $\C \supseteq \B$ with $\Vdash_{\C}^{\Sigma^{i}_{\At}} A^{i}, \Pi_{\At}$ for all $1 \leq i \leq m$ and $\Vdash_{\C}^{\Phi_{\At}} 0, \Lambda_{\At}$. By $0$ we obtain $\Vdash_{\C}^{\Phi_{\At} \cup \Theta_{\At}}  \Lambda_{\At}, \Omega_{\At}$ for all $\Theta_{\At}$ and $\Omega_{\At}$. Using the same reasoning employed to prove Lemmas \ref{lemma:auxiliaryweakeningleft} and \ref{lemma:auxiliaryweakeningright} we conclude $\Vdash^{\bigcup \Sigma^{i}_{\At} \cup \Phi_{\At}}_{\C} \Delta, \Pi^{1}_{\At}, \ldots \Pi_{\At} \ldots, \Lambda_{\At}$, hence by (Inf) and arbitrariness of $\B$ and $\C$ we conclude $\Gamma, 0 \Vdash \Delta$.
  
  We obtain $\Vdash_{\C}^{\bigcup \Sigma^{i}_{\At}} \Delta, \Pi^{1}_{\At}, \ldots, \Pi^{n}_{\At}$. Then we have  $\Vdash_{\C}^{\bigcup \Sigma^{i}_{\At}} \Delta, \bot, \Pi^{1}_{\At}, \ldots, \Pi^{n}_{\At}$ by $(\bot)$, hence by (Inf) and arbitrariness of $\Sigma^{i}_{\At}$, $\Pi^{i}_{\At}$, $\B$ and $\C$ we have $\Gamma \Vdash \Delta$.
  \end{itemize}
    \end{proof}

\begin{proof}[of Lemma~\ref{mainlemmacompletenessadd} -- The additive case]
If the complexity of of $\Sigma \cup \Theta_{\At}$ is $0$ then $\Sigma$ is a set of atoms, so the result holds by the semantic clause for atoms. If the complexity of $\Sigma \cup \Theta_{\At}$ is greater than $0$, then $\Sigma = \Pi \cup C$ for some $C$ with shape $A \& B$, $A \oplus B$ or $A \to B$. 
    \begin{itemize}
         \item[($C = A \& B$):] ($\Rightarrow$): Assume $\Vdash^{\Omega_{\At}}_{\B} A \& B, \Pi, \Theta_{\At}$. Pick any $\C \supseteq \B$ and $\Phi_{\At}, \Psi_{\At}$ such that $\Vdash^{\Phi^{\At}}_{\C} A, \Psi_{\At}$. The induction hypothesis yields $\vdash_{\C} \Phi_{\At} \Rightarrow p^{A}, \Psi_{\At}$ and then $\Vdash^{\Phi^{\At}}_{\C} p^{A}, \Psi_{\At}$, so we obtain $A \Vdash^{\varnothing}_{\B} p^{A}$ by (Inf). Analogous reasoning yields $B \Vdash^{\varnothing}_{\B} p^{B}$. Since we have $\Vdash^{\Omega_{\At}}_{\B} A \& B, \Pi, \Theta_{\At}$ and both $A \Vdash^{\varnothing}_{\B} p^{A}$ and $B \Vdash^{\varnothing}_{\B} p^{B}$  by the semantic clause for additive conjunction we have both $\Vdash^{\Omega_{\At}}_{\B} p^{A}, \Pi, \Theta_{\At}$ and $\Vdash^{\Omega_{\At}}_{\B} p^{B}, \Pi, \Theta_{\At}$. The induction hypothesis yields $\vdash_{\B} \Omega_{\At} \Rightarrow p^{A}, \Pi_{\At}, \Theta_{\At}$ and $\vdash_{\B} \Omega_{\At} \Rightarrow p^{B}, \Pi_{\At}, \Theta_{\At}$, so we can construct the following deduction:

\begin{prooftree}
    \AxiomC{\vdots}
    \noLine
    \UnaryInfC{$\Omega_{\At} \Rightarrow p^{A}, \Pi_{\At}, \Theta_{\At}$}
        \AxiomC{\vdots}
    \noLine
    \UnaryInfC{$\Omega_{\At} \Rightarrow p^{B}, \Pi_{\At}, \Theta_{\At}$}
    \BinaryInfC{$ \Omega_{\At} \Rightarrow p^{A \& B}, \Pi_{\At}, \Theta_{\At}$}
\end{prooftree}

Which already shows the desired $\vdash_{\B} \Omega_{\At} \Rightarrow p^{A \& B}, \Pi_{\At}, \Theta_{\At}$
  
  ($\Leftarrow$): Assume $\vdash_{\B} \Omega_{\At} \Rightarrow p^{A \& B},  \Pi_{\At}, \Theta_{\At}$. We start by constructing the following deduction:
      \begin{prooftree}
      \AxiomC{$\vdots$}
      \noLine
      \UnaryInfC{$\Omega_{\At} \Rightarrow p^{A \& B},  \Pi_{\At}, \Theta_{\At}$}
    \AxiomC{}
      \RightLabel{\tiny{\init}}
    \UnaryInfC{$p^{A} \Rightarrow p^{A}$}
         \RightLabel{\tiny{$L \&$}}
    \UnaryInfC{$p^{A \& B} \Rightarrow p^{A}$}
    \RightLabel{\tiny{\Cutc}}
    \BinaryInfC{$\Omega_{\At} \Rightarrow p^{A},  \Pi_{\At}, \Theta_{\At}$}
      \end{prooftree}
So we have $\vdash_{\B} \Omega_{\At} \Rightarrow p^{A},  \Pi_{\At}, \Theta_{\At}$, hence $\Vdash^{\Omega_{\At}}_{\B} A,  \Pi, \Theta_{\At}$ by the induction hypothesis. A similar argument shows $\Vdash^{\Omega_{\At}}_{\B} B,  \Pi, \Theta_{\At}$. Now pick any $\C \supseteq \B$ and $\Upsilon_{\At}, \Lambda_{\At}$ such that $A \Vdash^{\Upsilon_{\At}}_{C} \Lambda_{\At}$. Since $\Vdash^{\Omega_{\At}}_{\B} A,  \Pi, \Theta_{\At}$, by monotonicity we have $\Vdash^{\Omega_{\At}}_{\C} A,  \Pi, \Theta_{\At}$, so by (Inf) we get  $\Vdash^{\Omega_{\At}, \Upsilon_{\At}}_{\C}  \Pi, \Theta_{\At}, \Lambda_{\At}$. A similar argument shows that, for any $\C \supseteq \B$, if $B \Vdash^{\Upsilon_{\At}}_{C} \Lambda_{\At}$ then $\Vdash^{\Omega_{\At}, \Upsilon_{\At}}_{\C}  \Pi, \Theta_{\At}, \Lambda_{\At}$, so by the semantic clause for additive conjunction we conclude $\Vdash^{\Omega_{\At}}_{\B} A \& B, \Pi, \Theta_{\At}$.

  \item[($C = A \oplus B$):] ($\Rightarrow$): Assume $\Vdash^{\Omega_{\At}}_{\B} A \oplus B, \Pi, \Theta_{\At}$. Pick any $\C \supseteq \B$ and $\Phi_{\At}, \Psi_{\At}$ such that $\Vdash^{\Phi^{\At}}_{\C} A, \Psi_{\At}$. The induction hypothesis yields $\vdash_{\C} \Phi_{\At} \Rightarrow p^{A},  \Psi_{\At}$, so we can construct the following deduction:

  \begin{prooftree}
      \AxiomC{\vdots}
      \noLine
      \UnaryInfC{$ \Phi_{\At} \Rightarrow p^{A},  \Psi_{\At}$}
      \RightLabel{\tiny{$L \oplus$}}
      \UnaryInfC{$\Phi_{\At} \Rightarrow p^{A \oplus B},  \Psi_{\At}$}
  \end{prooftree}

 So $\vdash_{C} \Phi_{\At} \Rightarrow p^{A \oplus B},  \Psi_{\At}$ and then $\Vdash^{\Phi_{\At}}_{C} p^{A \oplus B},  \Psi_{\At}$, whence we conclude $A \Vdash^{\varnothing}_{\B} p^{A \oplus B}$ by (Inf). A similar argument yields $B \Vdash^{\varnothing}_{\B} p^{A \oplus B}$. Since $\Vdash^{\Omega_{\At}}_{\B} A \oplus B, \Pi, \Theta_{\At}$ and both $A \Vdash^{\varnothing}_{\B} p^{A \oplus B}$ and $B \Vdash^{\varnothing}_{\B} p^{A \oplus B}$, by the semantic clause for additive disjunction we conclude $\Vdash^{\Omega_{\At}}_{\B} p^{A \oplus B}, \Pi, \Theta_{\At}$, hence we can use the induction hypothesis to obtain $\vdash_{\B} \Omega_{\At} \Rightarrow  p^{A \oplus B}, \Pi_{\At}, \Theta_{\At}$

  ($\Leftarrow$): Assume $\vdash_{\B} \Omega_{\At} \Rightarrow p^{A \oplus B},  \Pi_{\At}, \Theta_{\At}$. Pick any $\C \supseteq \B$ and $\Phi_{\At}, \Psi_{\At}$ such that $A \Vdash^{\Phi_{\At}}_{\C} \Psi_{\At}$ and  $B \Vdash^{\Phi_{\At}}_{\C} \Psi_{\At}$. Since $\vdash_{\C} p^{A} \Rightarrow p^{A}$ and $\vdash_{\C} p^{B} \Rightarrow p^{B}$ both hold by Axiom, the induction hypothesis yields $\Vdash^{p^{A}}_{\C} A$ and $\Vdash^{p^{B}}_{\C} B$, hence we can apply (Inf) to obtain $\Vdash^{\Phi_{\At}, p^{A}}_{\C} \Psi_{\At}$ and $\Vdash^{\Phi_{\At}, p^{B}}_{\C} \Psi_{\At}$. Those yield $\vdash_{\C} p^{A}, \Phi_{\At} \Rightarrow \Psi_{\At}$ and $\vdash_{\C} p^{B}, \Phi_{\At} \Rightarrow \Psi_{\At}$. Since  $\vdash_{\B} \Omega_{\At} \Rightarrow p^{A \oplus B},  \Pi_{\At}, \Theta_{\At}$ we also have  $\vdash_{\C} \Omega_{\At} \Rightarrow p^{A \oplus B},  \Pi_{\At}, \Theta_{\At}$, so we can construct the following deduction:
  
\begin{prooftree}
     \AxiomC{$\vdots$}
      \noLine
      \UnaryInfC{$\Omega_{\At} \Rightarrow p^{A \oplus B},  \Pi_{\At}, \Theta_{\At}$}
      \AxiomC{\vdots}
      \noLine
      \UnaryInfC{$p^{A}, \Phi_{\At} \Rightarrow \Psi_{\At}$}
        \AxiomC{\vdots}
      \noLine
      \UnaryInfC{$p^{B}, \Phi_{\At} \Rightarrow \Psi_{\At}$}
         \RightLabel{\tiny{$R \oplus$}}
      \BinaryInfC{$p^{A \oplus B}, \Phi_{\At} \Rightarrow \Psi_{\At}$}
      \RightLabel{\tiny{\Cutc}}
      \BinaryInfC{$\Omega_{\At}, \Phi_{\At} \Rightarrow \Psi_{\At}, \Pi_{\At}, \Theta_{\At}$}
\end{prooftree}

So $\vdash_{\C} \Omega_{\At}, \Phi_{\At} \Rightarrow \Psi_{\At}, \Pi_{\At}, \Theta_{\At}$ and then $\Vdash^{\Omega_{\At}, \Phi_{\At}}_{\C} \Psi_{\At}, \Pi_{\At}, \Theta_{\At}$, whence the induction hypothesis yields $\Vdash^{\Omega_{\At}, \Phi_{\At}}_{\C} \Psi_{\At}, \Pi, \Theta_{\At}$, so by the semantic clause for additive disjunction we conclude $\Vdash^{\Omega_{\At}}_{\B} A \oplus B, \Pi, \Theta_{\At}$.

\item[($C = 0$):] ($\Rightarrow$): Assume $\Vdash^{\Omega_{\At}}_{\B} 0, \Pi, \Theta_{\At}$. Then $\Vdash^{\Omega_{\At} \cup \Phi_{\At}}_{\B} \Pi, \Theta_{\At}, \Psi_{\At}$ for all $\Phi_{\At}$ and $\Psi_{\At}$. By putting $\Phi_{\At} = \varnothing$ and $\Psi_{\At} = \{p^{0}\}$ and applying the induction hypothesis we obtain $\Omega_{\At} \Rightarrow p^{0}, \Pi_{\At}, \Theta_{\At}$.

  ($\Leftarrow$): Assume $\vdash_{\C} \Omega_{\At} \Rightarrow p^{0}, \Pi_{\At}, \Theta_{\At}$, and pick arbitrary $\Phi_{\At}$ and $\Psi_{\At}$. We can construct the following deduction:

 \begin{prooftree}
      \AxiomC{\vdots}
      \noLine
      \UnaryInfC{$ \Omega_{\At} \Rightarrow p^{0}, \Pi_{\At}, \Theta_{\At}$}
      \AxiomC{}
        \RightLabel{\tiny{$L 0$}}
      \UnaryInfC{$\Phi_{\At},  p^{0} \Rightarrow \Psi_{\At}$}
      \RightLabel{\tiny{\Cutc}}
      \BinaryInfC{$ \Omega_{\At}, \Phi_{\At} \Rightarrow  \Pi_{\At}, \Theta_{\At}, \Psi_{\At}$}
  \end{prooftree}

The induction hypothesis yields $\Vdash^{\Omega_{\At} \cup \Phi_{\At}}_{\C} \Pi, \Theta_{\At}, \Psi_{\At}$ for arbitrary $\Phi_{\At}$ and $\Psi_{\At}$, hence $\Vdash^{\Omega_{\At}}_{\C} 0, \Pi, \Theta_{\At}$.
    \end{itemize}
\end{proof}

\subsection{The multiplicative case}\label{app:mult}

\begin{proof}[of Theorem~\ref{lemma:classicalsoundnessadd} -- The multiplicative case]]
  We shall prove soundness with respect to the the sequent system $\MCLp$ for classical logic. The proof proceeds by structural induction on the last rule applied in the derivation of $\Gamma \Rightarrow \Delta$. In what follows, let $\Gamma = \{A^1, \dots, A^n\}$.
   \begin{itemize}
        \item[(Ax):]
        The last rule applied in the derivation is (Ax), so in particular suppose our derivation concludes $A\Rightarrow A$. Pick a classical base $\B$. In any extension $\base{C}$, it holds trivially that if $\supp{C}{}A$ then $\supp{C}{}A$. By (Inf) we thus have that $A\supp{B}{}A$, and as $\base{B}$ was arbitrary, we have that $A\supp{}{}A$ in general.

        \item[($L\otimes$):]
        If the last rule in the derivation is $(L\otimes)$, then the derivation concludes in the following way:
        \begin{prooftree}
            \AxiomC{$A,B,\Gamma\Rightarrow\Delta$}
            \UnaryInfC{$A\otimes B,\Gamma\Rightarrow\Delta$}
        \end{prooftree}
        For the induction, we assume that $A,B, \Gamma\supp{}{}\Delta$. Let $\Gamma = A^1,\dots, A^n$ and let $\base{B}$ be an arbitrary classical base. We want to show that $A \otimes B, \Gamma\supp{B}{\varnothing}\Delta$. By (Inf), this is equivalent to showing that for all $\base{C}\supseteq\base{B}$, for all $i$ from 1 to $n$:
        \begin{align*}
        \text{if: }& \supp{C}{\Sigma^i_{\At}}A^i,\Pi^i_{\At},\text{ and }\supp{C}{\Upsilon\Atomic}A\otimes B, \Lambda\Atomic \\
        \text{then: }& \supp{C}{\left(\bigcup\Sigma\Atomic^i\right)\cup\Upsilon\Atomic}\Delta,\Lambda\Atomic,\Pi^1\Atomic,\dots,\Pi^n\Atomic \quad (\star)
        \end{align*}

        So, choose a $\base{C}\supseteq\base{B}$ such that the above conditions hold. By Definition~\ref{def:addandmultsupport}, we have $A,B,\Gamma\supp{C}{\varnothing}\Delta$. As $\supp{C}{\Sigma^i_{\At}}A^i,\Pi\Atomic^i$, we cut by all elements in $\Gamma$ to give $A,B\supp{C}{\bigcup\Sigma\Atomic^i}\Delta,\Pi\Atomic^1,\dots,\Pi\Atomic^n$.
        
        By assumption $\supp{C}{\Upsilon\Atomic}A\otimes B, \Lambda\Atomic$. Hence by the semantic clause for conjunction we have that $\supp{C}{\Upsilon\Atomic\cup\bigcup\Sigma\Atomic^i}\Delta,\Lambda\Atomic,\Pi\Atomic^1,\dots,\Pi\Atomic^n$, which is $(\star)$, as desired.
        
        Hence, we have shown that $A\otimes B,\Gamma\supp{B}{\varnothing}\Delta$, and as $\base{B}$ was an arbitrary classical base, the result holds without restriction on base or context. Thus, assuming $A,B,\Gamma\supp{}{}\Delta$ we have shown $A\otimes B,\Gamma\supp{}{}\Delta$.
        
        \item[$(R\otimes)$:]
        If the last rule in the derivation is $(R\otimes)$, then the derivation concludes in the following way:
        \begin{prooftree}
            \AxiomC{$\Gamma\Rightarrow\Delta,A$}
            \AxiomC{$\Gamma'\Rightarrow\Delta',B$}
            \BinaryInfC{$\Gamma,\Gamma'\Rightarrow\Delta,\Delta',A\otimes B$}
        \end{prooftree}
        For the induction, we assume that $\Gamma\supp{}{}\Delta,A$ and $\Gamma'\supp{}{}\Delta',B$. Let $\Gamma = A^1,\dots, A^n$, $\Gamma'=B^1,\dots, B^m$, and let $\base{B}$ be an arbitrary classical base. By Definition~\ref{def:addandmultsupport}, we have that $\Gamma\supp{B}{\varnothing}\Delta,A$ and $\Gamma'\supp{B}{\varnothing}\Delta',B$. Pick an extension $\base{C}\supseteq\base{B}$ such that for all $i$ from 1 to $n$ and all $j$ from 1 to $m$, $\supp{C}{\Sigma^i_{\At}}A^i,\Pi^i_{\At}$ and $\supp{C}{\Theta^j_{\At}}B^j,\Omega^j_{\At}$, where $\Pi^i\Atomic,\Sigma^i\Atomic,\Omega^i\Atomic,\Theta^i\Atomic$ all represent arbitrary atomic contexts.  We want to show that in this base:
        \begin{equation}
        \supp{C}{\left(\bigcup\Sigma^i\Atomic\right)\cup\left(\bigcup\Theta^j\Atomic\right)}A\otimes B,\Delta,\Delta',\Pi\Atomic^1,\dots,\Pi\Atomic^n,\Omega\Atomic^1,\dots,\Omega\Atomic^m. \quad (\star)   
        \end{equation}

        For clarity, let:
        \begin{align*}
            \hat\Gamma&\coloneq\left\{ \left(\bigcup^n_{i=1}\Sigma^i\Atomic\right)\cup\left(\bigcup^m_{j=1}\Theta^j\Atomic\right)\right\},\\
            \hat\Delta&\coloneq\{\Delta,\Delta',\Pi\Atomic^1,\dots,\Pi\Atomic^n,\Omega\Atomic^1,\dots,\Omega\Atomic^m\},
        \end{align*}
        so that the consequence $(\star)$ above can be rewritten as $\supp{C}{\hat\Gamma}A\otimes B,\hat\Delta$. By the clause for conjunction, $(\star)$ is thus equivalent to: for all $\base{D}\supseteq\base{C}$, and for arbitrary $\Phi\Atomic,\Psi\Atomic$;
        \[
        \text{if }A,B\supp{D}{\Phi\Atomic} \Psi\Atomic\text{ then }\supp{D}{\hat\Gamma\cup\Phi\Atomic}\hat\Delta,\Psi\Atomic.
        \]

        Now, choose a $\base{D}\supseteq\base{C}$ such that $A,B\supp{D}{\Phi\Atomic}\Psi\Atomic$, for arbitrary atomic contexts $\Phi\Atomic,\Psi\Atomic$. By monotonicity, $\Gamma\supp{D}{\varnothing}\Delta,A$ and $\Gamma'\supp{D}{\varnothing}\Delta',B$, therefore by Cut of $A$ and $B$ (Lemma~\ref{lemma:basecut}), $\Gamma,\Gamma'\supp{D}{\Phi\Atomic}\Delta,\Delta',\Psi\Atomic$. As $\supp{C}{\Sigma^i_{\At}}A^i,\Pi^i_{\At}$ and $\supp{C}{\Theta^j_{\At}}B^j,\Omega^j_{\At}$, then by (Inf) we have that $\supp{D}{\Phi\Atomic\cup(\bigcup\Sigma^i\Atomic)\cup(\bigcup\Theta^i\Atomic)}\Delta,\Psi\Atomic,\Pi^1\Atomic,\dots,\Pi^n\Atomic,\Omega^1\Atomic,\dots,\Omega^m\Atomic$, which is $\supp{D}{\hat\Gamma\cup\Phi\Atomic}\hat\Delta,\Psi\Atomic$, as desired.

        Hence, we have shown: if $\supp{C}{\Sigma^i_{\At}}A^i,\Pi^i_{\At}$ and $\supp{C}{\Theta^j_{\At}}B^j,\Omega^j_{\At}$, then $\supp{C}{\hat\Gamma}A\otimes B,\hat\Delta$. By (Inf), this is equivalent to having shown $\Gamma,\Gamma'\supp{B}{\varnothing}A\otimes B,\Delta,\Delta'$. As $\base{B}$ was arbitrary, we have $\Gamma,\Gamma'\supp{}{}A\otimes B,\Delta,\Delta'$ in general, assuming that $\Gamma\supp{}{}\Delta,A$ and $\Gamma'\supp{}{}\Delta',B$. That concludes the induction for this case.

        \item[$(L\parr)$:]
        If the last rule in the derivation is $(L\parr)$, then the derivation concludes in the following way:
        \begin{prooftree}
            \AxiomC{$A,\Gamma\Rightarrow\Delta$}
            \AxiomC{$B,\Gamma'\Rightarrow\Delta'$}
            \BinaryInfC{$A\parr B,\Gamma,\Gamma'\Rightarrow\Delta,\Delta'$}
        \end{prooftree}
        For the induction, we assume that $A,\Gamma\supp{}{}\Delta$ and $B,\Gamma'\supp{}{}\Delta'$. Let $\Gamma = A^1,\dots, A^n$, $\Gamma'=B^1,\dots, B^m$, and let $\base{B}$ be an arbitrary classical base. We want to show that $A\parr B,\Gamma,\Gamma'\supp{B}{\varnothing}\Delta,\Delta'$. By (Inf), this is equivalent to showing that for all $\base{C}\supseteq\base{B}$, for all $i$ from 1 to $n$ and all $j$ from 1 to $m$:
        \begin{align*}
        \text{if: }& \supp{C}{\Sigma^i_{\At}}A^i,\Pi^i_{\At},\supp{C}{\Theta^j_{\At}}B^j,\Omega^j_{\At},\text{ and }\supp{C}{\Upsilon\Atomic}A\parr B, \Lambda\Atomic \\
        \text{then: }& \supp{C}{\left(\bigcup\Sigma\Atomic^i\right)\cup\left(\bigcup\Theta\Atomic^j\right)\cup\Upsilon\Atomic}\Delta,\Delta',\Lambda\Atomic,\Pi^1\Atomic,\dots,\Pi^n\Atomic,\Omega^1\Atomic,\dots,\Omega^m\Atomic \quad (\star)
        \end{align*}

        So, choose a $\base{C}\supseteq\base{B}$ such that the above conditions hold. By Definition~\ref{def:addandmultsupport}, we have $A,\Gamma\supp{C}{\varnothing}\Delta$ and $B,\Gamma'\supp{C}{\varnothing}\Delta'$. As $\supp{C}{\Sigma^i_{\At}}A^i,\Pi\Atomic^i$, we cut by all elements in $\Gamma$ to give $A\supp{C}{\bigcup\Sigma\Atomic^i}\Delta,\Pi\Atomic^1,\dots,\Pi\Atomic^n$. Similarly, we cut by all elements in $\Gamma'$ to give $B\supp{C}{\bigcup\Theta\Atomic^j}\Delta',\Omega\Atomic^1,\dots,\Omega\Atomic^m$.
        
        By assumption $\supp{C}{\Upsilon\Atomic}A\parr B, \Lambda\Atomic$. Hence by the semantic clause for disjunction we have that $\supp{C}{\left(\bigcup\Sigma\Atomic^i\right)\cup\left(\bigcup\Theta\Atomic^j\right)\cup\Upsilon\Atomic}\Delta,\Delta',\Lambda\Atomic,\Pi^1\Atomic,\dots,\Pi^n\Atomic,\Omega^1\Atomic,\dots,\Omega^m\Atomic$, which is $(\star)$, as desired.
        
        Hence, we have shown that $A\parr B,\Gamma,\Gamma'\supp{B}{\varnothing}\Delta,\Delta'$, and as $\base{B}$ was an arbitrary classical base, the result holds without restriction on base or context. Thus, assuming that $A,\Gamma\supp{}{}\Delta$ and $B,\Gamma'\supp{}{}\Delta'$, we have shown $A\parr B,\Gamma,\Gamma'\supp{}{}\Delta,\Delta'$.

        \item[$(R\parr)$:]
        If the last rule in the derivation is $(R\parr)$, then the derivation concludes in the following way:
        \begin{prooftree}
            \AxiomC{$\Gamma\Rightarrow\Delta,A,B$}
            \UnaryInfC{$\Gamma\Rightarrow\Delta,A\parr B$}
        \end{prooftree}
        For the induction, we assume that $\Gamma\supp{}{}\Delta,A,B$. Let $\Gamma = A^1,\dots, A^n$ and let $\base{B}$ be an arbitrary classical base. We want to show that $\Gamma\supp{B}{\varnothing}\Delta,A\parr B$. By (Inf), this is equivalent to showing that for all $\base{C}\supseteq\base{B}$, for all $i$ from 1 to $n$:
        \begin{align*}
        \text{if: }& \supp{C}{\Sigma^i_{\At}}A^i,\Pi^i_{\At},\\
        \text{then: }& \supp{C}{\bigcup\Sigma\Atomic^i}A\parr B,\Delta,\Pi^1\Atomic,\dots,\Pi^n\Atomic \quad (\star)
        \end{align*}
        For clarity, let:
        \begin{align*}
            \hat\Gamma&\coloneq\left\{\bigcup^n_{i=1}\Sigma^i\Atomic\right\},\\
            \hat\Delta&\coloneq\{\Delta,\Pi\Atomic^1,\dots,\Pi\Atomic^n\},
        \end{align*}
        so that the consequence $(\star)$ above can be rewritten as $\supp{C}{\hat\Gamma}A\parr B,\hat\Delta$. Let $\base{C}$ be a base with the above condition. By the semantic clause for disjunction, $(\star)$ is thus equivalent to: for all $\base{D}\supseteq\base{C}$, and for arbitrary $\Theta\Atomic,\Phi\Atomic,\Psi\Atomic,\Omega\Atomic$;
        \[
        \text{if both }A\supp{D}{\Theta\Atomic}\Lambda\Atomic\text{ and } B\supp{D}{\Phi\Atomic}\Psi\Atomic\text{, then}\supp{D}{\hat\Gamma\cup\Theta\Atomic\cup\Phi\Atomic}\hat\Delta,\Lambda\Atomic\Psi\Atomic.
        \]
        
        Now, choose a $\base{D}\supseteq\base{C}$ such that $A\supp{D}{\Theta\Atomic}\Lambda\Atomic$ and  $B\supp{D}{\Phi\Atomic}\Psi\Atomic$. By monotonicity, $\Gamma\supp{D}{\varnothing}\Delta,A,B$, therefore by applying Cut twice, $\Gamma\supp{D}{\Theta\Atomic\cup\Phi\Atomic}\Delta,\Lambda\Atomic,\Psi\Atomic$. As $\supp{C}{\Sigma^i_{\At}}A^i,\Pi^i_{\At}$, then by (Inf) we have that $\supp{D}{\Theta\Atomic\cup\Phi\Atomic\cup(\bigcup\Sigma^i\Atomic)}\Delta,\Lambda\Atomic,\Psi\Atomic,\Pi^1\Atomic,\dots,\Pi^n\Atomic$, that is, $\supp{D}{\hat\Gamma\cup\Theta\Atomic\cup\Phi\Atomic}\hat\Delta,\Lambda\Atomic,\Psi\Atomic$, as desired.

        Hence, we have shown: if $\supp{C}{\Sigma^i_{\At}}A^i,\Pi^i_{\At}$, then $\supp{C}{\hat\Gamma}A\parr B,\hat\Delta$. As $\base{B}$ was arbitrary, we have $\Gamma\supp{}{}\Delta,A\parr B$ in general, assuming that $\Gamma\supp{}{}\Delta,A,B$. That concludes the induction for this case.

        \item[$(L\multimap)$:] 
        If the last rule in the derivation is $(L\multimap)$, then the derivation concludes in the following way:
        \begin{prooftree}
            \AxiomC{$\Gamma\Rightarrow\Delta,A$}
            \AxiomC{$B,\Gamma'\Rightarrow\Delta'$}
            \BinaryInfC{$A\multimap B,\Gamma,\Gamma'\Rightarrow\Delta,\Delta'$}
        \end{prooftree}
        For the induction, we assume that $\Gamma\supp{}{}\Delta,A$ and $B,\Gamma'\supp{}{}\Delta'$. Let $\Gamma = A^1,\dots, A^n$, $\Gamma'=B^1,\dots, B^m$, and let $\base{B}$ be an arbitrary classical base. We want to show that $A\multimap B,\Gamma,\Gamma'\supp{B}{\varnothing}\Delta,\Delta'$. By (Inf), this is equivalent to showing that for all $\base{C}\supseteq\base{B}$, for all $i$ from 1 to $n$ and all $j$ from 1 to $m$:
        \begin{align*}
        \text{if: }& \supp{C}{\Sigma^i_{\At}}A^i,\Pi^i_{\At},\supp{C}{\Theta^j_{\At}}B^j,\Omega^j_{\At},\text{ and }\supp{C}{\Upsilon\Atomic}A\multimap B, \Lambda\Atomic \\
        \text{then: }& \supp{C}{\left(\bigcup\Sigma\Atomic^i\right)\cup\left(\bigcup\Theta\Atomic^j\right)\cup\Upsilon\Atomic}\Delta,\Delta',\Lambda\Atomic,\Pi^1\Atomic,\dots,\Pi^n\Atomic,\Omega^1\Atomic,\dots,\Omega^m\Atomic \quad (\star)
        \end{align*}

        So, choose a $\base{C}\supseteq\base{B}$ such that the above conditions hold. By Definition~\ref{def:addandmultsupport}, we have $\Gamma\supp{C}{\varnothing}\Delta,A$ and $B,\Gamma'\supp{C}{\varnothing}\Delta'$. As $\supp{C}{\Sigma^i_{\At}}A^i,\Pi\Atomic^i$, we cut by all elements in $\Gamma$ to give $\supp{C}{\bigcup\Sigma\Atomic^i}\Delta,A,\Pi\Atomic^1,\dots,\Pi\Atomic^n$. Similarly, we cut by all elements in $\Gamma'$ to give $B\supp{C}{\bigcup\Theta\Atomic^j}\Delta',\Omega\Atomic^1,\dots,\Omega\Atomic^m$.
        
        By assumption $\supp{C}{\Upsilon\Atomic}A\multimap B, \Lambda\Atomic$. Hence by the semantic clause for implication we have that $\supp{C}{\left(\bigcup\Sigma\Atomic^i\right)\cup\left(\bigcup\Theta\Atomic^j\right)\cup\Upsilon\Atomic}\Delta,\Delta',\Lambda\Atomic,\Pi^1\Atomic,\dots,\Pi^n\Atomic,\Omega^1\Atomic,\dots,\Omega^m\Atomic$, which is $(\star)$, as desired.
        
        Hence, we have shown that $A\multimap B,\Gamma,\Gamma'\supp{B}{\varnothing}\Delta,\Delta'$, and as $\base{B}$ was an arbitrary classical base, the result holds without restriction on base or context. Thus, assuming that $\Gamma\supp{}{}\Delta,A$ and $B,\Gamma'\supp{}{}\Delta'$, we have shown $A\multimap B,\Gamma,\Gamma'\supp{}{}\Delta,\Delta'$.
        
        \item[$(R\multimap)$:]
        If the last rule in the derivation is $(R\multimap)$, then the derivation concludes in the following way:
        \begin{prooftree}
            \AxiomC{$A,\Gamma\Rightarrow\Delta,B$}
            \UnaryInfC{$\Gamma\Rightarrow\Delta,A\multimap B$}
        \end{prooftree}
        For the induction, we assume that $A,\Gamma\supp{}{}\Delta,B$. Let $\Gamma = A^1,\dots, A^n$ and let $\base{B}$ be an arbitrary classical base. We want to show that $\Gamma\supp{B}{\varnothing}\Delta,A\multimap B$. By (Inf), this is equivalent to showing that for all $\base{C}\supseteq\base{B}$, for all $i$ from 1 to $n$:
        \begin{align*}
        \text{if: }& \supp{C}{\Sigma^i_{\At}}A^i,\Pi^i_{\At},\\
        \text{then: }& \supp{C}{\bigcup\Sigma\Atomic^i}A\multimap B,\Delta,\Pi^1\Atomic,\dots,\Pi^n\Atomic \quad (\star)
        \end{align*}
        For clarity, let:
        \begin{align*}
            \hat\Gamma&\coloneq\left\{\bigcup^n_{i=1}\Sigma^i\Atomic\right\},\\
            \hat\Delta&\coloneq\{\Delta,\Pi\Atomic^1,\dots,\Pi\Atomic^n\},
        \end{align*}
        so that the consequence $(\star)$ above can be rewritten as $\supp{C}{\hat\Gamma}A\multimap B,\hat\Delta$. Let $\base{C}$ be a base with the above condition. By the semantic clause for implication, $(\star)$ is thus equivalent to: for all $\base{D}\supseteq\base{C}$, and for arbitrary $\Theta\Atomic,\Phi\Atomic,\Psi\Atomic,\Omega\Atomic$;
        \[
        \text{if both }\supp{D}{\Theta\Atomic}A,\Lambda\Atomic\text{ and } B\supp{D}{\Phi\Atomic}\Psi\Atomic\text{, then}\supp{D}{\hat\Gamma\cup\Theta\Atomic\cup\Phi\Atomic}\hat\Delta,\Lambda\Atomic\Psi\Atomic.
        \]
        
        Now, choose a $\base{D}\supseteq\base{C}$ such that $\supp{D}{\Theta\Atomic}A,\Lambda\Atomic$ and  $B\supp{D}{\Phi\Atomic}\Psi\Atomic$. By Definition~\ref{def:addandmultsupport}, $A,\Gamma\supp{D}{\varnothing}\Delta,B$, therefore by Cut (Lemma~\ref{lemma:basecut}) of $A$, $\Gamma\supp{D}{\Theta\Atomic}\Delta,\Lambda\Atomic,B$. Similarly, Cut by $B$ to yield $\Gamma\supp{D}{\Theta\Atomic\cup\Phi\Atomic}\Delta,\Lambda\Atomic,\Psi\Atomic$. As $\supp{C}{\Sigma^i_{\At}}A^i,\Pi^i_{\At}$, then by (Inf) we have that $\supp{D}{\Theta\Atomic\cup\Phi\Atomic\cup(\bigcup\Sigma^i\Atomic)}\Delta,\Lambda\Atomic,\Psi\Atomic,\Pi^1\Atomic,\dots,\Pi^n\Atomic$, that is, $\supp{D}{\hat\Gamma\cup\Theta\Atomic\cup\Phi\Atomic}\hat\Delta,\Lambda\Atomic,\Psi\Atomic$, as desired.

        Hence, we have shown: if $\supp{C}{\Sigma^i_{\At}}A^i,\Pi^i_{\At}$, then $\supp{C}{\hat\Gamma}A\multimap B,\hat\Delta$. As $\base{B}$ was arbitrary, we have $\Gamma\supp{}{}\Delta,A\multimap B$ in general, assuming that $A,\Gamma\supp{}{}\Delta,B$. That concludes the induction for this case.

  \item[$(L \bot)$] The derivation has the following shape:

            \begin{prooftree}
            \AxiomC{}
            \UnaryInfC{$\bot \Rightarrow \varnothing$}
        \end{prooftree}

    By picking an arbitrary classical base $\B$ and an arbitrary $\C \supseteq \B$ with $\Vdash_{\C}^{\Sigma_{\At}} \bot, \Pi_{\At}$ we obtain $\Vdash_{\C}^{\Sigma_{\At}} \Pi_{\At}$ by $(\bot)$, whence by (Inf) and arbitrariness of $\B$ and $\C$ we have $\bot \Vdash \varnothing$.

        \item[$(R \bot)$] The derivation has the following shape:

            \begin{prooftree}
            \AxiomC{$\Gamma \Rightarrow \Delta$}
            \UnaryInfC{$\Gamma \Rightarrow \Delta, \bot$}
        \end{prooftree}

  Let $\Gamma = \{A^{1}, \ldots, A^{n}\}$.  By picking an arbitrary classical base $\B$ and an arbitrary $\C \supseteq \B$ with $\Vdash_{\C}^{\Sigma^{i}_{\At}} A^{i}, \Pi_{\At}$ for all $1 \leq i \leq m$ we obtain $\Vdash_{\C}^{\bigcup \Sigma^{i}_{\At}} \Delta, \Pi^{1}_{\At}, \ldots, \Pi^{n}_{\At}$. Then we have  $\Vdash_{\C}^{\bigcup \Sigma^{i}_{\At}} \Delta, \bot, \Pi^{1}_{\At}, \ldots, \Pi^{n}_{\At}$ by $(\bot)$, hence by (Inf) and arbitrariness of $\Sigma^{i}_{\At}$, $\Pi^{i}_{\At}$, $\B$ and $\C$ we have $\Gamma \Vdash \Delta, \bot$.

    \end{itemize}
\end{proof}

\begin{proof}[of Lemma~\ref{mainlemmacompletenessadd} -- The multiplicative case]
If the complexity of of $\Sigma \cup \Theta_{\At}$ is $0$ then $\Sigma$ is a set of atoms, so the result holds by the semantic clause for atoms. If the complexity of $\Sigma \cup \Theta_{\At}$ is greater than $0$, then $\Sigma = \Pi \cup C$ for some $C$ with shape $A \otimes B$, $A \parr B$ or $A \multimap B$. 

\begin{itemize}
    \item[($C = A \multimap B$):]
($\Rightarrow$): Assume $\Vdash^{\Omega_{\At}}_{\B} A \multimap B, \Pi, \Theta_{\At}$. By \init\ we have $\vdash_{\B} p^{A} \Rightarrow p^{A}$, so $\Vdash^{p^{A}}_{\B} A$ via the induction hypothesis. Pick any $\C \supseteq \B$ and $\Phi_{\At}, \Psi_{\At}$ such that $\Vdash^{\Phi^{\At}}_{\C} B, \Psi_{\At}$. The induction hypothesis yields $\vdash_{\C} \Phi_{\At} \Rightarrow p^{B}, \Psi_{\At}$  and so $\Vdash^{\Phi^{\At}}_{\C} p^{B}, \Psi_{\At}$, hence $B \Vdash^{\varnothing}_{\B} p^{B}$ by (Inf). Since $\Vdash^{\Omega_{\At}}_{\B} A \multimap B, \Pi, \Theta_{\At}$ the semantic clause for implication yields $\Vdash^{p^{A}, \Omega_{\At}}_{\B} \Pi, \Theta_{\At}, p^{B}$.  The induction hypothesis yields $\vdash_{\B} p^{A}, \Omega_{\At} \Rightarrow \Pi_{\At}, \Theta_{\At}, p^{B}$, so by applying the appropriate instance of $R \multimap$ we conclude $\vdash_{\B} \Omega_{\At} \Rightarrow p^{A \multimap B}, \Pi_{\At}, \Theta_{\At}$.

($\Leftarrow$): Assume $\vdash_{\B} \Omega_{\At} \Rightarrow p^{A \multimap B}, \Pi_{\At}, \Theta_{\At}$. Pick any $\C \supseteq \B$ and $\Phi_{\At}, \Psi_{\At}, \Upsilon_{\At}, \Lambda_{\At}$ such that $\Vdash^{\Phi_{\At}}_{\C} A, \Psi_{\At}$ and $B \Vdash_{\C}^{\Upsilon_{\At}}, \Lambda_{\At}$. By induction hypothesis, $\Vdash^{\Phi_{\At}}_{\C} p^{A}, \Psi_{\At}$, so $\vdash_{\C} \Phi_{\At} \Rightarrow p^{A}, \Psi_{\At}$. By \init\ we have $\vdash_{\C} p ^{B} \Rightarrow p^{B}$, so $\Vdash^{p^{B}}_{\C} p^{B}$, hence by induction hypothesis $\Vdash^{p^{B}}_{\C} B$. Since $\Vdash^{p^{B}}_{\C} B$ and $B \Vdash_{\C}^{\Upsilon_{\At}}, \Lambda_{\At}$ we obtain $\Vdash^{p^{B}, \Upsilon_{\At}}_{\C} \Lambda_{\At}$, so $\vdash_{\C} p^{B}, \Upsilon_{\At} \Rightarrow \Lambda_{\At}$. Since $\vdash_{\B} \Omega_{\At} \Rightarrow p^{A \multimap B}, \Pi_{\At}, \Theta_{\At}$ and $\mathcal{C} \supseteq \mathcal{B}$ we also have $\vdash_{\C} \Omega_{\At} \Rightarrow p^{A \multimap B}, \Pi_{\At}, \Theta_{\At}$, hence we can build the following deduction:

\begin{prooftree}
      \AxiomC{\vdots}
        \noLine
    \UnaryInfC{$\Phi_{\At }\Rightarrow p^{A}, \Psi_{\At}$}
    \AxiomC{$\Xi$}
    \noLine
    \UnaryInfC{$\Omega_{\At}, p^{A} \Rightarrow p^{B},  \Pi_{\At}, \Theta_{\At}$}
    \RightLabel{\scriptsize{\Cutc}}
\BinaryInfC{$\Phi_{\At }, \Omega_{\At}\Rightarrow p^{B},  \Pi_{\At},\Theta_{\At},  \Psi_{\At}$}
\AxiomC{\vdots}
        \noLine
    \UnaryInfC{$p^{B}, \Upsilon_{\At }\Rightarrow \Lambda_{\At}$}
    \RightLabel{\scriptsize{\Cutc}}
    \BinaryInfC{$\Phi_{\At }, \Upsilon_{\At}, \Omega_{\At} \Rightarrow  \Pi_{\At}, \Theta_{\At}, \Psi_{\At},  \Lambda_{\At}$}
\end{prooftree}
where $\Xi$ is the derivation
\begin{prooftree}
    \AxiomC{\vdots}
        \noLine
    \UnaryInfC{$\Omega_{\At}\Rightarrow p^{A \multimap B},  \Pi_{\At}, \Theta_{\At}$}
    \AxiomC{}
    \RightLabel{\scriptsize{\init}}
    \UnaryInfC{$p^{A} \Rightarrow p^{A}$}
    \AxiomC{}
    \RightLabel{\scriptsize{\init}}
    \UnaryInfC{$p^{B} \Rightarrow p^{B}$}
       \RightLabel{\scriptsize{$L \multimap$}}
    \BinaryInfC{$p^{A}, p^{A \multimap B} \Rightarrow p^{B}$}
    \RightLabel{\scriptsize{\Cutc}}
    \BinaryInfC{$\Omega_{\At}, p^{A} \Rightarrow p^{B},  \Pi_{\At}, \Theta_{\At}$}
\end{prooftree}

So $\vdash_{\C} \Phi_{\At }, \Upsilon_{\At}, \Omega_{\At} \Rightarrow  \Pi_{\At}, \Theta_{\At}, \Psi_{\At},  \Lambda_{\At}$ and $\Vdash^{\Phi_{\At}, \Lambda_{\At} , \Omega_{\At}}_{\C} \Pi_{\At}, \Theta_{\At}, \Psi_{\At}, \Lambda_{\At}$. By induction hypothesis, $\Vdash^{\Phi_{\At}, \Lambda_{\At} , \Omega_{\At}}_{\C} \Pi, \Theta_{\At}, \Psi_{\At}, \Lambda_{\At}$  Since $\C$ was an arbitrary extension of $\B$ with $\Vdash^{\Phi_{\At}}_{\C} A, \Psi_{\At}$ and $B \Vdash_{\C}^{\Upsilon_{\At}}, \Lambda_{\At}$, we conclude $\Vdash_{\B}^{\Omega_{\At}} A \multimap B, \Pi, \Theta_{\At}$ by the semantic clause for implication.

  \item[($C = A \otimes B$):] ($\Rightarrow$): Assume $\Vdash^{\Omega_{\At}}_{\B} A \otimes B, \Pi, \Theta_{\At}$. Pick any $\C \supseteq \B$ and $\Phi_{\At}, \Psi_{\At}, \Upsilon_{\At}, \Lambda_{\At}$ such that $\Vdash^{\Phi^{\At}}_{\C} A, \Psi_{\At}$ and $\Vdash^{\Upsilon^{\At}}_{\C} B, \Lambda_{\At}$. The induction hypothesis yields $\vdash_{\C} \Phi_{\At} \Rightarrow p^{A}, \Psi_{\At}$ and $\vdash_{\C} \Upsilon_{\At} \Rightarrow p^{B}, \Lambda_{\At}$. By applying the appropriate instance of $R \otimes$ we obtain $\vdash_{\C} \Phi_{\At}, \Upsilon_{\At} \Rightarrow \Psi_{\At}, \Lambda_{\At}, p^{A \otimes B}$, so $\Vdash^{\Phi_{\At}, \Upsilon_{\At}}_{\C} p^{A \otimes B}, \Psi_{\At}, \Lambda_{\At}$. We conclude $A, B \Vdash^{\varnothing}_{\B} p^{A \otimes B}$, hence since $\Vdash^{\Omega_{\At}}_{\B} A \otimes B, \Pi, \Theta_{\At}$ the semantic clause for conjunction yields $\Vdash^{\Omega_{\At}}_{\B} p^{A \otimes B}, \Pi, \Theta_{\At}$, so by the induction hypothesis we conclude $\vdash_{\B} \Omega_{\At} \Rightarrow  p^{A \otimes B}, \Pi_{\At}, \Theta_{\At}$.

  ($\Leftarrow$): Assume $\vdash_{\B} \Omega_{\At} \Rightarrow p^{A \otimes B},  \Pi_{\At}, \Theta_{\At}$. Pick any $\C \supseteq \B$ and any $\Phi_{\At}, \Psi_{\At}$ such that $A, B \Vdash^{\Phi_{\At}}_{\C} \Psi_{\At}$. By \init\ we have $\vdash_{\C} p^{A} \Rightarrow p^{A}$ and $\vdash_{\C} p^{B} \Rightarrow p^{B}$, so the induction hypothesis yields $\Vdash^{p^{A}}_{\C} A$ and $\Vdash^{p^{B}}_{\C} B$, hence by applying (Inf) we get $\Vdash^{p^{A}, p^{B},  \Phi_{\At}}_{\C} \Psi_{\At}$ and so $\vdash_{\C} p^{A}, p^{B},  \Phi_{\At} \Rightarrow \Psi_{\At}$. The we build the following deduction:

      \begin{prooftree}
      \AxiomC{$\vdots$}
      \noLine
      \UnaryInfC{$\Omega_{\At} \Rightarrow p^{A \otimes B},  \Pi_{\At}, \Theta_{\At}$}
    \AxiomC{$\vdots$}
     \noLine
   \UnaryInfC{$p^{A}, p^{B}, \Phi_{\At} \Rightarrow \Psi_{\At}$}
    \UnaryInfC{$p^{A \otimes B}, \Phi_{\At} \Rightarrow \Psi_{\At}$}
    \RightLabel{\scriptsize{\Cutc}}
    \BinaryInfC{$\Omega_{\At}, \Phi_{\At}\Rightarrow   \Pi_{\At}, \Theta_{\At}, \Psi_{\At}$}
      \end{prooftree}

So $\vdash_{\C} \Omega_{\At}, \Phi_{\At}\Rightarrow   \Pi_{\At}, \Theta_{\At}, \Psi_{\At}$, hence by the induction hypothesis $\Vdash^{\Omega_{\At}, \Phi_{\At}}_{\C}   \Pi, \Theta_{\At}, \Psi_{\At}$, so the semantic clause for conjunction yields $\Vdash^{\Omega_{\At}}_{\B} A \otimes B, \Pi, \Theta_{\At}$.

  \item[($C = A \parr B$):] ($\Rightarrow$): Assume $\Vdash^{\Omega_{\At}}_{\B} A \parr B, \Pi, \Theta_{\At}$. Pick any $\C \supseteq \B$ and $\Phi_{\At}, \Psi_{\At}$ such that $\Vdash^{\Phi^{\At}}_{\C} A, \Psi_{\At}$. The induction hypothesis yields $\vdash_{\C} \Phi_{\At} \Rightarrow p^{A},  \Psi_{\At}$ and then $\Vdash^{\Phi_{\At}}_{\C} p^{A}$, so we conclude $A \Vdash^{\varnothing}_{\B} p^{A}$. A similar argument yields $B \Vdash^{\varnothing}_{\B} p^{B}$, hence since $\Vdash^{\Omega_{\At}}_{\B} A \parr B, \Pi, \Theta_{\At}$ we conclude $\Vdash^{\Omega_{\At}}_{\B} p^{A}, p^{B}, \Pi,\Theta_{\At}$. The induction hypothesis yields $\vdash_{\B} \Omega_{\At} \Rightarrow p^{A}, p^{B}, \Pi,\Theta_{\At}$, so by applying the appropriate instance of $R \parr$ we conclude $\vdash_{\B} \Omega_{\At} \Rightarrow p^{A \parr B},  \Pi,\Theta_{\At}$

  ($\Leftarrow$): Assume $\vdash_{\B} \Omega_{\At} \Rightarrow p^{A \parr B},  \Pi_{\At}, \Theta_{\At}$. Pick any $\C \supseteq \B$ and $\Phi_{\At}, \Psi_{\At}, \Upsilon_{\At}, \Lambda_{\At}$ such that $A \Vdash^{\Phi_{\At}}_{\C} \Psi_{\At}$ and  $B \Vdash^{\Upsilon_{\At}}_{\C} \Lambda_{\At}$. Since $\vdash_{\C} p^{A} \Rightarrow p^{A}$ and $\vdash_{\C} p^{B} \Rightarrow p^{B}$ both hold by Axiom, the induction hypothesis yields $\Vdash^{p^{A}}_{\C} A$ and $\Vdash^{p^{B}}_{\C} B$, hence we can apply (Inf) to obtain $\Vdash^{\Phi_{\At}, p^{A}}_{\C} \Psi_{\At}$ and $\Vdash^{\Upsilon_{\At}, p^{B}}_{\C} \Lambda_{\At}$. Those yield $\vdash_{\C} p^{A}, \Phi_{\At} \Rightarrow \Psi_{\At}$ and $\vdash_{\C} p^{B}, \Upsilon_{\At} \Rightarrow \Lambda_{\At}$. Since  $\vdash_{\B} \Omega_{\At} \Rightarrow p^{A \parr B},  \Pi_{\At}, \Theta_{\At}$ we also have  $\vdash_{\C} \Omega_{\At} \Rightarrow p^{A \parr B},  \Pi_{\At}, \Theta_{\At}$, so we can construct the following deduction:

\begin{prooftree}
     \AxiomC{$\vdots$}
      \noLine
      \UnaryInfC{$\Omega_{\At} \Rightarrow p^{A \parr B},  \Pi_{\At}, \Theta_{\At}$}
      \AxiomC{\vdots}
      \noLine
      \UnaryInfC{$p^{A}, \Phi_{\At} \Rightarrow \Psi_{\At}$}
        \AxiomC{\vdots}
      \noLine
      \UnaryInfC{$p^{B}, \Upsilon_{\At} \Rightarrow \Lambda_{\At}$}
         \RightLabel{\tiny{$R \parr$}}
      \BinaryInfC{$p^{A \parr B}, \Phi_{\At} \Rightarrow \Psi_{\At}$}
      \RightLabel{\tiny{\Cutc}}
      \BinaryInfC{$\Omega_{\At}, \Phi_{\At}, \Upsilon_{\At} \Rightarrow \Psi_{\At}, \Pi_{\At}, \Theta_{\At}, \Lambda_{\At}$}
\end{prooftree}

So $\vdash_{\C} \Omega_{\At}, \Phi_{\At} , \Upsilon_{\At}  \Rightarrow \Psi_{\At}, \Pi_{\At}, \Theta_{\At}, \Lambda_{\At}$ and then $\Vdash^{\Omega_{\At}, \Phi_{\At}, \Upsilon_{\At}}_{\C} \Psi_{\At}, \Pi_{\At}, \Theta_{\At}, \Lambda_{\At}$, whence the induction hypothesis yields $\Vdash^{\Omega_{\At}, \Phi_{\At}, \Upsilon_{\At}}_{\C} \Psi_{\At}, \Pi, \Theta_{\At}, \Lambda_{\At}$, so by the semantic clause for disjunction we conclude $\Vdash^{\Omega_{\At}}_{\B} A \parr B, \Pi, \Theta_{\At}$.

\item[($C = \bot$):] ($\Rightarrow$): Assume $\Vdash^{\Omega_{\At}}_{\B} \bot, \Pi, \Theta_{\At}$. Then $\Vdash^{\Omega_{\At}}_{\B} \Pi, \Theta_{\At}$. The induction hypothesis yields $\vdash_{\C} \Omega_{\At} \Rightarrow \Pi_{\At},  \Theta_{\At}$, so we can construct the following deduction:

  \begin{prooftree}
      \AxiomC{\vdots}
      \noLine
      \UnaryInfC{$ \Omega_{\At} \Rightarrow \Pi_{\At}, \Theta_{\At}$}
      \RightLabel{\tiny{$R \bot$}}
      \UnaryInfC{$ \Omega_{\At} \Rightarrow p^{\bot}, \Pi_{\At}, \Theta_{\At}$}
  \end{prooftree}

Which shows the the desired result.

  ($\Leftarrow$): Assume $\vdash_{\C} \Omega_{\At} \Rightarrow p^{\bot}, \Pi_{\At}, \Theta_{\At}$. We can construct the following deduction:

 \begin{prooftree}
      \AxiomC{\vdots}
      \noLine
      \UnaryInfC{$ \Omega_{\At} \Rightarrow p^{\bot}, \Pi_{\At}, \Theta_{\At}$}
      \AxiomC{}
        \RightLabel{\tiny{$L \bot$}}
      \UnaryInfC{$p^{\bot} \Rightarrow \varnothing$}
      \RightLabel{\tiny{\Cutc}}
      \BinaryInfC{$ \Omega_{\At} \Rightarrow  \Pi_{\At}, \Theta_{\At}$}
  \end{prooftree}

The induction hypothesis yields $\Vdash^{\Omega_{\At}}_{\C} \Pi, \Theta_{\At}$, hence $\Vdash^{\Omega_{\At}}_{\C} \bot, \Pi, \Theta_{\At}$.
\end{itemize}
\end{proof}

\section{Proof of Lemma~\ref{lemma:intmod}}\label{app:single}
\begin{proof}
We start by proving the case $\Sigma = \varnothing$ by induction on the degree $\Delta$.

If $\Delta$ is empty the result already holds. If $\Delta$ is a non-empty set of atoms $\Delta_{\At}$, then $\vdash_{\mathcal{B}} \Gamma_{\At} \Rightarrow \Delta_{\At}$ by $(\At)$. The definition of $\B$ guarantees that all its rules have at most one atom on their consequents, including the last rule applied on the derivation proving $\vdash_{\mathcal{B}} \Gamma_{\At} \Rightarrow \Delta_{\At}$. Therefore, $\Delta_{\At} = \{ p\}$ for some $p \in \At$. 

For non-empty $\Delta$ with at least one non-atomic formula $C$, the proof is separated into cases. We only prove the result for $A \oplus B$, as the others are analogous.

Let $\Delta = \{A \oplus B\} \cup \Theta$. By assumption, $\Vdash^{\Gamma_{\At}}_{\B} A \oplus B, \Theta$. First, pick an arbitrary $p \in \At$, and let $\C$ be obtained by adding to $\B$ one rule with no premises and with conclusion $\Omega_{\At} \Rightarrow p$ for every $\Omega_{\At}$. Now pick any $\D \supseteq \C$ and arbitrary $\Phi_{\At}, \Psi_{\At}$ such that $\Vdash^{\Phi_{\At}}_{\D} A, \Psi_{\At}$. By the induction hypothesis, $\Psi_{\At}$ must be empty. Since $\vdash_{\D} \Phi_{\At} \Rightarrow p$ holds by the definition of $\D$, $\Vdash^{\Phi_{\At}}_{\D} p$ also holds, hence by arbitrariness of $\Phi_{\At}$ and $\Psi_{\At}$ we obtain $A \Vdash^{\varnothing}_{\C} p$. A similar argument shows $B \Vdash^{\varnothing}_{\C} p$. Since $\Vdash^{\Gamma_{\At}}_{\B} A \oplus B, \Theta$, $A \Vdash^{\varnothing}_{\C} p$ and $B \Vdash^{\varnothing}_{\C} p$ we conclude $\Vdash^{\Gamma_{\At}}_{\C} p, \Theta$, and by the induction hypothesis we conclude that $\Theta$ must be empty, proving the result.

To conclude the proof, let $\Sigma = \{ C^{1}, \ldots, C^{n}\}$. Pick any $\C \supseteq \B$ such that $\Vdash^{\Phi^{i}_{\At}}_{\C} C^{i}, \Psi^{i}_{\At}$ for all $1 \leq n $. By the result for $\Sigma = \varnothing$, $\Psi^{i}$ is empty for all $1 \leq n$. Since $\Sigma \Vdash^{\Gamma_{\At}}_{\B} \Delta$, we conclude $\Vdash^{\bigcup \Psi^{i}_{\At} \cup \Gamma_{\At} }_{\B} \Delta$, whence the result for empty antecedent yields that $\Delta$ has at most one formula.

Since the semantic clauses are still the same, so are the proofs using them. Therefore, if we repeat the proofs of soundness and completeness but now using simulation bases containing at most one formula on its right.

\section{Proofs of Section~\ref{sec:jankov}}\label{app:jankov}
\begin{proof}[of Lemma~\ref{lemma:derivJankov}]
    For the left to right direction, let $\Pi$ be a derivation in $\JK$. Now let $\Pi'$ be the deduction obtained by replacing all occurrences of $\sf{AxJK}$ with conclusion $\Rightarrow \neg p\oplus \neg \neg p$ by occurrences of \init \  with conclusion $\neg p \oplus \neg \neg p \Rightarrow \neg p \oplus \neg \neg p$, as well as adding the formula $\neg p \oplus \neg \neg p$ to the left context of all sequents appearing below it. Since all new formulas to the left of end sequent can be elements of an atomic Jankov set, it follows that this is a deduction in $\Int$ of the desired shape.

    For the converse, pick derivation $\Pi$ in $\Int$ of a sequent $\Theta, \Gamma \Rightarrow \Delta$, where $\Theta$ is an atomic Jankov set. Given an arbitrary formula $\neg p \oplus \neg \neg p$ of $\Theta$, we can produce the following derivation $\Pi'$, where $\Omega = \Theta - \{ \neg p \oplus \neg \neg p \}$:

    \begin{prooftree}
    \AxiomC{}
    \RightLabel{$\sf{AxJK}$}
    \UnaryInfC{$\Rightarrow \neg p \oplus \neg \neg p$}
        \AxiomC{$\Pi$}
        \noLine
        \UnaryInfC{$\neg p \oplus \neg \neg p,   \Omega, \Gamma  \Rightarrow \Delta$}
        \RightLabel{\Cuti}
        \BinaryInfC{$\Omega, \Gamma \Rightarrow \Delta$}
    \end{prooftree}

    Since $\Theta$ is finite, by repeating the procedure finitely many times we obtain the desired derivation showing $\Gamma \Rightarrow A$ in $\JK$.

    \end{proof}

\begin{proof}[of Lemma~\ref{lemma:jankovsound}]

    For the base case of weakly Jankov bases, pick any $\C \supseteq \B$ such that $\neg p \Vdash^{\Gamma_{\At}}_{\C} \Delta_{\At}$ and $\neg \neg p \Vdash^{\Gamma_{\At}}_{\C} \Delta_{\At}$ for arbitrary $\Gamma_{\At}$ and $\Delta_{\At}$. Pick a $\D \supseteq \C^{\neg}$ such that, for arbitrary $\Theta_{\At}$,  $\Sigma_{\At}$ and $\Omega_{\At}$, we have $\Vdash^{\Theta_{\At}}_{\D} p$ and $\bot \Vdash^{\Sigma_{\At}}_{\D}, \Omega_{\At}$. From $\Vdash^{\Theta_{\At}}_{\D} p$ we conclude $\vdash_{\D} \Theta_{\At}\Rightarrow p$. By applying \Cuti \ with $[/p \Rightarrow \varnothing]$ we conclude $\vdash_{\D} \Theta_{\At }\Rightarrow \varnothing$ and $\Vdash^{\Theta_{\At}}_{\D} \varnothing$. By $(\bot)$ we conclude $\Vdash^{\Theta_{\At}}_{\D} \bot$, so by Lemma \ref{lemma:basecut} and $\bot \Vdash^{\Sigma_{\At}}_{\D} \Omega_{\At}$ we conclude $\Vdash^{\Theta_{\At} \cup \Sigma_{\At}}_{\D} \Omega_{\At}$, so  $\Vdash^{\varnothing}_{\C^{\neg}} \neg p$ by $(\multimap)$. Since $\neg p \Vdash^{\Gamma_{\At}}_{\C^{\neg}} \Delta_{\At}$ holds by monotonicity, by Lemma \ref{lemma:basecut} we conclude $\Vdash^{\Gamma_{\At}}_{\C^{\neg} } \Delta_{\At}$ hence $\vdash_{\C^{\neg}} \Gamma_{\At} \Rightarrow \Delta_{\At}$.
    
    Now pick a $\D \supseteq \C^{\neg \neg}$  such that, for arbitrary $\Theta_{\At}$,  $\Sigma_{\At}$ and $\Omega_{\At}$, we have $\Vdash^{\Theta_{\At}}_{\D} \neg p$ and $\bot \Vdash^{\Sigma_{\At}}_{\D}, \Omega_{\At}$. Since $\Vdash^{\{p\}}_{\D} p$ and $\bot \Vdash^{\varnothing}_{\D} \varnothing$ can be proved trivially, from $\Vdash^{\Theta_{\At}}_{\D} \neg p$ we conclude $\Vdash^{\Theta_{\At}, p }_{\D} \varnothing$, hence also $\vdash_{\D} \Theta_{\At}, p \Rightarrow \varnothing$. Since $\D$ contains $[\Theta_{\At}, p \Rightarrow \varnothing / \Theta_{\At} \Rightarrow \varnothing]$, we can obtain $\vdash_{\D} \Theta_{\At} \Rightarrow \varnothing$, then $\Vdash^{\Theta_{\At}}_{\D} \varnothing$ and $\Vdash^{\Theta_{\At}}_{\D} \bot$. Since $\bot \Vdash^{\Sigma_{\At}}_{\D}, \Omega_{\At}$, by Lemma \ref{lemma:basecut} we can conclude $\Vdash^{\Theta_{\At} \cup \Sigma_{\At}}_{\D} \Omega_{\At}$, which yields $\Vdash^{\varnothing}_{\C^{\neg \neg}} \neg \neg p$. Since $\neg \neg p \Vdash^{\Gamma_{\At}}_{\C^{\neg \neg}} \Delta_{\At}$ holds by monotonicity, by Lemma \ref{lemma:basecut} we conclude $\Vdash^{\Gamma_{\At}}_{\C^{\neg \neg} } \Delta_{\At}$ hence $\vdash_{\C^{\neg \neg}} \Gamma_{\At} \Rightarrow \Delta_{\At}$.

    Since $\vdash_{\C^{\neg}} \Gamma_{\At} \Rightarrow \Delta_{\At}$, $\vdash_{\C^{\neg \neg}} \Gamma_{\At} \Rightarrow \Delta_{\At}$ and $\C$ is a weakly Jankov base, we conclude $\vdash_{\C} \Gamma_{\At} \Rightarrow \Delta_{\At}$ hence $\Vdash^{\Gamma_{\At}}_{\C} \Delta_{\At}$. Since $\C$ is an arbitrary extension of $\B$ with $\neg p \Vdash^{\Gamma_{\At}}_{\C} \Delta_{\At}$ and $\neg \neg p \Vdash^{\Gamma_{\At}}_{\C} \Delta_{\At}$, we finally conclude $\Vdash^{\varnothing}_{\B} \neg p \oplus \neg \neg p$.

    Now for strong Jankov bases. Pick any $\C \supseteq \B$ such that $\neg p \Vdash^{\Gamma_{\At}}_{\C} \Delta_{\At}$ and $\neg \neg p \Vdash^{\Gamma_{\At}}_{\C} \Delta_{\At}$ for arbitrary $\Gamma_{\At}$ and $\Delta_{\At}$.  Assume $\vdash_{\B} p \Rightarrow \varnothing$, and pick any $\D \supseteq \C$ with $\Vdash^{\Theta_{\At}}_{\D} p$ and $\bot \Vdash^{\Sigma_{\At}}_{\D} \Omega_{\At}$. From $\vdash_{\B} p \Rightarrow \varnothing$ follows $\vdash_{\D} p \Rightarrow \varnothing$, so using the same reasoning as in the case for $\C^{\neg}$ in weakly Jankov bases we obtain $\Vdash^{\varnothing}_{\C} \neg p$, hence since $\neg p \Vdash^{\Gamma_{\At}}_{\C} \Delta_{\At}$ we conclude $ \Vdash^{\Gamma_{\At}}_{\C} \Delta_{\At}$ by Lemma \ref{lemma:basecut}, whence we already have $\Vdash^{\varnothing}_{\C} \neg p \oplus \neg \neg p$. Now assume that, for all $\D \supseteq \B$ and all $\Theta_{\At}$, if $\vdash_{\D} \Theta_{\At}, p \Rightarrow \varnothing$ then $\vdash_{\D} \Theta_{\At} \Rightarrow \varnothing$. Pick any $\E \supseteq \C$ such that $\Vdash^{\Sigma_{\At}}_{\E} \neg p$ and $\bot \Vdash^{\Omega_{\At}}_{\E} \Lambda_{\At}$. Using the same reasoning as in the case for $\C^{\neg \neg}$ we obtain $\Vdash^{\varnothing}_{\C} \neg \neg p$, so since $\neg \neg p \Vdash^{\Gamma_{\At}}_{\C} \Delta_{\At}$ by Lemma \ref{lemma:basecut} we obtain $\Vdash^{\Gamma_{\At}}_{\C} \Delta_{\At}$, so $\Vdash^{\varnothing}_{\B} \neg p \oplus \neg \neg p$. Since all strongly Jankov bases must satisfy one of those conditions, the result follows.

\end{proof}

\begin{proof}[of Lemma~\ref{lemma:weak}]
  Cumulativeness follows immediately from the construction of $\E$, where all rules are now allowed to have arbitrary extra contexts. Notice that we still have a simulation base for intuitionistic logic, as our definition of $\Int$ \ allows arbitrary contexts in $\init$. To prove that they are also weakly Jankov, pick any atom $p$. Since $\alpha$ is a bijection, there is some $B \in \mathbb{F}(\Gamma^{S}_{\At})$ such that $\alpha(B) = p$, hence $p = p^{B}$. Since $B \in \mathbb{F}(\Gamma^{S}_{\At})$ and $\Theta$ is the set of all Jankov formulas in $\mathbb{F}(\Gamma^{S}_{\At})$, we have $(\neg B \oplus \neg \neg B) \in \Theta$ so there must be some atom such that $\alpha(\neg B \oplus \neg \neg B) = p^{\neg B \oplus \neg \neg B}$, and also $\E$ contains a rule concluding $\Rightarrow p^{\neg B \oplus \neg \neg B}$. Now pick any cumulative $\D \supseteq \E$ and any $q$ such that there is a deduction $\Pi$ showing $\vdash_{\D^{\neg}} \Gamma_{\At} \Rightarrow q$ and a deduction $\Pi'$ showing $\vdash_{\D^{\neg \neg}} \Gamma_{\At} \Rightarrow \Delta_{\At}$. If either $\Pi$ or $\Pi'$ do not use the new rules added to $\D$ then they already are deductions in $\D$, so $\vdash_{\D} \Gamma_{\At} \Rightarrow \Delta_{\At}$. Now assume that $\Pi$ and $\Pi'$ do use instances of the new rules. Then we conjoin the deductions and use the following transformation to systematically remove all occurrences of the new rules of $\D^{\neg}$ and $\D^{\neg \neg}$:

\begin{prooftree}
\AxiomC{}
\RightLabel{$[\Omega_{\At}, p^{B} \Rightarrow \varnothing]$}
    \UnaryInfC{$\Omega_{\At}, p^{B} \Rightarrow \varnothing$}
    \noLine
    \UnaryInfC{$\Sigma$}
    \noLine
    \UnaryInfC{$\Gamma_{\At} \Rightarrow \Delta_{\At}$}
    \DisplayProof
    \qquad 
      \AxiomC{$\Sigma'$}
    \noLine
    \UnaryInfC{$\Theta_{\At}, p^{B} \Rightarrow\varnothing$}
    \RightLabel{$[\Theta_{\At}, p^{B} \Rightarrow \varnothing / \Theta_{\At} \Rightarrow \varnothing]$}
    \UnaryInfC{$\Theta_{\At} \Rightarrow \varnothing$}
    \noLine
    \UnaryInfC{$\Sigma''$}
\end{prooftree}

are transformed into

\begin{center}
\makebox[0pt][c]{%
\AxiomC{}
\RightLabel{$[/ \varnothing \Rightarrow p^{\neg B \oplus \neg \neg B }]$}
\UnaryInfC{$\Rightarrow p^{\neg B \lor \neg \neg B}$}
    \AxiomC{}
    \RightLabel{\init}
    \UnaryInfC{$\Omega_{\At}, p^{B} \Rightarrow p^{B}$}
    \AxiomC{}
    \RightLabel{L$ \bot$}
    \UnaryInfC{$p^{\bot} \Rightarrow \varnothing$}
    \RightLabel{L $\multimap$}
    \BinaryInfC{$ \Omega_{\At},  p^{B},  p^{\neg B} \Rightarrow \varnothing$}
    \noLine
    \UnaryInfC{$\Sigma_{*}$}
    \noLine
    \UnaryInfC{$p^{\neg B}, \ldots, p^{\neg B}, \Gamma_{\At} \Rightarrow \Delta_{\At} $}
       \RightLabel{LC}
    \doubleLine
    \UnaryInfC{$p^{\neg B}, \Gamma_{\At} \Rightarrow \Delta_{\At} $}
     \AxiomC{$\Sigma_{*}'$}
    \noLine
    \UnaryInfC{$\Theta_{\At}, p^{B} \Rightarrow  \varnothing$}
    \RightLabel{R $\bot$}
    \UnaryInfC{$\Theta_{\At}, p^{B} \Rightarrow \bot$}
    \RightLabel{R $\multimap$}
    \UnaryInfC{$\Theta_{\At} \Rightarrow p^{\neg B}$}
    \AxiomC{}
    \RightLabel{L $\bot$}
    \UnaryInfC{$\bot  \Rightarrow \varnothing$}
    \RightLabel{L $\multimap$}
    \BinaryInfC{$\Theta_{\At}, p^{\neg \neg B} \Rightarrow \varnothing$}
    \noLine
    \UnaryInfC{$\Sigma_{*}''$}
    \noLine
    \UnaryInfC{$ p^{\neg \neg B}, \ldots, p^{\neg \neg B}, \Gamma_{\At} \Rightarrow \Delta_{\At} $}
       \RightLabel{LC}
    \doubleLine
    \UnaryInfC{$ p^{\neg\neg  B} ,\Gamma_{\At} \Rightarrow \Delta_{\At} $}
    \RightLabel{L $\oplus$}
    \BinaryInfC{$p^{\neg B \oplus \neg \neg B}, \Gamma_{\At} \Rightarrow \Delta_{\At}$}
    \RightLabel{\Cuti}
    \BinaryInfC{$\Gamma_{\At} \Rightarrow \Delta_{\At}$}
\DisplayProof}
\end{center}

Where double lines represent successive applications of LC and the $\Sigma_{*}$ are obtained by replacing all other occurrences of those rules in the same way and adding $p^{\neg B}$ and $p^{\neg \neg B}$ to contexts, which can be done because $\D$ is cumulative by definition. Since we have chosen arbitrary $p = p^{B}$ and $\D$ and shown $\vdash_{\D} \Gamma_{\At} \Rightarrow \Delta$  we conclude that $\E$ is a weakly Jankov base

\end{proof}

\begin{proof}[of Theorem~\ref{thm:weakcompleteness}]
    Let $\U$ be a simulation base defined for $(\Gamma \cup \Delta)^{S}_{\At}$, and let $\E$ be its extension as defined in the lemma above. Since by Lemma \ref{mainlemmacompletenessadd} we have that $\Vdash^{p^{A}}_{\E} A$ holds for all $A \in \Gamma$ we obtain $\Vdash^{\Gamma_{\At}}_{\E} \Delta$ and then $\vdash_{\E} \Gamma_{\At} \Rightarrow \Delta_{\At}$, where $\Gamma_{\At}$ and $\Delta_{\At}$ are obtained by substituting $A$ by $p^{A}$ in $\Gamma$ and $\Delta$. Notice that $\E$ only contain new rules concluding $\Sigma_{\At} \Rightarrow p^{B}$, where $B$ is either a formula in $\Gamma$ or a Jankov formula, so we can add $p^B$ to all left contexts to turn them into applications of \init \ and the deduction into a deduction in $\Int$, possibly also removing occurrences later via contraction.  By substituting all atoms with shape $p^{C}$ by $C$ in the proofs we obtain a derivation of $\Theta, \Gamma \Rightarrow A$ in a calculus for intuitionistic logic (with arbitrary left contexts). Since $\Theta$ is a Jankov set, by Lemma \ref{lemma:derivJankov} and Corollary \ref{cor:Jankovcor} conclude that $\Gamma \Rightarrow A$ is derivable in Jankov logic.
\end{proof}

\begin{proof}[of Lemma~\ref{mainlemmaJankovstrong}]
  
Since from Lemma \ref{mainlemmacompletenessadd} it follows that $\vdash_{\E} \Rightarrow p^{C}$ implies $\Vdash^{\varnothing}_{\E} C$, the right to left direction follows from the definition of $\E$. As for the other direction, assume $\Vdash^{\varnothing}_{\E} C$. The completeness lemma yields $\vdash_{\E} \Rightarrow p^{C}$. If $\Pi$ is the derivation of $\vdash_{\E} \Rightarrow p^{C}$, let $\Pi'$ be the derivation obtained by substituting every rule added to $\U$ with conclusion $\Rightarrow p^{B}$ by an instance $p^{B} \Rightarrow p^{B}$ and adding $p^{B}$ to the antecedent of all sequent below it, which is allowed because $\U$ contains one version of each rule for each possible extra left context. Then $\Pi'$ is a deduction showing $\vdash_{U} \Omega_{\At} \Rightarrow p^{C}$, where $\Omega_{\At}$ is a finite set of atoms of shape $p^{A}$ for $A \in \Gamma'$. Replacing every $p^{D}$ in the deduction by $D$ we obtain a proof of the sequent $\Omega \Rightarrow C$ in intuitionistic logic, where $\Omega$ is a finite set with $\Omega \subset \Gamma'$. Then we have $\Gamma' \vdash_{\Int} C$, hence since prime theories are deductively closed we have $C \in \Gamma'$.

\end{proof}

\begin{proof}[of Lemma~\ref{lemma:isstronglyJankov}]
    Since $\Gamma'$ is a prime theory in the language $\mathbb{F}(\Gamma \cup \{A\})^{S}_{\At}$ and contains a formula of shape $\neg B \oplus \neg \neg B$ for all $B \in \mathbb{F}(\Gamma \cup \{A\})^{S}_{\At}$, for all $B \in \mathbb{F}(\Gamma \cup \{A\})^{S}_{\At}$ we have either $\neg B \in \Gamma'$ or $\neg \neg B \in \Gamma'$. Since $\alpha$ is a bijection we also have that for all atoms $p$ we have $p = p^{B}$ for some $B \in \mathbb{F}(\Gamma \cup \{A\})^{S}_{\At}$. If $\neg B \in \Gamma'$, by construction of $\E$ we have $\vdash_{\E} \Rightarrow p^{\neg B}$. Then we can do the following:

    \begin{prooftree}
        \AxiomC{}
        \RightLabel{$[/\Rightarrow p^{\neg B}]$}
        \UnaryInfC{$\Rightarrow p^{\neg B}$}
        \AxiomC{}
          \RightLabel{\init}
        \UnaryInfC{$p^{B} \Rightarrow p^{B}$}
        \AxiomC{}
          \RightLabel{L $\bot$}
        \UnaryInfC{$\bot \Rightarrow \varnothing$}
        \RightLabel{L $\multimap$}
        \BinaryInfC{$p,  p^{\neg B} \Rightarrow \varnothing$}
        \RightLabel{\Cuti}
        \BinaryInfC{$p^{B} \Rightarrow \varnothing$}
    \end{prooftree}
    
    So $\vdash_{\E} p^{B} \Rightarrow \varnothing$. Otherwise, $\neg \neg B \in \Gamma'$ and so $\vdash_{\E} \Rightarrow p^{\neg \neg B}$. In this case, pick any $\C \supseteq \E$ such that $\vdash_{\D} \Theta_{\At}, p^{B} \Rightarrow \varnothing$ for some $\Theta_{\At}$. Then we do the following:

    \begin{prooftree}
    \AxiomC{}
    \RightLabel{$[/ \Rightarrow p^{\neg \neg B}]$}
    \UnaryInfC{$\Rightarrow p^{\neg \neg B}$}
        \AxiomC{$\Sigma$}
        \noLine
        \UnaryInfC{$\Theta_{\At}, p^{B} \Rightarrow \varnothing$}
        \RightLabel{$R \bot$}
       \UnaryInfC{$\Theta_{\At}, p^{B} \Rightarrow \bot$}
       \RightLabel{L $\multimap$}
       \UnaryInfC{$\Theta_{\At} \Rightarrow p^{\neg B}$}
       \AxiomC{}
       \RightLabel{L $\bot$}
       \UnaryInfC{$\bot \Rightarrow \varnothing$}
       \RightLabel{L $\multimap$}
       \BinaryInfC{$\Theta_{\At}, p^{\neg \neg B} \Rightarrow \varnothing$}
       \RightLabel{\Cuti}
       \BinaryInfC{$\Theta_{\At} \Rightarrow \varnothing$}
    \end{prooftree}

So $\vdash_{\D} \Theta_{\At} \Rightarrow \varnothing$. Since the result was shown for arbitrary  $p = p^{B}$ we conclude that $\E$ is a strongly Jankov base.
\end{proof}

\begin{proof}[of Theorem~\ref{thm:strongcomplete}]
 Assume $\Gamma \nvdash_{\JK} A$. By Lemma \ref{lemma:derivJankov} we have $\Gamma, \Theta \nvdash_{\Int} A$, where $\Theta$ is the Jankov set for $\mathbb{F}((\Gamma \cup \{A\})^{S}_{\At})$ so there is a prime theory $\Gamma'$ with $(\Gamma \cup \Theta) \subseteq \Gamma'$ such that $\Gamma' \nvdash_{\Int} A$. By Lemma \ref{mainlemmaJankovstrong}, we can create a simulation base $\U$ and a base $\E \supseteq \U$ such that $\Vdash^{\varnothing}_{\E} C$ iff $C \in \Gamma'$. Since $\Gamma \subset \Gamma'$ and $A \notin \Gamma'$ (as $A \in \Gamma$ would imply $\Gamma' \vdash_{\Int} A$), we have $\Vdash^{\varnothing}_{\E} B$ for all $B \in \Gamma$ but $\nVdash^{\varnothing}_{\E} A$, so $\Gamma \nVdash^{\varnothing}_{\E} A$. Since $\E$ is a strongly Jankov base, $\Gamma \nVdash_{\SJK} A$, hence the result follows by contraposition.
\end{proof}

\end{proof}
\end{document}